\theoremstyle{definition}
\newtheorem{theorem}{Theorem}[section]
\newtheorem{proposition}[theorem]{Proposition}
\newtheorem{openproblem}[theorem]{Open Problem}
\newtheorem{lemma}[theorem]{Lemma}
\newtheorem{definition}[theorem]{Definition}
\newtheorem{example}[theorem]{Example}
\newtheorem{notation}[theorem]{Notation}
\newtheorem{corollary}[theorem]{Corollary}
\newtheorem{conjecture}[theorem]{Conjecture}
\newcommand{\N}{\mathbb N}
\newcommand{\Z}{\mathbb Z}
\newcommand{\F}{\mathbb F}
\newcommand{\Cc}{\mathcal C}
\newcommand{\fb}{\textnormal{FB}}
\newcommand{\fp}{\textnormal{FP}}
\title{The Length of Functional Batch and PIR Codes}
\author[1]{Altan B. K\i l\i\c{c}\thanks{A. B. K. is supported by the Dutch Research Council through grant VI.Vidi.203.045.}}
\author[2]{Alberto Ravagnani\thanks{A. R. is supported by the Dutch Research Council through grants VI.Vidi.203.045, 
OCENW.KLEIN.539, 
and by the Royal Academy of Arts and Sciences of the Netherlands.}}
\author[3]{Flavio Salizzoni\thanks{F. S. is supported by the P500PT\textunderscore  222344 SNSF project.}}
\affil[1,2]{
Eindhoven University of Technology, the Netherlands}
\affil[3]{Max Planck Institute for Mathematics in the Sciences, Germany}
\date{}
\begin{document}

\maketitle

\begin{abstract}
We consider the problem of computing the minimum length of functional batch and PIR codes of fixed dimension and for a fixed list size, over an arbitrary finite field. We generalize and refine several results that were previously obtained for binary codes. We present new upper and lower bounds for the minimum length and discuss the asymptotic behavior of this parameter, both for binary and non-binary fields.
We also compute its value for several parameter sets. Our approach also offers insights into the ``correct'' list size to consider for the Functional Batch Conjecture over non-binary finite fields, and establishes various supporting results.
\end{abstract}

\bigskip

\section{Introduction}

In the context of information retrieval,  querying a database can 
leak sensitive information
about the user's intent or interests, by revealing which specific items are being requested.
To address this fundamental privacy problem, that is, retrieving information without revealing what is being retrieved, Private Information Retrieval (PIR) protocols were introduced in~\cite{chor1998private}. Subsequently, a class of codes known as \textit{PIR codes} was introduced in \cite{fazeli2015pir} to support PIR protocols.

PIR codes form a special class of \textit{Batch codes}, introduced in \cite{ishai2004batch} and proposed as a tool for private information retrieval just like PIR codes. 
What distinguishes these two code types is the nature of the retrieval request and the flexibility of the recovery process.
PIR codes are specifically designed so that a single requested symbol can be recovered (recovery is formally defined later in Definition \ref{def:recovery}) in several mutually disjoint ways. By providing multiple independent ``recovery sets'' for the same piece of data, PIR codes allow a user to distribute queries across different servers or nodes. Because the server cannot determine which specific recovery set is being utilized for which symbol, the user's interest remains private. Batch codes
generalize the PIR concept by allowing several symbols (which may be distinct from one another) to be recovered simultaneously. PIR codes correspond to batch codes in the specific case where all requested symbols in the batch are the same. 


Both these code classes can be further generalized to \textit{functional} PIR and \textit{functional} batch codes \cite{zhang2020bounds}, where the goal is to retrieve functions of the stored data, rather than the original data. This paper focuses on the block length of functional PIR and batch codes. Functional PIR codes
are used to retrieve one function, while functional Batch
codes allow simultaneous retrievals of different functions.

Both functional batch and PIR codes are the subject of an intense research activity; see~\cite{chor1998private,fazeli2015pir,ishai2004batch,zhang2020bounds,hollmann2023some, lember2024equal, yohananov2021almost,nassar2021array, yohananov2025optimal} and the references therein, among many others. To our best knowledge, the vast majority of the research on PIR and batch codes focuses on binary codes.
Both for PIR and batch codes, the length $n$ is a crucial parameter measuring the number of storage nodes (or more generally memory units) required to accomplish a given storage/service task. More precisely, for a  given dimension $k$ and performance indicator~$t$, one would like to operate with a code having the smallest possible length $n_{\textnormal{min}}$. As it often happens in coding theory, these three parameters $(k,t,n)$ obey certain trade-offs, the simplest of which is
$$t\leq n_{\textnormal{min}}\leq kt.$$ 
None of these two bounds is met in general, and the picture is further complicated when field size $q$ is taken into account.

A central open problem in the theory of functional batch codes, which is relevant for this paper, is a conjecture by Zhang, Etzion, and Yaakobi~\cite[Conjecture 24]{zhang2020bounds}. It reads as follows.
\begin{conjecture}[{Functional Batch Conjecture}]
    The binary $[k,2^k-1]$ simplex code is a $2^{k-1}$-functional batch code.
\end{conjecture}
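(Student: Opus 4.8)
The plan is to recast the conjecture as a statement about partitioning the column set $V := \mathbb{F}_2^k \setminus \{0\}$ of the simplex code $S_k$: a request $v_1, \dots, v_{2^{k-1}} \in \mathbb{F}_2^k$ (a multiset) is served precisely when there are pairwise disjoint sets $T_1, \dots, T_{2^{k-1}} \subseteq V$ with $\sum_{x \in T_i} x = v_i$ for every $i$. First I would strip away the trivial contributions. Every request equal to $0$ is served by $T_i = \emptyset$, so we may assume all requests are nonzero; and since $V$ contains $v$ itself for every nonzero $v$ (equivalently, the Hamming code is perfect, so the syndrome $v$ has a weight-one coset leader), any value occurring with multiplicity one is served by the singleton $\{v\}$. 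What remains is the genuinely hard instance: nonzero values $u^{(1)}, \dots, u^{(r)}$ occurring with multiplicities $m_1, \dots, m_r$ summing to at most $2^{k-1}$, with some $m_j \ge 2$. After using the singleton $\{u^{(j)}\}$, every further copy of $u^{(j)}$ needs a subset of $V$ of size at least $2$, and a short slack count shows one cannot afford much more than size $2$ on average; the minimal options are the pairs $\{x, x \oplus u^{(j)}\}$.

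The second step is to turn this into a rainbow-matching problem. For nonzero $u$ the pairs $\{x, x \oplus u\}$ form a perfect matching $M_u$ of $\mathbb{F}_2^k$, and $\{M_u\}_{u \ne 0}$ is the standard $1$-factorization of the complete graph on $\mathbb{F}_2^k$; removing from $M_u$ the edge $\{0, u\}$ leaves a near-perfect matching of $V$ missing only the vertex $u$. I would then try to select, simultaneously, $m_j - 1$ edges from each $M_{u^{(j)}}$, all contained in $V \setminus \{u^{(1)}, \dots, u^{(r)}\}$ and pairwise vertex-disjoint across all $j$. This is exactly the existence of a rainbow sub-matching with prescribed colour multiplicities, a setting where the natural tools are Hall's theorem and its deficiency version, or an Aharoni--Berger--style rainbow-matching argument, and where one hopes to exploit the strong symmetry of the $M_u$. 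For the single-value extreme $r = 1$ this reproduces the known tight construction $\{u\} \cup \{\, \{x, x \oplus u\} : x \in V \setminus \{u\} \,\}$, which exhausts all of $V$; the guiding intuition is that $r \ge 2$ strictly loosens the budget and should always leave room.

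The step I expect to be the real obstacle is this last one: controlling the interaction between $M_{u^{(1)}}, \dots, M_{u^{(r)}}$ for moderate $r$, i.e.\ proving that the required rainbow matching always exists. A first-moment estimate --- assign each column of $V$ to a uniformly random request --- is, by a short character-sum computation, right on the boundary: a random $2$-element block XORs to a prescribed nonzero vector with probability $\Theta(2^{-k})$, which is of the same order as $1/(2^k - 1)$, so the naive probabilistic argument does not close the conjecture on its own. I would therefore also pursue two fallbacks: (i) induction on $k$, splitting $V$ by the first coordinate into a copy of $S_{k-1}$ and a complementary set of size $2^{k-1}$, and routing the first-coordinate-one requests through the complement; and (ii) verifying the statement for small $r$ (few distinct requested values) and for small $k$ by the explicit matching construction, which already yields partial and asymptotic versions of the conjecture. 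Full generality appears to require an idea beyond these, which is why the rest of the paper develops the surrounding bounds and the non-binary ``correct list size'' picture instead.
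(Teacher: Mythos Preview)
The statement is presented in the paper as an open \emph{conjecture}; the paper gives no proof of it. Your proposal is likewise not a proof but a sketch of attack routes, and you say so explicitly in the final paragraph, so there is no proof-to-proof comparison to make.

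Your reformulation does align with the paper's surrounding discussion: the reduction to pairs $\{x, x \oplus u\}$ and the resulting rainbow-matching problem is essentially the Balister--Gy\H{o}ri--Schelp formulation that the paper states immediately after the Functional Batch Conjecture as a slight strengthening of it, and which it connects to the Correia--Pokrovskiy--Sudakov partial result giving $2^{k-1} - O(2^{15k/16})$. Your observation that a first-moment random-assignment count lands exactly on the threshold is consistent with why the problem resists a naive probabilistic argument.

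Two cosmetic remarks. First, requests are nonzero by definition in this paper, so the $v_i = 0$ reduction is vacuous. Second, the Hamming-code aside is unnecessary: the columns of the simplex code are by construction all nonzero vectors of $\mathbb{F}_2^k$, so $v$ is trivially among them and no syndrome argument is needed. Neither point affects the substance, which is simply that the conjecture remains open and your sketch correctly locates the known reformulation and the genuine obstacle.
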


In \cite[Conjecture 2.5]{hollmann2023some}, Hollmann, Khathuria, Riet, and
Skachek conjectured that not only is the simplex code a $2^{k-1}$-functional batch code, but also that it has locality 2. Their conjecture coincides with the one proposed in a different context by Balister, Gy{\H{o}}ri, and Schelp.
\begin{conjecture}[see {\cite[Conjecture 1]{balister2011coloring}}]\label{conjecture2}
Given a list of $2^{k-1}$ nonzero vectors$v_1,\dots,v_{2^{k-1}}$ in~$\F_2^k$ such that $k\geq2$ and $\smash{\sum_{i=1}^{2^{k-1}}v_i=0}$, there exists a partition of $\F_2^k$ into 2-sets of vectors~$\{w_i, z_i\}$, for $i\in\{1,\dots,2^{k-1}\}$, such that $v_i =w_i+z_i$ for all $i$. 
\end{conjecture}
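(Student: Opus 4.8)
Conjecture~\ref{conjecture2} is a well-known open problem, so what follows is a plan of attack rather than a complete argument. The plan is to first recast the statement in graph-theoretic terms: label each edge $\{w,z\}$ of the complete graph on vertex set $\F_2^k$ by the nonzero vector $w+z$; the conjecture then asserts that for every multiset $L=\{v_1,\dots,v_{2^{k-1}}\}$ of nonzero vectors with $\sum_i v_i=0$ there is a perfect matching of $\F_2^k$ whose multiset of edge labels is exactly $L$. The two hypotheses (nonzero entries, zero sum) are precisely the obvious necessary conditions, since a $2$-set has nonzero sum and $\sum_{x\in\F_2^k}x=0$ for $k\ge 2$; so the content is that these necessary conditions suffice.

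The natural inductive scaffold is induction on $k$ through a hyperplane split. Fix a hyperplane $H<\F_2^k$ and $e\notin H$, so $\F_2^k=H\sqcup(H+e)$. Split the indices into $I_{\mathrm{in}}=\{i:v_i\in H\}$ and $I_{\mathrm{cr}}=\{i:v_i\notin H\}$; since $\sum_i v_i=0$, the size $|I_{\mathrm{cr}}|=2m$ is even, and $|I_{\mathrm{in}}|=2^{k-1}-2m$ is even for $k\ge 2$. An edge realizing $v_i$ with $i\in I_{\mathrm{in}}$ lies inside $H$ or inside $H+e$, while an edge realizing $v_i$ with $i\in I_{\mathrm{cr}}$ crosses between the two halves. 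If one picks the $2m$ crossing edges first, their $H$-endpoints form a set $T\subseteq H$ of size $2m$, and the $(H+e)$-endpoints are then forced, namely $T'=\{x+v_{\sigma(x)}:x\in T\}$. The leftover problem is to perfectly match $H\setminus T$ and $(H+e)\setminus T'$ internally, after distributing $I_{\mathrm{in}}$ into sublists $L_1,L_2$ of sizes $2^{k-2}-m$ with $\sum L_1$ and $\sum L_2$ prescribed (they must equal $\sum_{x\in H\setminus T}x$ and $\sum_{x\in(H+e)\setminus T'}x$).

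The crux — and the step I expect to be the main obstacle — is that $H\setminus T$ is a hyperplane with a set removed, not an affine space, so the induction hypothesis does not apply verbatim. I see three ways to attempt to repair this: (i) choose $T$ to be structured (a subgroup or union of cosets) so that $H\setminus T$ splits into translates of a subspace, which runs smoothly only when $2m$ is itself a power of $2$; (ii) strengthen the inductive statement to a form stable under "delete one already-matched $2$-set", or more ambitiously to affine subsets of $\F_2^k$ of even size with prescribed element sum, and then verify that the stronger statement is still true and that its hypotheses propagate; (iii) prove a local-move lemma: starting from any perfect matching that already realizes the $I_{\mathrm{cr}}$-labels, one can reach the target multiset by repeatedly replacing two edges $\{a,b\},\{c,d\}$ with $\{a,c\},\{b,d\}$, a move that preserves the pair-sum since $(a+b)+(c+d)=(a+c)+(b+d)$, so it would suffice to show the resulting "swap graph" on matchings with fixed total label reaches every admissible multiset.

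In all three routes the real difficulty is the same: quantifying how much flexibility the hypercube's additive structure actually provides. I would calibrate the correct strengthened statement by first settling the cases $k\le 4$ by hand and by computer, and I would expect the hardest part to be precisely the formulation and proof of the auxiliary flexibility/connectivity lemma, rather than the bookkeeping of the hyperplane induction itself.
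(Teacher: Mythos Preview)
The paper does not prove this statement: Conjecture~\ref{conjecture2} is presented precisely as an open conjecture (attributed to Balister--Gy{\H{o}}ri--Schelp, and equivalent to a strengthening of the Functional Batch Conjecture), and the paper only surveys partial results and related literature without offering any argument toward it. So there is no ``paper's own proof'' to compare your proposal against.

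You correctly identify this as a well-known open problem and are candid that what you wrote is a plan of attack, not a proof. As such it is a reasonable heuristic outline: the graph-theoretic reformulation is standard, the hyperplane-split induction is a natural first move, and you accurately isolate the genuine obstruction, namely that after removing the crossing-edge endpoints one is left matching a subset $H\setminus T$ that is not a coset of a subgroup, so the inductive hypothesis does not apply. None of your three suggested repairs (i)--(iii) is known to close that gap in general; in particular, the ``swap graph'' connectivity idea in (iii) is essentially a restatement of the difficulty rather than a reduction of it, and the strengthened inductive statement in (ii) would, if provable, already be a substantial result in its own right. So your proposal is an honest research plan but not a proof, and since the paper likewise offers no proof, there is nothing further to compare.
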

Partial results on this conjecture can be found in \cite{hollmann2023some,kovacs2023finding,yohananov2021almost}.
Even more generally, this conjecture can be stated as a specific case of a matching problem for finite abelian groups. For cyclic groups, the problem becomes the \textit{seating couple problem}; see for instance~\cite{karasev2012partitions,kohen2016generalization,kohen2010new,preissmann2009seating}. Conjecture \ref{conjecture2} is also related to \textit{rainbow matching}, and in this context Correia, Pokrovskiy, and Sudakov proved a result that implies the simplex code is a~$2^{k-1}-O(2^{k\frac{15}{16}})$-functional batch code for $k$ large, see~\cite[Theorem 1.2]{correia2023short}. In \cite{gao2021full}, Gao, Ramadurai, Wanless, and Wormald conjectured that a stronger version of this theorem may be true, and a positive answer to their conjecture would also imply a positive answer to the Functional Batch Conjecture.


In this paper, we consider functional PIR codes and functional batch codes over arbitrary finite fields, in sharp contrast with the majority of references on the subject. In Section 2, we establish the notation and  introduce two functions $\fp(k,t,q)$ and $\fb(k,t,q)$, which are the main objects investigated in this paper. 
These measure the minimum possible length of a functional PIR and a functional batch code, respectively, of given dimension $k$, performance metric $t$, and over the finite field $\F_q$.
In Section 3, we compute the exact value of these functions for several values of the parameters $k$, $t$, and $q$. Section 4 is devoted to upper and lower bounds, which generalize various of the results previously obtained for the binary field. Finally, in Section 5 we study the asymptotic behavior of $\fp(k,t,q)$ and $\fb(k,t,q)$ when $k$ is constant and $t$ goes to infinity and when $k$ and $t$ go to infinity simultaneously. 


\section{Problem statement and preliminaries}

We introduce the problem statement and establish the notation for the rest of the paper. 

\begin{notation}
In the sequel, $q$ denotes a prime power and $k,t,n \in \Z_{\ge 1}$. Let $M\in\F_q^{k\times n}$, $L$ be a list of nonzero vectors in $\F_q^k$, and $I \subseteq \{1,\ldots,n\}.$ Denote by $M^I$ the set of columns of~$M$ indexed by $I$. If $|I|=1$, say $I =\{i\}$, we simply write $M^i$ instead of $M^{\{i\}}$. Assume that the list $L$ consists of $\ell$ different vectors $v_1,\ldots,v_{\ell}$ such that they appear $m_1,\ldots,m_{\ell}$ times in $L$, respectively. In this case, we write 
$$L=\{v_1^{m_1},\ldots,v_{\ell}^{m_{\ell}}\}.$$
\end{notation}

Next, we formally define what it means for a matrix to ``serve'' a list of vectors.

\begin{definition}
\label{def:recovery}
Let $M\in\F_q^{k\times n}$ and $v \in \F_q^k$. A set $R \subseteq \{1,\ldots,n\}$ is called a \textbf{recovery set} for $v$ if  
$$v \in \langle M^{R} \rangle_{\F_q}.$$
A matrix $M\in\F_q^{k\times n}$ can \textbf{serve} the list $L = \{v_1,\dots, v_t\}$ of nonzero vectors in $\F_q^k$ if there exist pairwise disjoint
subsets $X_1,\dots,X_t\subseteq \{1,\ldots,n\}$ such that
$X_i$ is a recovery set for $v_i$
for all $i \in \{1,\ldots,t\}$. We equivalently say that $L$ is \textbf{served by} $M $. 
\end{definition}

Private information retrieval (PIR) codes have been studied extensively due to their property of having mutually disjoint recovery sets for any of the information bits. This concept has been extended to three other families of codes.

\begin{definition}
An $\F_q$-linear code $\Cc$ is: 
\begin{itemize}
    \item a \textbf{$t$-PIR} code if there exists a generator matrix $M\in\F_q^{k\times n}$ of $\Cc$ that can serve the list $L=\{e_i^t\}$ for all $i \in \{1,\ldots,k\}$;
    \item a \textbf{$t$-batch} code if there exists a generator matrix $M\in\F_q^{k\times n}$ of $\Cc$ that can serve any list of the form
    $L = \{e_1^{t_1},\ldots,e_k^{t_k}\}$ with $t_1+\dots+t_k=t$;
    \item a \textbf{$t$-functional PIR} code if there exists a generator matrix $M\in\F_q^{k\times n}$ of $\Cc$ that can serve the list $L=\{v^t\}$ for any nonzero vector $v \in \F_q^k$;
    \item a \textbf{$t$-functional batch} code if there exists a generator matrix $M\in\F_q^{k\times n}$ of $\Cc$ that can serve any list $L$ of $t$ nonzero vectors in $\F_q^k$.
\end{itemize}
\end{definition}
In this paper, we will only work on $t$-functional PIR and $t$-functional batch codes. Note that a $t$-functional batch is a $t$-batch code, and a $t$-functional PIR code is a $t$-PIR code. Moreover, a $t$-functional batch code is a $t$-functional PIR code. 

The main problem in studying these codes is to minimize their length given the values~$k$ and~$t$. That is, one wants to find the smallest integer $n$ for which there exist a $t$-functional PIR code and a $t$-functional batch code of dimension $k$ and length $n$. To this end, in this paper we study the following parameters.

\begin{definition}
    Let
\begin{align*}
    \fp(k,t,q)&=\min\{n \in \N \mid \mbox{there exists a }k\mbox{-dimensional }t\mbox{-functional PIR code}\},\\
    \fb(k,t,q)&=\min\{n \in \N \mid \mbox{there exists a }k\mbox{-dimensional }t\mbox{-functional batch code}\}.
\end{align*}
\end{definition}
\noindent It follows from the definitions that
\begin{equation*}
    \fb(k,t,q)\geq\fp(k,t,q).
\end{equation*}

The next observation shows that the choice of the generator matrix is irrelevant when computing the minima defined above.

\begin{proposition}
\label{prop:invertibleG}
Let $G \in \F_q^{k\times k}$ be any invertible matrix. We have that $M$ achieves $\fb(k,t,q)$, or $\fp(k,t,q)$, if and only if $GM$ does.
\end{proposition}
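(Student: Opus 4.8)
The plan is to show that serving a list is preserved under invertible linear changes of coordinates on $\F_q^k$, and then observe that this is exactly what left-multiplication by $G$ does. So first I would fix an invertible $G \in \F_q^{k\times k}$, a matrix $M \in \F_q^{k\times n}$, and a list $L = \{v_1,\dots,v_t\}$ of nonzero vectors. The key elementary fact is that for any $v \in \F_q^k$ and any $R \subseteq \{1,\dots,n\}$ we have $v \in \langle M^R\rangle_{\F_q}$ if and only if $Gv \in \langle (GM)^R\rangle_{\F_q}$, since $(GM)^R = G(M^R)$ (columns are transformed individually) and $G$ is a linear bijection, hence maps the span of a set of columns bijectively onto the span of their images. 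In particular $R$ is a recovery set for $v$ with respect to $M$ if and only if it is a recovery set for $Gv$ with respect to $GM$.

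Next I would lift this to lists. Since $G$ is a bijection fixing $0$, it sends nonzero vectors to nonzero vectors, so $GL := \{Gv_1,\dots,Gv_t\}$ is again a list of $t$ nonzero vectors, and conversely every list of $t$ nonzero vectors arises this way (as $G^{-1}$ applied to it). Now $M$ serves $L$ via pairwise disjoint sets $X_1,\dots,X_t$ with $X_i$ a recovery set for $v_i$ if and only if, by the fact above, the same sets $X_1,\dots,X_t$ (still pairwise disjoint) witness that $GM$ serves $GL$. Hence $M$ serves $L$ iff $GM$ serves $GL$.

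From here the proposition follows by running through the definitions of the four code families, but for this paper only the functional PIR and functional batch cases are needed. For $\fb$: $M$ achieves $\fb(k,t,q)$ means $M$ serves every list of $t$ nonzero vectors; applying the previous paragraph and the fact that $L \mapsto GL$ is a bijection on the set of such lists, this holds iff $GM$ serves every list of $t$ nonzero vectors, i.e. iff $GM$ achieves $\fb(k,t,q)$. For $\fp$: $M$ achieves $\fp(k,t,q)$ means $M$ serves $\{v^t\}$ for every nonzero $v$; since $G\{v^t\} = \{(Gv)^t\}$ and $v \mapsto Gv$ permutes the nonzero vectors of $\F_q^k$, this holds iff $GM$ serves $\{w^t\}$ for every nonzero $w$, i.e. iff $GM$ achieves $\fp(k,t,q)$. (One should also note that $M$ generates a $k$-dimensional code iff $GM$ does, since $G$ invertible preserves rank; this is implicit in "achieves $\fb$/$\fp$".) There is no real obstacle here; the only point requiring a moment's care is the bookkeeping that the \emph{same} index sets $X_i$ work before and after applying $G$ — the transformation acts on the vectors being recovered, not on the column indices — so disjointness is trivially inherited.

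\begin{proof}
Let $G \in \F_q^{k \times k}$ be invertible and let $M \in \F_q^{k\times n}$. For any $R \subseteq \{1,\dots,n\}$ we have $(GM)^R = G(M^R)$, since multiplying $M$ on the left by $G$ transforms each column of $M$ independently. As $G$ is an $\F_q$-linear bijection of $\F_q^k$, it maps $\langle M^R \rangle_{\F_q}$ bijectively onto $\langle G(M^R)\rangle_{\F_q} = \langle (GM)^R \rangle_{\F_q}$. Therefore, for every $v \in \F_q^k$,
$$v \in \langle M^R \rangle_{\F_q} \iff Gv \in \langle (GM)^R \rangle_{\F_q},$$
i.e. $R$ is a recovery set for $v$ with respect to $M$ if and only if $R$ is a recovery set for $Gv$ with respect to $GM$.

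Now let $L = \{v_1,\dots,v_t\}$ be a list of nonzero vectors in $\F_q^k$ and set $GL := \{Gv_1,\dots,Gv_t\}$, which is again a list of $t$ nonzero vectors since $G$ is a bijection with $G0 = 0$. If $M$ serves $L$, there are pairwise disjoint $X_1,\dots,X_t \subseteq \{1,\dots,n\}$ with $X_i$ a recovery set for $v_i$ with respect to $M$; by the equivalence above, the same pairwise disjoint sets $X_1,\dots,X_t$ are recovery sets for $Gv_1,\dots,Gv_t$ with respect to $GM$, so $GM$ serves $GL$. Applying the same argument to $G^{-1}$ gives the converse, so $M$ serves $L$ if and only if $GM$ serves $GL$.

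Finally, since $G$ is invertible, $M$ and $GM$ have the same rank, so $M$ generates a $k$-dimensional code if and only if $GM$ does. The map $L \mapsto GL$ is a bijection from the set of lists of $t$ nonzero vectors in $\F_q^k$ to itself, with inverse $L \mapsto G^{-1}L$. Hence $M$ serves every list of $t$ nonzero vectors if and only if $GM$ does, i.e. $M$ achieves $\fb(k,t,q)$ if and only if $GM$ does. Likewise, since $G\{v^t\} = \{(Gv)^t\}$ and $v \mapsto Gv$ is a permutation of the nonzero vectors of $\F_q^k$, the matrix $M$ serves $\{v^t\}$ for every nonzero $v$ if and only if $GM$ does, i.e. $M$ achieves $\fp(k,t,q)$ if and only if $GM$ does.
\end{proof}
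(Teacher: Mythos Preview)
Your proof is correct. The paper itself does not include a proof of this proposition at all---it is stated as an observation without justification---so your argument is more detailed than anything the authors wrote; the implicit reasoning they had in mind (visible in a commented-out paragraph in the source) is exactly the span-equivariance idea you spell out.
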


The next proposition establishes some general properties of the 
functions $\fp(k,t,q)$ and $\fb(k,t,q)$.

\begin{proposition}
\label{prop:increase+additive}
The functions $\fp(k,t,q)$ and $\fb(k,t,q)$ are:
\begin{enumerate}
    \item  strictly increasing in $k$ and $t$;
    \item  subadditive in $k$ and $t$.
\end{enumerate}
\end{proposition}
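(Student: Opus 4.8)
The plan is to treat each of the four assertions (monotonicity in $k$, monotonicity in $t$, subadditivity in $k$, subadditivity in $t$) by exhibiting an explicit construction that turns a code achieving the right-hand side into a code achieving the left-hand side, for both $\fp$ and $\fb$; the arguments for the two functions are essentially parallel, so I would write the $\fb$ case in full and remark that the $\fp$ case is identical (it only uses lists of the form $\{v^t\}$).

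For \emph{monotonicity in $t$}, I would start from a $t$-functional batch code of length $n=\fb(k,t,q)$ with generator matrix $M$, and show that appending a single column equal to some fixed nonzero vector, say $e_1$, gives a $(t{-}1)$-functional batch code; more simply, any list of $t-1$ nonzero vectors can be padded with one extra copy of an arbitrary nonzero vector to a list of $t$ vectors, which $M$ already serves, so $M$ itself is already a $(t{-}1)$-functional batch code and $\fb(k,t-1,q)\le \fb(k,t,q)$. Strictness then needs the observation that length must increase by at least one: if $M$ served every list of $t$ vectors with only $n=\fb(k,t-1,q)$ columns, it would in particular serve $\{v^t\}$ for a vector $v$ whose preimage row has maximal weight, forcing at least $t$ nonzero entries in that row, and a short counting/distance argument (in the spirit of the commented-out proposition in the excerpt) shows one cannot reuse columns across the $t$ disjoint recovery sets, giving a contradiction with minimality — this is the point that needs the most care. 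For \emph{monotonicity in $k$}, I would take a $t$-functional batch code of dimension $k$ achieving $\fb(k,t,q)$ and puncture/project onto the first $k-1$ coordinates (equivalently, use Proposition \ref{prop:invertibleG} to assume the generator matrix has a suitable block form and delete the last row together with the columns that become zero): a list of $t$ nonzero vectors in $\F_q^{k-1}$ embeds into $\F_q^k$, gets served there, and the recovery sets descend after deleting the last coordinate. Strictness in $k$ follows because a generator matrix must have rank $k$, hence at least $k$ columns, and more precisely one can iterate the distance-type bound.

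For \emph{subadditivity}, the key construction is a direct sum. Given a $t$-functional batch code $\Cc_1$ of dimension $k_1$ and length $n_1=\fb(k_1,t,q)$ with generator matrix $M_1$, and likewise $\Cc_2$ of dimension $k_2$ with matrix $M_2$, form the block-diagonal matrix $M=\begin{pmatrix} M_1 & 0\\ 0 & M_2\end{pmatrix}$ of size $(k_1+k_2)\times(n_1+n_2)$. Any list $L$ of $t$ nonzero vectors in $\F_q^{k_1+k_2}$ splits each vector $v=(v',v'')$; the first $k_1$-coordinate parts $v'$ (discarding the ones that are zero) form a list of at most $t$ vectors served by $M_1$ using columns among the first $n_1$, and similarly the $v''$ parts are served by $M_2$ using disjoint columns among the last $n_2$, and concatenating a recovery set from each block recovers $v$; since the two column blocks are disjoint, the global recovery sets are pairwise disjoint. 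Hence $\fb(k_1+k_2,t,q)\le \fb(k_1,t,q)+\fb(k_2,t,q)$. For subadditivity in $t$, the construction is instead a ``column concatenation'': take $M_1$ achieving $\fb(k,t_1,q)$ and $M_2$ achieving $\fb(k,t_2,q)$, both $k\times\,\cdot$, and set $M=(M_1\mid M_2)$; a list of $t_1+t_2$ vectors is split into a sublist of size $t_1$ served by $M_1$ and a sublist of size $t_2$ served by $M_2$, on disjoint column sets, so $M$ serves it and $\fb(k,t_1+t_2,q)\le \fb(k,t_1,q)+\fb(k,t_2,q)$.

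I expect the only genuinely delicate point to be the \emph{strict} part of monotonicity — ruling out that the minimal length stays the same when $k$ or $t$ increases — since the inequalities "$\le$" in both monotonicity and subadditivity are immediate from the constructions above. For the strictness I would argue: a $t$-functional PIR code must serve $\{v^t\}$ with $t$ pairwise disjoint recovery sets, so for any $v$ the support of the corresponding dual/row vector is large, and by choosing $v$ along a maximum-weight direction one forces $n\ge \fp(k,t-1,q)+1$ (and similarly $n \ge \fp(k-1,t,q)+1$, using that deleting a coordinate strictly drops the dimension and hence the length). I would phrase this cleanly using Proposition \ref{prop:invertibleG} to normalize the generator matrix and a counting argument on column supports; everything else is bookkeeping.
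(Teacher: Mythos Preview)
Your subadditivity arguments (block-diagonal in $k$, column concatenation in $t$) are correct and are exactly the standard constructions the paper cites from~\cite{nassar2021array}. Your approach to strict monotonicity in $k$ --- use Proposition~\ref{prop:invertibleG} to send some nonzero column to $e_k$, then delete the last row together with all columns lying in $\langle e_k\rangle$ --- is also correct and is precisely the mechanism of Proposition~\ref{proposition:incrbound}, to which the paper defers this part of the proof.

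The one genuine gap is strict monotonicity in $t$. Your sketch invokes a minimum-distance argument: serving $\{v^t\}$ with $t$ pairwise disjoint recovery sets forces at least $t$ columns with a nonzero ``$v$-coordinate''. That is true, but it only yields $\fp(k,t,q)\ge t$; the sentence ``choosing $v$ along a maximum-weight direction forces $n\ge\fp(k,t-1,q)+1$'' is not justified, and I do not see how a distance bound alone separates $\fp(k,t,q)$ from $\fp(k,t-1,q)$. The argument the paper imports from~\cite{zhang2020bounds} (which works verbatim over any $\F_q$) is different and cleaner: from an optimal matrix $M$ of length $n=\fp(k,t,q)$ (resp.\ $\fb(k,t,q)$), delete any nonzero column $M^j$. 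For $\fp$, the $t$ disjoint recovery sets for a given $v$ use column $j$ in at most one of them; discard that one and the remaining $t-1$ sets live in $M'$. For $\fb$, pad a list of $t-1$ vectors by $M^j$, obtain disjoint $X_1,\dots,X_t$ in $M$, and if $j\in X_i$ for some $i\le t-1$ replace $X_i$ by $(X_i\setminus\{j\})\cup X_t$. Either way $M'$ has length $n-1$ and the $(t-1)$-property, giving the strict inequality. You should swap your distance sketch for this column-deletion argument.
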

\begin{proof}
The subadditivity of the functions is proven in \cite{nassar2021array}, and the strict monotonicity of the functions in $t$ is proven in \cite{zhang2020bounds} for $q=2$. The same proofs also work for arbitrary field size. The fact that $\fp(k,t,q)$ and $\fb(k,t,q)$ are strictly increasing in $k$ will be proven in Proposition~\ref{proposition:incrbound}.
\end{proof}

We conclude this section by stressing that, to the best of our knowledge, the problem considered in this paper has only been studied for $q=2$. We refer to~\cite{yohananov2022almost,zhang2020bounds}.

\section{Exact values}
In this section, we compute the exact value of $\fb(k,t,q)$ for some choices of the parameters. When $k=1$, we clearly have $\fb(1,t,q)=t$ and $\fb(k,1,q)=k$. Therefore, we will focus only on $k \ge 2$.

\subsection{Length of 2-dimensional FB codes}

We focus on the case
$k=2$ for any prime power $q$. The final statement is the following.
\begin{theorem}
\label{thm:k=2}
$\fb(2,t,q)= \lceil 2(q+1)t/(q+2)\rceil = t +\lceil qt/(q+2)\rceil$.
\end{theorem}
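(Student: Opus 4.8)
The plan is to prove the two bounds $\fb(2,t,q) \le t + \lceil qt/(q+2)\rceil$ and $\fb(2,t,q) \ge t + \lceil qt/(q+2)\rceil$ separately, after first rephrasing the problem combinatorially. By Proposition~\ref{prop:invertibleG} we may work with any generator matrix, and since $k=2$ the relevant data is just a multiset of columns in $\F_q^2 \setminus \{0\}$, up to scalars; there are $q+1$ projective points. For a given multiset of columns, a list $L=\{v_1,\dots,v_t\}$ of nonzero vectors in $\F_q^2$ is served precisely when we can assign to each $v_i$ a recovery set $X_i$ (pairwise disjoint) with $v_i \in \langle M^{X_i}\rangle$; note that if $v_i$ is proportional to some column then a single column suffices as a recovery set, and otherwise two columns spanning distinct projective points suffice. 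So the question becomes: how many columns, distributed among the $q+1$ projective directions, are needed so that every length-$t$ request of directions can be satisfied, where a request for direction $p$ costs one column in direction $p$, or one column in each of two other directions.

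For the upper bound (construction), I would take a generator matrix whose columns are spread as evenly as possible over the $q+1$ directions: roughly $n/(q+1)$ columns per direction, with $n = t + \lceil qt/(q+2)\rceil$. The key combinatorial claim is that such a balanced configuration serves every list of $t$ nonzero vectors. To verify this one argues greedily: given a request, first match as many requests as possible directly to a column of the same direction; the leftover requests (those in over-demanded directions) must be served by pairs of columns from other directions, and one checks that a balanced supply of size $n$ always has enough ``spare capacity'' to absorb these pairs. The extremal case driving the count is when all $t$ requested vectors lie in a single direction, or are distributed to force the maximum number of ``pair'' recoveries; I expect the exact bookkeeping here (showing $n = t + \lceil qt/(q+2) \rceil$ suffices and no smaller $n$ does for the construction) to reduce to an averaging/pigeonhole estimate over the $q+2$ ``resources'' available to serve a single direction (the one matching column plus the $\binom{q+1}{2}$ pairs, but effectively $q+2$ comes from $1 + (q+1)/\text{something}$ — the precise constant is what the $\lceil \cdot \rceil$ encodes).

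For the lower bound, I would exhibit a worst-case list $L$ that cannot be served by any matrix of length $n < t + \lceil qt/(q+2)\rceil$. A natural candidate is a list concentrated on few directions (e.g. all $t$ copies of a single vector $v$, or a carefully weighted list) forcing a counting obstruction: summing, over all columns, the number of requests that column can help serve, against the total demand $t$, yields an inequality $n \cdot (\text{max usefulness per column}) \ge (\text{something}) \cdot t$. Since one column serves one request by itself but only ``half a request'' when used in a pair, and the balanced adversary forces a $(q+2):(q)$ type ratio between single-use and pair-use, this produces the bound $n \ge 2(q+1)t/(q+2)$. I would combine this with the integrality of $n$ to get the ceiling.

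The main obstacle I anticipate is the lower bound: one must choose the adversarial list cleverly and prove that \emph{no} generator matrix — not just the balanced one — can do better, which requires a robust counting argument valid for arbitrary column multisets. A convexity argument (the worst case for the code occurs at a balanced column distribution, by symmetry over $\mathrm{GL}_2(\F_q)$ acting transitively on projective points, or by a smoothing argument) should reduce the arbitrary case to the balanced one, after which the two bounds match. Handling the floors/ceilings so that $\lceil 2(q+1)t/(q+2)\rceil = t + \lceil qt/(q+2)\rceil$ exactly (rather than off by one) is a routine but careful elementary verification that I would do at the end.
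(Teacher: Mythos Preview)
Your lower bound is essentially the paper's argument, though you make it sound harder than it is. There is no need for convexity or smoothing: for an arbitrary matrix $M$ with $n_j$ columns in projective direction $j$ ($j=0,\dots,q$), the list $\{v_j^t\}$ forces $n \ge n_j + 2(t-n_j) = 2t - n_j$. Summing these $q+1$ inequalities and using $\sum_j n_j = n$ gives $(q+1)n \ge 2(q+1)t - n$, i.e.\ $n \ge 2(q+1)t/(q+2)$, and integrality yields the ceiling. This is exactly Lemma~\ref{lem:lower_bound_k=2}. No single adversarial list works uniformly against all $M$; it is the \emph{averaging over all $q+1$ constant lists} that does the job, and this already handles arbitrary column multisets without any symmetrization.

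Your upper bound diverges from the paper and is where the proposal is thin. The paper does not analyze a balanced configuration directly with a greedy/matching argument; instead it (i) gives an explicit incremental construction for $1 \le t \le q+2$ (Lemma~\ref{upper_bound_k=2}), verifying in particular that $\fb(2,q+2,q)=2(q+1)$, and then (ii) invokes subadditivity in $t$ (Proposition~\ref{prop:increase+additive}) to write $t = x(q+2)+y$ and bound $\fb(2,t,q) \le 2(q+1)x + \fb(2,y,q)$. The ceiling bookkeeping is then confined to the finite range $0 \le y \le q+1$. Your balanced-plus-greedy route is plausible and would likely succeed, but the sentence ``effectively $q+2$ comes from $1 + (q+1)/\text{something}$'' is not an argument, and the greedy step (showing leftover pairs can always be packed) is precisely the nontrivial part you have not done. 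The paper's subadditivity reduction sidesteps that analysis entirely and is the cleaner mechanism here.
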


To prove Theorem \ref{thm:k=2}, we need some preliminary results. The next lemma follows from a straightforward parity analysis.
\begin{lemma}
\label{lemma:easyfacts}
For any positive integer $z$, the following hold.
\begin{enumerate}
    \item $\lfloor (z+1)/2 \rfloor +1 = \lceil (z+2)/2 \rceil$,
    \item $\lceil (z+2)/2 \rceil + \lfloor z/2 \rfloor = z+1$,
    \item $\lfloor (z+1)/2 \rfloor = \left \lceil (z \lfloor (z+1)/2 \rfloor)/(z+2) \right \rceil = \left \lceil (z \lfloor (z+1)/2 \rfloor+z)/(z+2) \right \rceil$,
    \item $z = \left \lceil z(z+1)/(z+2)\right \rceil$.
\end{enumerate}

\end{lemma}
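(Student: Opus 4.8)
The plan is to treat the four identities separately, reducing each to a one-line manipulation. Throughout I would use only the two elementary facts $\lceil a/2\rceil = \lfloor (a+1)/2\rfloor$ and $\lfloor a/2\rfloor + \lceil a/2\rceil = a$ for $a\in\Z$ (the first being ``pulling a half-integer correction out of a floor'', the second the standard split), together with the characterization that for a real $x$ and an integer $m$ one has $\lceil x\rceil = m$ if and only if $m-1 < x \le m$. This avoids an explicit case split on the parity of $z$, although such a split would work equally well.

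For Part 1, I would rewrite the right-hand side as $\lceil (z+2)/2\rceil = \lfloor (z+3)/2\rfloor = \lfloor (z+1)/2\rfloor + 1$, the last step extracting the integer $1$ from the floor; this is precisely the left-hand side. For Part 2, I would combine Part 1 with the identity $\lfloor (z+1)/2\rfloor + \lfloor z/2\rfloor = z$ (which is $\lfloor z/2\rfloor + \lceil z/2\rceil = z$ rewritten via $\lceil z/2\rceil = \lfloor (z+1)/2\rfloor$), obtaining $\lceil (z+2)/2\rceil + \lfloor z/2\rfloor = \big(\lfloor (z+1)/2\rfloor + 1\big) + \lfloor z/2\rfloor = z+1$.

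For Parts 3 and 4 I would write $w = \lfloor (z+1)/2\rfloor$ and verify the ceilings using the characterization above. For Part 4, $z(z+1)/(z+2) = z - z/(z+2)$ with $0 < z/(z+2) < 1$, so $z-1 < z(z+1)/(z+2) < z$ and the ceiling is $z$. For the first equality in Part 3, I would check $w-1 < zw/(z+2) \le w$: the upper inequality is immediate since $z/(z+2) < 1$ and $w\ge 0$, and after clearing denominators the lower inequality becomes $2w < z+2$, which holds because $w \le (z+1)/2 < (z+2)/2$; since $z\ge 1$ gives $w\ge 1$, the ceiling is indeed a positive integer equal to $w$. For the second equality in Part 3, the argument of the ceiling is $z(w+1)/(z+2)$, and $w-1 < z(w+1)/(z+2) \le w$ reduces, after clearing denominators, to $z \le 2w$ and $w \le z$ respectively; the former is $\lfloor (z+1)/2\rfloor \ge z/2$ (with equality when $z$ is even) and the latter holds for all $z\ge 1$.

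The whole argument is routine; the only point requiring a little care is the boundary bookkeeping in Part 3, namely simultaneously pinning down $2w < z+2$ and $z \le 2w$ (and noting $w\ge 1$), which is exactly the narrow window forcing both ceilings to collapse to $w$. No genuine obstacle is expected.
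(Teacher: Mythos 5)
Your proof is correct, and since the paper gives no explicit argument (it simply calls the lemma a ``straightforward parity analysis''), your write-up is a legitimate realization of what the authors had in mind. The only stylistic difference is that you sidestep an explicit even/odd split in Parts~3 and~4 by invoking the characterization $\lceil x\rceil = m \iff m-1 < x \le m$ and clearing denominators, which is a touch cleaner than a raw parity case analysis but amounts to the same bookkeeping.
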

We continue with a lower bound on $\fb(2,t,q)$.
\begin{lemma}
\label{lem:lower_bound_k=2}
$\fb(2,t,q) \ge \left\lceil \frac{2(q+1)t}{q+2} \right\rceil = t + \left\lceil \frac{qt}{q+2} \right\rceil.$
\end{lemma}
\begin{proof}
Let $M$ be a matrix that attains $\fb(2,t,q)$. The multiset of columns of $M$ can be seen as a multiset of points in $PG(1,q).$ Let $v_0 = (0,1)^\top$ and $v_i = (1,i-1)^\top$ for $i \in \{1,\ldots,q\}$. Moreover, for $j \in \{0,\ldots,q\}$ let $n_j$ be the multiplicity of $v_j$ in the multiset of columns of $M$. 

If $t < n_j$ for some $j \in \{0,\ldots,q\}$, then $M$ cannot attain $\fb(2,t,q)$ since a matrix obtained by removing the extra $(n_j-t)$ columns that are equal to $v_j$ would suffice to serve any list of~$t$ nonzero vectors.
Therefore, we have $t \ge n_j$ for all $j \in \{0,\ldots,q\}$.

Let $j \in \{0,\ldots,q\}.$ To serve $t$ times the vector $v_j$, we must have 
\begin{equation}
\label{eq:referee2}
\fb(2,t,q) \ge n_j + 2(t-n_j).
\end{equation}

Equation \eqref{eq:referee2} follows from the fact that to serve $t$ copies of the vector $v_j$, we can use at most $n_j$ systematic positions, those are recovery set of size 1. After these are used, we must use larger recovery sets. In the most favorable case, these sets have size 2. Applying \eqref{eq:referee2} for all $j \in \{0,\ldots,q\}$ and summing the constraints yields
$$(q+1)\fb(2,t,q) \ge \fb(2,t,q) + 2(q+1)t - 2\fb(2,t,q),$$ which gives the desired result.
\end{proof}

The following lemma is proven by using the fact that any point of $PG(1,q)$ can be written as a linear combination of 2 different points of $PG(1,q).$

\begin{lemma}
\label{upper_bound_k=2}
For $1 \le t \le q+2$ we have 
$$\fb(2,t,q) \le \begin{cases}
        2t & \textnormal{ if } 1 \le t \le \lfloor (q+1)/2 \rfloor,\\
        2t-1 & \textnormal{ if } t =\lfloor (q+1)/2 \rfloor +1, \,\\
        2\lceil (q+2)/2 \rceil-1+2j & \textnormal{ if } t=\lceil (q+2)/2 \rceil + j \mbox{ for }j \in\{1,\ldots,\lfloor q/2 \rfloor\},\\
        2(t-1) & \textnormal{ if } t=q+2.
    \end{cases}$$
\end{lemma}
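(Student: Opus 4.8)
The plan is to build, for each range of $t$, an explicit multiset of columns (i.e.\ points of $PG(1,q)$, lifted to vectors in $\F_q^2$) that serves every list of $t$ nonzero vectors, and to count its size. The guiding principle, as the statement hints, is that any point of $PG(1,q)$ is a linear combination of any two distinct points of $PG(1,q)$; equivalently, any nonzero $v\in\F_q^2$ lies in $\langle u,w\rangle$ for any two columns $u,w$ that represent distinct projective points. So a column multiset $G$ serves a list $L=\{v_1,\dots,v_t\}$ as soon as we can extract $t$ pairwise disjoint recovery sets, and a recovery set for $v_i$ can be taken to be a single column (if some column is proportional to $v_i$) or any two columns representing two distinct points.

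First I would handle $1\le t\le \lfloor(q+1)/2\rfloor$. Here I take $G$ to consist of $t$ distinct points, each with multiplicity $2$, so $n=2t$; since $t\le\lfloor(q+1)/2\rfloor\le q+1$ there are enough distinct points available. Given any list $v_1,\dots,v_t$, process them one at a time: if $v_i$ is proportional to one of the chosen points and that point still has an unused copy, use one column; otherwise pick two columns representing two distinct still-available points — this is possible because before processing $v_i$ at most $2(i-1)\le 2(t-1)$ columns are used, leaving at least $2t-2(i-1)\ge 2$ columns, and among $\ge 2$ remaining columns from a pool of $t$ distinct points with multiplicity $2$ one can always find two representing distinct points unless only one point's two copies remain, in which case $v_i$ is served by that point only if proportional — so a cleaner bookkeeping is needed (see below). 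For $t=\lfloor(q+1)/2\rfloor+1$ I would take the previous construction for $t-1$ and add a single new point, giving $n=2(t-1)+1=2t-1$; the last request is absorbed either by the singleton or by pairing it with any leftover column. For $t=\lceil(q+2)/2\rceil+j$ with $1\le j\le\lfloor q/2\rfloor$, note $\lceil(q+2)/2\rceil+\lfloor q/2\rfloor=q+1$ by Lemma~\ref{lemma:easyfacts}(2), so $t\le q+1$ and all $q+1$ points of $PG(1,q)$ are in play: take each of $\lceil(q+2)/2\rceil$ distinct points with multiplicity $2$ minus one, i.e.\ essentially a "doubled" set of $\lceil(q+2)/2\rceil-1$ points plus extras, adjusted so the count is $2\lceil(q+2)/2\rceil-1+2j$; the parameter $j$ controls how many of the points get a second copy. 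Finally, for $t=q+2$ I would take all $q+1$ points, each doubled except drop one column, for a total of $2(q+1)-\text{(a correction)}=2(t-1)$; here the fact that $t=q+2>q+1$ forces at least one request to be a "repeat" and the doubled points still allow pairing.

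The main obstacle is the bookkeeping that guarantees the $t$ recovery sets can always be chosen \emph{pairwise disjoint}, regardless of the order and content of the adversary's list. The clean way to see this is a counting/parity argument rather than a greedy one: if $G$ has $n$ columns spanning $s$ distinct projective points, and the list contains $m_v$ copies of each point $v$ (nonzero entries only for points actually requested), then $G$ serves $L$ iff $\sum_i |X_i|\le n$ can be realized with each $X_i$ of size $1$ or $2$, using at most (multiplicity of that point in $G$) singletons on any given point and pairs drawn from distinct points — this reduces to a simple flow/Hall-type condition. I would verify the Hall condition holds for each construction above; the worst case is always $L=\{v^t\}$ a single repeated vector, which needs $1\cdot(\text{mult of }v\text{ in }G)$ via singletons plus $2\cdot(t-\text{mult})$ via pairs, and one checks this is $\le n$ in every case using Lemma~\ref{lemma:easyfacts}. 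Since $\fb(2,t,q)$ is attained at the worst list and the worst list for these constructions is the repeated one, matching the count from the lower bound Lemma~\ref{lem:lower_bound_k=2} on the overlapping ranges will confirm the constructions are tight where claimed.
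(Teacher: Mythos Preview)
Your construction for the second case is wrong, and your own ``worst case'' test catches it. At $t=\lfloor(q+1)/2\rfloor+1$ you propose $t-1$ doubled points plus one new singleton point, so $G$ has only $t=\lfloor(q+1)/2\rfloor+1$ distinct projective points among the $q+1$ available. Hence (for any $q\ge 2$) there exists a point $v$ not proportional to any column of $G$. For the list $L=\{v^t\}$ every recovery set must contain two columns representing distinct points, so you need at least $2t$ columns, but $n=2t-1$. Thus $G$ cannot serve $L$. The same problem hits $t=1$ in your first range: $G=\{p_1,p_1\}$ cannot serve any $v\not\propto p_1$. Your third-range construction is too vague to evaluate, and your $t=q+2$ description contradicts itself (``each doubled except drop one column'' versus $n=2(q+1)$).

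The paper avoids this by building $G$ so that it \emph{covers} $PG(1,q)$ as early as possible: for $1\le t\le\lfloor(q+1)/2\rfloor$ it picks $2t$ \emph{distinct} points (not $t$ doubled ones), and at $t=\lfloor(q+1)/2\rfloor+1$ it adds one further column so that every point of $PG(1,q)$ now appears at least once. From that stage on, any requested $v$ has a singleton recovery available, which is exactly what makes $n=2t-1$ feasible. After covering all of $PG(1,q)$, the paper keeps adding two columns per step (doubling points) until $t=q+1$, and finally one column at $t=q+2$ so that every point appears twice. With all columns at stage $t\le\lfloor(q+1)/2\rfloor$ being pairwise distinct, the greedy argument you abandoned actually works trivially: any two remaining columns represent distinct points. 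Your ``Hall-type / worst case is $\{v^t\}$'' shortcut is not a substitute for this, since proving that $\{v^t\}$ is indeed worst among all lists is precisely the content you need, and in any event your constructions fail that very test.
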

\begin{proof}
The proof constructs a matrix $M$ and distinguishes between the case $q+1$ odd and~$q+1$ even. We know that there are $q+1$ points in $PG(1,q)$, and that any such point can be written as a linear combination of 2 different points of $PG(1,q).$

\textbf{Case 1: }Let $q+1 = 2a$ for some positive integer $a\ge 2$. For each of the first $a$ vectors in a list of size $t$, we add 2 different points of $PG(1,q)$ as columns of $M$, so that after the $a$-th vector the set of columns of $M$ is equal to $PG(1,q).$ Then one only needs to add one column (for example~$e_1$) for the $(a+1)$-th vector, since one of the previously chosen $2a$ columns must be equal to that vector already. For the next $a-1$ vectors, we keep adding two different points of $PG(1,q)$ so that after the $(2a)$-th vector there are $2a+1 + 2(a-1)=4a-1$ columns in $M$. Each point of $PG(1,q)$ occurs twice as a column of $M$ except one point, call it $v$. For the $(2a+1)$-th vector (note that $2a+1=q+2$) we add the point $v$ as a column of $M$. It now follows that any list of $t \in \{1,\ldots,q+2\}$ vectors can be served with $M$. This is visually summarized in Table \ref{table:case1}.

\textbf{Case 2: }Let $q = 2b$ for some positive integer $b \ge 1$. The proof follows a similar strategy as Case 1 and thus omitted. However, it is summarized in Table \ref{table:case2}.

\begin{table}
\parbox{.49\linewidth}{
\centering
\resizebox{0.5\textwidth}{!}{
\begin{tabular}{|c| c|}
\hline
$t$& number of columns in $M$\\
\hline
 1  & 2  \\
\hline
2  & 4    \\
\hline
$\vdots$  & $\vdots$   \\
\hline
 $a$  & $2a = q+1$  \\
\hline
$a+1$  & $2a+1$   \\
\hline
$(a+1)+1$  & $(2a+1)+2$   \\
\hline
$\vdots$  & $\vdots$   \\
\hline
 $(a+1)+(a-1)=2a$  & $(2a+1)+2(a-1) = 4a-1$  \\
\hline
$2a+1 = q+2$  & $4a$   \\
\hline
\end{tabular}}
\caption{$q+1=2a$ is even.} \label{table:case1}
}
\hfill
\parbox{.49\linewidth}{
\centering
\resizebox{0.42\textwidth}{!}{
\begin{tabular}{|c| c|}
\hline
$t$& number of columns in $M$\\
\hline
 1  & 2  \\
\hline
2  & 4    \\
\hline
$\vdots$  & $\vdots$   \\
\hline
 $b$  & $2b$  \\
\hline
$b+1$  & $2b+1=q+1$   \\
\hline
$(b+1)+1$  & $(2b+1)+2$   \\
\hline
$\vdots$  & $\vdots$   \\
\hline
 $(b+1)+(b-1)$  & $(2b+1)+2(b-1)$  \\
\hline
$(b+1)+b = 2b+1$  & $(2b+1)+2b = 4b+1$   \\
\hline
$2b+2 = q+2$  & $4b+2$   \\
\hline
\end{tabular}}
\caption{$q+1=2b+1$ is odd.} \label{table:case2}
}
\end{table}

It can be seen from Tables~\ref{table:case1} and~\ref{table:case2} that when $1 \le t \le \lfloor (q+1)/2 \rfloor$, there are~$2t$ columns in $M$. If $t = \lfloor (q+1)/2 \rfloor + 1$, that is $t=a+1$ in Case 1 and $t=b+1$ in Case~2, there are~$2t-1$ columns in $M$. By combining the first equality in Lemma~\ref{lemma:easyfacts}, Table~\ref{table:case1}, and Table~\ref{table:case2} when $t = \lceil (q+2)/2 \rceil + j $ for some $j \in\{1,\ldots,\lfloor q/2 \rfloor\}$ there are $2t-1+2j$ columns in $M$. By the second equality in Lemma~\ref{lemma:easyfacts}, the last case to check is when $t=q+2$. Then there are $2(q+1) = 2(t-1)$ columns in~$M$, concluding the proof.
\end{proof}

We are now ready to prove Theorem \ref{thm:k=2}.
\begin{proof} [Proof of Theorem \ref{thm:k=2}]
We will prove the following statement equivalent to the theorem:
For $x,y \in \Z$ with $0\le x$ and $0 \le y < q+2$, we have 
$$\fb(2,x(q+2)+y,q) = 2(q+1)x + \left\lceil \frac{2(q+1)y}{q+2} \right\rceil.$$ By Lemma \ref{lem:lower_bound_k=2}, we have 
\begin{align*}
\fb(2,x(q+2)+y,q) &\ge x(q+2) + y + \left\lceil \frac{q(x(q+2)+y)}{q+2} \right\rceil  \\
&= x(q+2) + \left\lceil y+ \frac{(q+2)qx+qy}{q+2} \right\rceil \\
&= x(q+2)+qx + \left\lceil y+ \frac{qy}{q+2} \right\rceil \\
&= 2(q+1)x + \left\lceil \frac{2(q+1)y}{q+2} \right\rceil.
\end{align*}
By Proposition \ref{prop:increase+additive} we have 
$\fb(2,x(q+2)+y,q) \le x \fb(2,(q+2),q) + \fb(2,y,q).$ By Lemma~\ref{upper_bound_k=2}, we have $\fb(2,(q+2),q) = 2(q+1)$, which implies that 
$$\fb(2,x(q+2)+y,q) \le 2(q+1)x + \fb(2,y,q).$$ Note that if $y=0$ the lower and upper bounds coincide, giving the desired result. Therefore, for the rest of the proof we assume $1 \le y \le q+1$. 

By Lemma \ref{upper_bound_k=2}, we have 
\begin{equation}
\label{fb2yq}
\fb(2,y,q) \le \begin{cases}
        2y & \textnormal{ if } 1 \le y \le \lfloor (q+1)/2 \rfloor,\\
        2y-1 & \textnormal{ if } y =\lfloor (q+1)/2 \rfloor +1, \,\\
        2\lceil (q+2)/2 \rceil-1+2j & \textnormal{ if } y=\lceil (q+2)/2 \rceil + j \mbox{ for }j \in\{1,\ldots,\lfloor q/2 \rfloor\}.
    \end{cases}
\end{equation}
Therefore  to conclude the proof it suffices to show that
\begin{equation*}
\mbox{``the right-hand side of \eqref{fb2yq}''} = \left\lceil \frac{2(q+1)y}{q+2} \right\rceil = y + \left\lceil qy/(q+2) \right\rceil,
\end{equation*} so that the lower and the upper bound coincide. We analyze each case separately.
\begin{enumerate}
    \item Let $1 \le y \le \lfloor (q+1)/2 \rfloor$. If $y=1$, then $2y = y + \left\lceil qy/(q+2) \right\rceil$. If $y=\lfloor (q+1)/2 \rfloor$, then $2y = y + \left\lceil qy/(q+2) \right\rceil$ by the third equality in Lemma~\ref{lemma:easyfacts}.
    \item Let $y =\lfloor (q+1)/2 \rfloor +1.$ Then, $2y-1 = y + \left\lceil qy/(q+2) \right\rceil$ by the third equality in Lemma~\ref{lemma:easyfacts}.
    \item Let $y=\lceil (q+2)/2 \rceil + j \mbox{ for some }j \in\{1,\ldots,\lfloor q/2 \rfloor\}.$ If $y = \lceil (q+2)/2 \rceil + 1$, then $2\lceil (q+2)/2 \rceil-1+2 = y + \left\lceil qy/(q+2) \right\rceil$ by the first and the third equality in Lemma \ref{lemma:easyfacts}. Lastly, if $y = \lceil (q+2)/2 \rceil + \lfloor q/2 \rfloor$, then $2\lceil (q+2)/2 \rceil-1+2\lfloor q/2 \rfloor = y + \left\lceil qy/(q+2) \right\rceil$ by the second and the fourth equality in Lemma \ref{lemma:easyfacts}. \qedhere
\end{enumerate}
\end{proof}

When $q=2$, we obtain the following result as a corollary of Theorem \ref{thm:k=2}.

\begin{corollary}
\label{cor:(k,q)=(2,2)}
$\fb(2,t,2) = t + \lceil t/2 \rceil$. 
\end{corollary}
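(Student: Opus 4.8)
The statement to prove is $\fb(2,t,2) = t + \lceil t/2 \rceil$, claimed as a corollary of Theorem~\ref{thm:k=2}. The plan is simply to substitute $q=2$ into the formula $\fb(2,t,q) = t + \lceil qt/(q+2)\rceil$ and simplify the ceiling expression. Setting $q=2$ gives $\lceil 2t/4 \rceil = \lceil t/2 \rceil$, so the right-hand side becomes $t + \lceil t/2 \rceil$, which is exactly the claimed value. The alternative form $\lceil 2(q+1)t/(q+2)\rceil = \lceil 6t/4 \rceil = \lceil 3t/2 \rceil$ should be checked for consistency as well: indeed $\lceil 3t/2 \rceil = t + \lceil t/2 \rceil$, since $3t/2 = t + t/2$ and $t$ is an integer, so pulling the integer part out of the ceiling is legitimate.

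Concretely, I would write one or two lines: invoke Theorem~\ref{thm:k=2} with $q=2$, note that $\lceil qt/(q+2)\rceil = \lceil 2t/4\rceil = \lceil t/2\rceil$, and conclude. The only ``care'' needed is the elementary fact that $\lceil a + n \rceil = n + \lceil a \rceil$ for $n \in \Z$, which justifies both the simplification of $t + \lceil qt/(q+2)\rceil$ directly and the equivalence with $\lceil 3t/2\rceil$. There is essentially no obstacle here — this is a pure specialization of an already-proven theorem, and the entire content is arithmetic simplification of a ceiling.

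\begin{proof}
Apply Theorem~\ref{thm:k=2} with $q = 2$. This gives
$$\fb(2,t,2) = t + \left\lceil \frac{2t}{4} \right\rceil = t + \left\lceil \frac{t}{2} \right\rceil,$$
since $\lceil 2t/4 \rceil = \lceil t/2 \rceil$. (Equivalently, the first expression in Theorem~\ref{thm:k=2} specializes to $\lceil 6t/4 \rceil = \lceil 3t/2 \rceil = t + \lceil t/2 \rceil$, using that $t \in \Z$.)
\end{proof}
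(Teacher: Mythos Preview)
Your proof is correct and matches the paper's approach exactly: the corollary is obtained by direct substitution of $q=2$ into Theorem~\ref{thm:k=2}, and the paper does not even spell out the arithmetic. Your added remark verifying consistency with the alternate form $\lceil 3t/2\rceil$ is a harmless bonus.
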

Note that the minimal length of a binary $t$-PIR code of dimension 2 coincides with the value of Corollary \ref{cor:(k,q)=(2,2)}; see e.g. \cite{fazeli2015pir}). We remark that this equality is only true for codes of dimension 2 as two independent vectors will not be, in general, enough to recover another vector when $k \ge 3.$

\subsection{Explicit computations for other special parameters}

In this subsection, we compute the values $\fp(k,t,q)$ and $\fb(k,t,q)$ for some specific choices of the triple $(k,t,q)$.
In \cite[Theorem 7]{zhang2020bounds}, it was shown that $\fp(k,2^{k-1},2)=2^k-1$, while in \cite[Theorem 4]{yohananov2022almost} the authors showed that $\fb(k,2^k,2)=2^{k+1}-2$. These two results can be generalized as follows.

\begin{lemma}\label{lemma:gencostr}
For every $s$ and $k\in\N$, we have that
\begin{equation*}
    \fp(k,2^{k-1}s,2)=(2^k-1)s \quad \text{ and } \quad \fb(k,2^{k}s,2)=(2^{k+1}-2)s.
\end{equation*}
\end{lemma}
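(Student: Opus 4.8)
The plan is to prove both equalities by establishing matching upper and lower bounds, exploiting the structure of the binary simplex code and a ``scaling'' argument. For the upper bounds, the natural candidate is the $s$-fold repetition of the optimal construction for $s=1$: if $M \in \F_2^{k \times (2^k-1)}$ is a generator matrix witnessing $\fp(k,2^{k-1},2)=2^k-1$ (i.e., the simplex code, by \cite[Theorem 7]{zhang2020bounds}), then I claim $M' = (M \mid M \mid \cdots \mid M)$ ($s$ copies) serves the list $\{v^{2^{k-1}s}\}$ for every nonzero $v$. Indeed, each copy of $M$ provides $2^{k-1}$ pairwise disjoint recovery sets for $v$, and recovery sets drawn from different copies use disjoint column indices, so altogether we get $2^{k-1}s$ pairwise disjoint recovery sets; hence $\fp(k,2^{k-1}s,2) \le (2^k-1)s$. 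The same repetition trick applied to the matrix witnessing $\fb(k,2^k,2)=2^{k+1}-2$ from \cite[Theorem 4]{yohananov2022almost} gives $\fb(k,2^k s,2) \le (2^{k+1}-2)s$, since serving an arbitrary list $L$ of $2^k s$ nonzero vectors can be done by partitioning $L$ into $s$ sublists of size $2^k$ (in any way) and serving the $j$-th sublist with the $j$-th copy of the matrix.

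For the lower bounds, the idea is a counting/weight argument on the generator matrix $G \in \F_2^{k \times n}$ of an optimal code. For the functional PIR case: since $G$ must serve $\{v^{2^{k-1}s}\}$ for every nonzero $v \in \F_2^k$, and each recovery set for $v$ must contain at least one column $c$ with $\langle c \rangle \not\perp$... more precisely, a recovery set for $v$ must contain at least one column not lying in the hyperplane $v^{\perp}$ — wait, that is not quite the right condition either; rather, a recovery set $R$ for $v$ satisfies $v \in \langle G^R \rangle$, which forces $G^R \not\subseteq v^{\perp}$, so $R$ contains at least one column outside the hyperplane $H_v := \{x : \langle x, v\rangle = 0\}$. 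Since we need $2^{k-1}s$ pairwise disjoint such recovery sets, $G$ has at least $2^{k-1}s$ columns outside $H_v$ for every nonzero $v$. Summing $\#\{\text{columns outside } H_v\} \ge 2^{k-1}s$ over all $2^k-1$ nonzero $v$, and using that each nonzero column $c$ lies outside exactly $2^{k-1}$ of the hyperplanes $H_v$ (while the zero column lies outside none), we get $2^{k-1} n \ge (2^k-1) 2^{k-1} s$, i.e. $n \ge (2^k-1)s$. For the functional batch lower bound, I would run a similar double-counting but now choosing an adversarial list: take the list $L$ consisting of each nonzero $v$ repeated $s$ times wait, that has size $(2^k-1)s \ne 2^k s$; instead take $L$ to be $s$ copies of a list of $2^k$ vectors summing to zero (for instance all nonzero vectors plus one extra repeated vector, or — following the Functional Batch Conjecture circle of ideas — a list forcing each recovery set to have size $\ge 2$ on average), and count columns weighted by membership in hyperplanes as above; alternatively, invoke strict monotonicity and subadditivity (Proposition~\ref{prop:increase+additive}) together with the known value at $s=1$ to bootstrap, though subadditivity alone only gives $\fb(k,2^k s,2) \le s\,\fb(k,2^k,2)$, the wrong direction, so a genuine counting argument seems necessary for the lower bound.

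The main obstacle I anticipate is the functional batch lower bound $\fb(k,2^k s,2) \ge (2^{k+1}-2)s$. The PIR lower bound is a clean hyperplane-counting argument as sketched, and the upper bounds are routine repetition arguments. But for the batch lower bound one must exhibit, for a matrix with fewer than $(2^{k+1}-2)s$ columns, a specific bad list of $2^k s$ vectors it cannot serve; the delicate point is that an adversarial list must simultaneously (i) force many recovery sets of size $\ge 2$, and (ii) be ``spread out'' enough across hyperplanes that the column-counting inequality bites. A promising route is: fix any nonzero $v$ and consider the list $L_v$ that contains $v$ with high multiplicity; to serve $v$ repeated $r$ times, at most $n_v$ recovery sets can be singletons (where $n_v$ is the number of columns equal to $v$), so at least $r - n_v$ recovery sets have size $\ge 2$, costing $\ge n_v + 2(r-n_v) = 2r - n_v$ columns — then optimize the choice of $L$ over all ways to distribute $2^k s$ among the nonzero vectors, summing over a suitable family of $v$'s and using $\sum_v n_v \le n$ together with the hyperplane incidences. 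I would expect the extremal list to be the ``balanced'' one putting multiplicity roughly $2^k s/(2^k-1)$ on each nonzero vector, and reconciling the ceiling/floor arithmetic so that the bound comes out to exactly $(2^{k+1}-2)s$ — matching $2 \cdot 2^k s - 2s$ — is where the computation will need care. If a direct argument proves unwieldy, an alternative is an induction on $s$ combined with the $s=1$ base case and strict monotonicity, peeling off one ``block'' of $2^k$ vectors at a time, which parallels the structure of the proof of Lemma~\ref{lem:lower_bound_t=2}.
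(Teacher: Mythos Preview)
Your upper bounds via $s$-fold repetition are correct and equivalent to the paper's one-line appeal to subadditivity. However, your hyperplane-counting argument for the PIR lower bound has a genuine gap: over $\F_2$, the implication ``$v \in \langle G^R\rangle \Rightarrow G^R \not\subseteq v^\perp$'' fails whenever $v$ has even Hamming weight, because then $v\in v^\perp$. For instance with $k=2$ and $v=(1,1)$ we have $v^\perp=\{(0,0),(1,1)\}$, and the singleton $\{(1,1)\}$ is a recovery set for $v$ lying entirely inside $v^\perp$. So the inequality ``at least $2^{k-1}s$ columns lie outside $H_v$'' is unjustified for roughly half of the nonzero $v$'s, and the double count breaks down. (It can be repaired by assigning to each $v$ a hyperplane $H_v\not\ni v$ via a system of distinct representatives, but that is extra work you did not do.)

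The paper sidesteps this with a much simpler pigeonhole argument, and---crucially---the very same argument dispatches the batch lower bound you flag as the ``main obstacle''. In fact your own ``promising route'' already contains it; you just overcomplicate the endgame. There is no need to spread the adversarial list over several vectors or to optimise a distribution. By pigeonhole some nonzero $v$ occurs $\ell\le n/(2^k-1)$ times among the columns of $G$. Since the code must serve the single-vector list $\{v^t\}$, and at most $\ell$ of the recovery sets can be singletons, we get
\[
n \;\ge\; \ell + 2(t-\ell) \;=\; 2t-\ell \;\ge\; 2t-\frac{n}{2^k-1},
\]
hence $n\ge (2^k-1)\cdot 2t/2^k$. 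Plugging $t=2^{k-1}s$ yields $n\ge(2^k-1)s$ for PIR; plugging $t=2^k s$ yields $n\ge 2(2^k-1)s=(2^{k+1}-2)s$ for batch. No balanced list, no ceiling/floor arithmetic, and no induction on $s$ is needed.
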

\begin{proof}
    Since $\fp(k,2^{k-1},2)=(2^k-1)$, by the subadditivity of Proposition \ref{prop:increase+additive}, we immediately obtain that $$\fp(k,2^{k-1}s,2)\leq(2^k-1)s.$$
    Let $M\in\F_2^{k\times\fp(k,2^{k-1}s,2)}$ be the generator matrix of a $2^{k-1}s$-functional PIR code. Then, by the pigeonhole principle there exists a vector $v\in \F_2^k$ that appears $\ell$ times, with $$\ell\leq\lfloor \fp(k,2^{k-1}s,2)/(2^k-1)\rfloor,$$ among the columns of $M$. Since $M$ can serve $2^{k-1}s$ times the vector $v$ we obtain the following inequality
    \begin{equation*}
        \fp(k,2^{k-1}s,2)\geq  2(2^{k-1}s-\ell)+\ell =2(2^{k-1}s)-\ell\geq 2^ks-\frac{\fp(k,2^{k-1}s,2)}{2^k-1}.
    \end{equation*}
    Solving this inequality, we obtain $\fp(k,2^{k-1}s,2)\geq (2^k-1)s$, and this concludes the proof of the first equality. Starting from $\fb(k,2^{k},2)=(2^{k+1}-2)$ and proceeding in the same way, one can also prove the second inequality.
\end{proof}

Lemma \ref{lemma:gencostr} gives a curious corollary. Note that if the corollary were true for $t$ a multiple of $2^{k-1}$ instead of $2^k$, this would prove the Functional Batch Conjecture.
\begin{corollary}
\label{cor:multiple2k}
If $t$ is a multiple of $2^k$, then $\fp(k,t,2)=\fb(k,t,2)$.
\end{corollary}

Next, we show that the exact value of $\fp(k,t,2)$ can be computed recursively provided that one knows the value of $\fp(k,t,2)$ for some smaller values of $t$.

\begin{proposition}
\label{prop:recursion}
Let $s,h \in \N$ and $t=2^{k-1}s+h$ with $0\leq h< 2^{k-1}$. If $s+h\geq 2^{k}-1$. We have
\begin{equation*}
    \fp(k,t,2)= \fp(k,t-2^{k-1},2)+2^k-1. 
\end{equation*}
\end{proposition}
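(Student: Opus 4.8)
The plan is to establish the two inequalities separately, with the upper bound coming immediately from subadditivity and the lower bound requiring a pigeonhole argument in the style of Lemma~\ref{lemma:gencostr}. For the upper bound, Proposition~\ref{prop:increase+additive} gives
$$\fp(k,t,2) = \fp(k,(t-2^{k-1})+2^{k-1},2) \le \fp(k,t-2^{k-1},2) + \fp(k,2^{k-1},2) = \fp(k,t-2^{k-1},2) + 2^k-1,$$
using $\fp(k,2^{k-1},2) = 2^k-1$ from \cite[Theorem 7]{zhang2020bounds}. This direction needs no hypothesis on $s+h$.

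For the lower bound, let $M \in \F_2^{k\times n}$ with $n = \fp(k,t,2)$ be a generator matrix of a $t$-functional PIR code. Since there are only $2^k-1$ distinct nonzero vectors available as columns, by the pigeonhole principle some nonzero $v\in\F_2^k$ occurs as a column of $M$ at least $\lceil n/(2^k-1)\rceil$ times; call this multiplicity $\ell$, so $\ell \le n/(2^k-1)$ — wait, I want a lower bound on $\ell$, so I take $\ell \ge n/(2^k-1)$. To serve $v$ a total of $t$ times, each of the $t$ recovery sets is either a singleton column equal to $v$ (at most $\ell$ of these are available) or has size $\ge 2$; hence $n \ge \ell + 2(t-\ell) = 2t - \ell$. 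Combined with $\ell \ge n/(2^k-1)$ this is the wrong direction. Instead I should bound the number of columns \emph{not} equal to $v$: there are $n-\ell$ of them, and they must supply $t-\ell$ disjoint recovery sets of size $\ge 2$, giving $n - \ell \ge 2(t-\ell)$, i.e. $n \ge 2t - \ell$. To turn this into the claimed bound I need an \emph{upper} bound on $\ell$, which is the opposite of pigeonhole. The resolution, and the real content of the hypothesis $s+h \ge 2^k-1$, must be a two-step argument: first show $n \ge 2t - \ell$ for \emph{every} nonzero $v$ (so it holds for the $v$ of minimum multiplicity $\ell_{\min} \le n/(2^k-1)$), yielding $n \ge 2t - n/(2^k-1)$ and hence $n \ge (2^k-1)\cdot 2t/2^k = (2^k-1)t/2^{k-1}$; then feed in $t = 2^{k-1}s+h$ and compare with $\fp(k,t-2^{k-1},2)+2^k-1$, where one uses strict monotonicity and an inductive description of $\fp(k,\cdot,2)$ on the interval below to pin down the value exactly.

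The genuinely delicate step is the lower bound, and specifically how the inequality $n \ge 2t - \ell_{\min}$ interacts with the constraint that removing surplus columns cannot decrease functionality: as in the proof of Lemma~\ref{lem:lower_bound_k=2}, if some $v$ appeared more than necessary we could delete columns, so at optimum the multiplicities are balanced in a controlled way. I expect the argument to run: by optimality every nonzero $v$ has multiplicity $\ell_v$ with $t \ge \ell_v$ (else delete a copy), and summing $n \ge 2t-\ell_v$ over all $2^k-1$ choices of $v$ gives $(2^k-1)n \ge (2^k-1)2t - n$, hence $n \ge (2^k-1)t/2^{k-1}$; when $t = 2^{k-1}s+h$ this reads $n \ge (2^k-1)s + (2^k-1)h/2^{k-1}$, and since $n$ is an integer, $n \ge (2^k-1)s + \lceil (2^k-1)h/2^{k-1}\rceil$. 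One then checks, using $s+h \ge 2^k-1$ together with the (smaller-$t$) recursion instance $\fp(k,t-2^{k-1},2) = \fp(k,t-2^k,2)+2^k-1$ applied repeatedly down to a base case where $h < 2^{k-1}$ and $s$ is small, that this matches $\fp(k,t-2^{k-1},2)+2^k-1$ exactly; the hypothesis $s+h \ge 2^k-1$ is precisely what guarantees the floor/ceiling bookkeeping closes and the induction reaches a valid base case. The main obstacle is therefore not any single inequality but making the descent/induction on $t$ consistent: one must verify that the hypothesis $s+h\ge 2^k-1$ is preserved (or that the base case is reached) when passing from $t$ to $t-2^{k-1}$, and that the exact value of $\fp(k,t,2)$ for the small-$t$ base cases is already known — most likely from Lemma~\ref{lemma:gencostr} at $h=0$ and from direct computation otherwise.
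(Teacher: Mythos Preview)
Your upper bound is fine and matches the paper exactly: subadditivity plus $\fp(k,2^{k-1},2)=2^k-1$.

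The lower bound, however, has a genuine gap. The averaging argument you describe (sum the inequalities $n\ge 2t-\ell_v$ over all nonzero $v$) correctly yields $n\ge\lceil(2^k-1)t/2^{k-1}\rceil$, but this is \emph{not} the same as $n\ge \fp(k,t-2^{k-1},2)+2^k-1$. To pass from the former to the latter you would need to know that $\fp(k,t-2^{k-1},2)=\lceil(2^k-1)(t-2^{k-1})/2^{k-1}\rceil$, i.e.\ a closed-form evaluation of $\fp$ for all $t$ in the relevant range. That is a much stronger statement than the proposition, and it is not available: the proposition is a recursion, not a formula, and your proposed induction only expresses $\fp(k,t-2^{k-1},2)$ in terms of still-smaller values, never producing an upper bound on it that you can compare to $n$. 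The ``base cases with $h\ne 0$'' you gesture at are simply unknown in general.

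The paper's argument avoids this by being \emph{constructive} rather than numerical. The hypothesis $s+h\ge 2^k-1$ is used in exactly one place: if some nonzero $v$ were absent from the columns of an optimal $M$, then serving $v$ forces $n\ge 2t=(2^k-1)s+(s+h)+h\ge(2^k-1)(s+1)=\fp(k,2^{k-1}(s+1),2)$, contradicting strict monotonicity in $t$ since $t<2^{k-1}(s+1)$. Hence every nonzero vector of $\F_2^k$ appears at least once as a column of $M$. Deleting one copy of each leaves a matrix $M'$ with $n-(2^k-1)$ columns; for any fixed $v$, the $2^{k-1}$ deleted pairs $\{w,v+w\}$ account for at most $2^{k-1}$ recovery sets, so $M'$ still serves $v$ at least $t-2^{k-1}$ times. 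This gives $\fp(k,t-2^{k-1},2)\le n-(2^k-1)$ directly, with no induction and no need to know any exact values. Your outline never reaches this ``delete one of each'' step, which is the actual content of the lower bound.
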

\begin{proof}
Let $M\in\F_2^{k\times\fp(k,t,2)}$ be the generator matrix of a $(2^{k-1}s+h)$-functional PIR code. If~$v\in\F_2^k$ is not a column of $M$, then in order to serve $t$ times the vector $v$ we  need at least~$2t$ columns. Thus
\begin{align*}
    \fp(k,t,2)&\geq 2(2^{k-1}s+h)= (2^ks+2h)=(2^k-1)s+(s+h) + h \geq (2^k-1)s+(s+h) \\
    &\ge (2^k-1)s+(2^k-1) = (2^k-1)(s+1) =
   \fp(k,2^{k-1}(s+1),2),
\end{align*}
where the last inequality follows from the assumption $s+h\geq 2^{k}-1$, and the last equality follows from Lemma~\ref{lemma:gencostr}. This contradicts Proposition~\ref{prop:increase+additive}, as $t<2^{k-1}(s+1)$ since $h<2^{k-1}$. Therefore, we can assume that every vector of $\F_2^k$ appears among the columns of $M$. 

Fix $v\in\F_2^k - \{0\}$. Let $X_1,\dots,X_t$ be a partition of the columns of $M$ that serves $t$ times the vector $v$. For every vector $w\in\F_2^k$ consider the recovery sets $X_i=\{w,v+w\}$ of $v$ for~$i \in \Z_{\ge 1}.$ Therefore, the matrix $M'$, which we obtain by removing from $M$ each vector of~$\F_2^k$ exactly once,  can still serve at least $t-2^{k-1}$ times $v$ (we subtract $2^{k-1}$ since there were~$2^{k-1} = 2^k/2$ different recovery sets of the form $X_i$ to begin with). Since this is true for every nonzero $v\in\F_2^k$, we obtain
\begin{equation*}
    \fp(k,t,2)\geq\fp(k,t-2^{k-1},2)+2^k-1.
\end{equation*}
We conclude using the subadditivity of $\fp(k,t,2)$. Since $\fp(k,2^{k-1},2)=2^k-1$ by \cite[Theorem 7]{zhang2020bounds}, the results follows from Proposition \ref{prop:increase+additive}.
\end{proof}

We illustrate Proposition \ref{prop:recursion} with an example.

\begin{example}
We will compute $\fp(3,19,2)$. Take $(s,h)=(4,3)$ in Proposition \ref{prop:recursion}. This gives
\begin{equation}
\label{eq:k3t19}
\fp(3,19,2) = \fp(3,15,2)+7.
\end{equation}
By \cite[Theorem 5]{zhang2020bounds} we know that $\fp(3,16,2)=\fp(3,15,2)+1$ and $\fp(3,16,2)=28$. Therefore, $\fp(3,19,2)=34$ by equation \eqref{eq:k3t19}.
\end{example}

One of the main contributions of this paper is to identify the natural generalization of $\fp(k,2^{k-1},2)=2^k-1=|\F_2^k-\{0\}|$ for $q\neq 2.$ The next result does that by finding the largest $t$ such that $\fp(k,t,q)=q^k-1=|\F_q^k-\{0\}|$.

\begin{lemma}
\label{lemma:constrq}
    $\fp\left(k,\frac{q^k+q-2}{2},q\right)=q^k-1.$
\end{lemma}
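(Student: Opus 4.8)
The claim is $\fp\!\left(k,\frac{q^k+q-2}{2},q\right)=q^k-1$, so I would prove the two inequalities separately. Write $t := \frac{q^k+q-2}{2}$; note this is the value of $t$ for which $2t = q^k+q-2$, i.e. $2t - (q^k-1) = q-1$.

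\emph{Upper bound $\fp(k,t,q)\le q^k-1$.} I would take $M$ to be the matrix whose columns are all $q^k-1$ nonzero vectors of $\F_q^k$, each appearing exactly once (the simplex-type matrix). Fix a nonzero $v\in\F_q^k$; I must serve $v$ a total of $t$ times with pairwise disjoint recovery sets. The $q^k$ vectors of $\F_q^k$ split into the $q$ cosets of the line $\langle v\rangle$: one coset is $\langle v\rangle$ itself (containing $0$, $v$, $2v,\dots$, i.e.\ $q$ vectors including $0$), and $q-1$ other cosets of size $q$. Each coset $C$ different from $\langle v\rangle$ is a union of pairs $\{w, w+v\}$, but more efficiently one can extract from the $q$ columns of $M$ lying in such a coset a number of size-$2$ recovery sets for $v$ plus possibly leftovers; the key count is that from the $q^k-1$ available columns I can form at least $\lfloor (q^k-1+ (\text{extra from }\langle v\rangle))/2\rfloor$-ish disjoint recovery sets. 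The clean way: $v$ itself is a column of $M$, a size-$1$ recovery set; the remaining $q^k-2$ columns are $\F_q^k\setminus\{0,v\}$, which I claim can be partitioned into $(q^k-2)/2$ pairs each summing (up to scalars) to something proportional to $v$ — actually I just need each pair $\{x,y\}$ with $v\in\langle x,y\rangle$. Pairing $w$ with $v-w$ works whenever $v-w\ne w$ and $v-w\ne 0$ and both are $\ne v$; handling the few exceptional $w$ (those with $2w=v$, etc.) by a small local rearrangement gives $1 + (q^k-2)/2 = t$ disjoint recovery sets. I would present this pairing argument carefully for one fixed $v$, since by Proposition~\ref{prop:invertibleG} the construction is symmetric under $\mathrm{GL}_k(\F_q)$ and it suffices to treat, say, $v=e_1$.

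\emph{Lower bound $\fp(k,t,q)\ge q^k-1$.} This is the standard pigeonhole/double-counting argument used already in Lemma~\ref{lemma:gencostr} and Proposition~\ref{prop:recursion}. Let $M$ attain $\fp(k,t,q)$ with $n:=\fp(k,t,q)$ columns, and suppose for contradiction $n \le q^k-2$. Then some nonzero vector $v\in\F_q^k$ does not occur among the columns of $M$: indeed if every nonzero vector occurred we'd already have $n\ge q^k-1$. Hmm — more carefully, I need a vector occurring few times: by pigeonhole some nonzero $v$ appears at most $\lfloor n/(q^k-1)\rfloor$ times among the columns, call this multiplicity $\ell$. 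To serve $v$ exactly $t$ times with disjoint recovery sets, the $\ell$ singleton recovery sets $\{v\}$ plus the rest forces $n \ge \ell + 2(t-\ell) = 2t-\ell$. Hence $n \ge 2t - \lfloor n/(q^k-1)\rfloor$, i.e. $n + n/(q^k-1) \ge 2t = q^k+q-2$, giving $n \cdot \frac{q^k}{q^k-1} \ge q^k+q-2$, so $n \ge \frac{(q^k+q-2)(q^k-1)}{q^k}$. I then check that this lower bound exceeds $q^k-2$, i.e.\ that $(q^k+q-2)(q^k-1) > (q^k-2)q^k$; expanding, the left side is $q^{2k} + (q-1)q^k - (q^k)\cdot\text{(stuff)}$ — a short computation shows the inequality holds for all prime powers $q\ge 2$ and $k\ge1$ (with care about whether $t$ is an integer, which requires $q^k$ even, i.e.\ $q$ even — wait, $\frac{q^k+q-2}{2}$ is an integer iff $q^k+q$ is even iff $q$ is even; for $q$ odd, $q^k+q-2$ is even, so it's fine for all $q$). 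This contradiction yields $n\ge q^k-1$.

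\textbf{Main obstacle.} The routine part is the lower bound (it mirrors existing arguments). The delicate part is the explicit pairing in the upper bound: I need to show that for the all-nonzero-vectors matrix and \emph{every} fixed nonzero $v$, the set $\F_q^k\setminus\{0\}$ can be carved into one singleton $\{v\}$ together with $(q^k-2)/2$ pairs each spanning a subspace containing $v$, and in particular I must confirm $q^k-2$ is even (true since $q^k$ is even when $q$ even, and when $q$ is odd $q^k$ is odd so $q^k-2$ is odd — so the parity bookkeeping is genuinely $q$-dependent and the "leftover" element must instead be absorbed into a size-$3$ recovery set, or paired differently). Getting this case analysis right, and reconciling it with the exact value $t=\frac{q^k+q-2}{2}$ (the ``$+q-2$'' is exactly the slack that makes the construction tight), is where the real work lies; I would organize it by reducing to $v=e_1$ via Proposition~\ref{prop:invertibleG}, then explicitly describing recovery sets coset-by-coset of $\langle e_1\rangle$ in $\F_q^k$.
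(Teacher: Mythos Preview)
Your proposal has genuine gaps in \emph{both} directions, and in each case the error stems from treating the column $v$ specially rather than the whole line $\langle v\rangle$.

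\textbf{Lower bound.} Your pigeonhole is on individual vectors: some $v$ occurs $\ell\le\lfloor n/(q^k-1)\rfloor$ times, and then you assert $n\ge \ell+2(t-\ell)$. This inequality is \emph{not} valid for $q>2$: a singleton recovery set for $v$ can be $\{\alpha v\}$ for any $\alpha\in\F_q^*$, not only $\{v\}$ itself. Concretely, for $q=3$, $k=2$ your argument would give $n\ge 2t-1=9$, contradicting the upper bound $n\le 8$ that you yourself prove. The correct pigeonhole is on \emph{lines}: using the upper bound $n\le q^k-1$ (which you should prove first), some line $\langle v\rangle$ contains at most $h\le q-1$ columns of $M$; since any size-$1$ recovery set for $v$ must lie in $\langle v\rangle$, one gets $n\ge 2t-h\ge 2t-(q-1)=q^k-1$ directly, with no asymptotic massaging needed.

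\textbf{Upper bound.} Your claimed count $1+(q^k-2)/2$ equals $q^k/2$ (for $q$ even), which matches $t=(q^k+q-2)/2$ only when $q=2$; for $q=4$ you are one recovery set short, for $q$ odd the parity problem you flag is real and your ``local rearrangement'' does not close the gap. The point you are missing is that \emph{every} nonzero multiple $\alpha v$ is a singleton recovery set for $v$, so you get $q-1$ singletons, not one. After removing $\langle v\rangle$ you are left with $\F_q^k\setminus\langle v\rangle$, which has $q^k-q$ elements (always even), and this set can be partitioned into $(q^k-q)/2$ pairs each spanning $v$: take $\{w,v+w\}$ when $q$ is even, and $\{v+w,v-w\}$ when $q$ is odd. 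The total is then $(q-1)+(q^k-q)/2=t$ on the nose, with no leftover cases. This is exactly how the paper proceeds.
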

\begin{proof}
Let $M\in \F_q^{k\times (q^k-1)}$ be a matrix in which every nonzero vector of $\F_q^k$ appears as a column. Fix a vector $v\in\F_q^k$. We will now perform a parity analysis on $q$. 

If $q$ is even, then $M$ can serve $(q^k+q-2)/2$ times the vector $v$ with the partition given by all the sets of the form $\{0,\alpha v\}$ with $\alpha\in\F_q^*$, and all the sets of the form $\{w,v+w\}$ with~$w\in\F_q\setminus\langle v\rangle$. This works because $(v+w)+w=v$ when $q$ is even. We have 
\begin{equation}
\label{eq:qnot2}
(q-1)+\frac{q^k-q}{2} = \frac{q^k+q-2}{2},
\end{equation}
where $q-1$ is the number of $\alpha$'s, $q^k-q$ is the number of vectors outside the line $\langle v \rangle_q$, and the division by 2 is required to avoid overcounting the sets of the form $\{w,v+w\}$.

If $q$ is odd, then $M$ can serve $(q^k+q-2)/2$ times the vector $v$ with the partition given by all the sets of the form $\{0,\alpha v\}$ with $\alpha\in\F_q^*$, and all the sets of the form $\{v+w,v-w\}$ (this set would be a singleton if $q$ were even) with $w\in\F_q\setminus\langle v\rangle$. This is a partition since given~$w_1,w_2\in\F_q\setminus\langle v\rangle$ such that $\{v+w_1,v-w_1\}\cap\{v+w_2,v-w_2\}\neq\emptyset$, then either $w_1=w_2$ or $w_1=-w_2$, and in both cases we have $\{v+w_1,v-w_1\}=\{v+w_2,v-w_2\}$. For the counting, the same reasoning of equation \eqref{eq:qnot2} works, giving    \begin{equation}\label{equation:leftinequality}
        \fp\left(k,\frac{q^k+q-2}{2},q\right)\leq q^k-1.
    \end{equation}
    Let now $M$ be a matrix that realizes $\fp(k,(q^k+q-2)/2,q)$. By Equation \eqref{equation:leftinequality} we conclude that there exists a vector $v$ such that there are $h\leq q-1$ columns of $M$ that belong to $\langle v\rangle$. Then, we have that$$\frac{\fp\left(k,\frac{q^k+q-2}{2},q\right)-h}{2}+h\geq \frac{q^k+q-2}{2},$$ hence 
$$\fp\left(k,\frac{q^k+q-2}{2},q\right) \ge q^k+q-2-h \ge q^k-1,$$ yielding the desired result.  
\end{proof}

As explained before, taking $q=2$ in Lemma \ref{lemma:constrq} gives the known result 
$$\fp(k,2^k-1,2)=2^k-1.$$ We further generalize Lemma \ref{lemma:constrq} as follows, where by taking $s=1$ one recovers Lemma~\ref{lemma:constrq}. Note  however that the lemma is used in the proof.

\begin{theorem}\label{thm:gencostrq}
    For every $s \in \Z_{\ge1}$, we have
    $$\fp\left(k,\frac{q^k+q-2}{2}s,q\right)=s(q^k-1).$$
\end{theorem}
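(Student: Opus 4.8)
The plan is to mimic the proof strategy of Lemma~\ref{lemma:gencostr}: prove the upper bound by subadditivity and the lower bound by a pigeonhole/parity argument, using Lemma~\ref{lemma:constrq} as the base case $s=1$.

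\medskip

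\noindent\textbf{Upper bound.} Since $\fp\left(k,\frac{q^k+q-2}{2},q\right)=q^k-1$ by Lemma~\ref{lemma:constrq}, subadditivity of $\fp$ in $t$ (Proposition~\ref{prop:increase+additive}) immediately gives
$$\fp\left(k,\tfrac{q^k+q-2}{2}s,q\right)\le s(q^k-1).$$

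\medskip

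\noindent\textbf{Lower bound.} Let $M\in\F_q^{k\times n}$ with $n=\fp\left(k,\frac{q^k+q-2}{2}s,q\right)$ be a generator matrix of a $\frac{q^k+q-2}{2}s$-functional PIR code. By the pigeonhole principle there is a line (i.e.\ a $1$-dimensional subspace) $\langle v\rangle_{\F_q}$ such that the number $h$ of columns of $M$ lying in $\langle v\rangle_{\F_q}$ satisfies $h\le \lfloor n(q-1)/(q^k-1)\rfloor$; here I use that the $q^k-1$ nonzero vectors partition into $(q^k-1)/(q-1)$ lines each containing $q-1$ nonzero vectors, so some line receives at most an average share. To serve $\frac{q^k+q-2}{2}s$ copies of $v$, every recovery set not contained in $\langle v\rangle_{\F_q}$ uses at least two columns, while those contained in $\langle v\rangle_{\F_q}$ use at least one; since at most $h$ columns lie on the line and they can participate in at most $\lfloor h/1\rfloor$ recovery sets of size one (more carefully, the total number of disjoint recovery sets is $\frac{q^k+q-2}{2}s$ and they consume all available columns), counting columns yields
$$n \ge 2\left(\tfrac{q^k+q-2}{2}s - h\right) + h = (q^k+q-2)s - h.$$
Substituting the pigeonhole bound on $h$ gives $n \ge (q^k+q-2)s - \lfloor n(q-1)/(q^k-1)\rfloor$, hence $n\bigl(1 + (q-1)/(q^k-1)\bigr)\ge (q^k+q-2)s$, i.e.\ $n\cdot\frac{q^k+q-2}{q^k-1}\ge (q^k+q-2)s$, which simplifies to $n\ge s(q^k-1)$, matching the upper bound.

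\medskip

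\noindent\textbf{Main obstacle.} The delicate point is the bookkeeping in the lower bound: I must be careful about how many recovery sets of size one can be formed (they all must be disjoint and live inside the $(q-1)$-element set of nonzero multiples of $v$, so at most $q-1$ of them, but the cleaner route is to bound via the total column count as above rather than counting size-one sets directly), and I must verify that the parity distinction between $q$ even and $q$ odd — which mattered in Lemma~\ref{lemma:constrq} for exhibiting the explicit partition — does not re-enter here, since for $s>1$ the construction is obtained purely by concatenating $s$ copies of the $s=1$ construction via subadditivity. One should double-check that $\frac{q^k+q-2}{2}$ is always an integer (true: $q^k\equiv q\pmod 2$), so the statement is well-posed. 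I expect the inequality chain in the lower bound to be the part requiring the most care to write cleanly, but no genuinely new idea beyond the template of Lemma~\ref{lemma:gencostr}.
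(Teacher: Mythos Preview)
Your proposal is correct and follows essentially the same route as the paper: the upper bound via subadditivity from Lemma~\ref{lemma:constrq}, and the lower bound via the pigeonhole argument on lines combined with the column-count inequality $n\ge 2t-h$ (the paper writes the equivalent form $\tfrac{n-h}{2}+h\ge t$). Your concerns about parity and integrality are well-founded but, as you note, do not affect the argument since the $s=1$ case is already handled by Lemma~\ref{lemma:constrq}.
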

\begin{proof} 
    We proceed as in the proof of Lemma \ref{lemma:gencostr}. By Lemma~\ref{lemma:constrq} and by the subadditivity established in Proposition~\ref{prop:increase+additive}, we have $$\fp\left(k,\frac{q^k+q-2}{2}s,q\right)\leq s(q^k-1).$$
    Let $M\in\F_q^{k\times\fp(k,\frac{q^{k}+q-2}{2}s,q)}$ be the generator matrix of a $(q^k+q-2)s/2$-functional PIR code. By the pigeonhole principle there exists a nonzero vector $v\in \F_2^k$ for which there are 
    $$h\leq\frac{(q-1)\fp\left(k,\frac{q^k+q-2}{2}s,q\right)}{q^k-1}$$
    columns of $M$ that belong to $\langle v\rangle$. Since $M$ can serve $s(q^{k}+q-2)/2$ times the vector $v$, we obtain the following inequality:
    \begin{equation*}
        \frac{\fp(k,\frac{q^k+q-2}{2}s,q)-h}{2}+h\geq \frac{(q^k+q-2)s}{2}.
    \end{equation*}
    This implies $\fp(k,\frac{q^k+q-2}{2},2)\geq (q^k-1)s$, concluding the proof.
\end{proof}

We continue by recalling an equality established in~\cite[Section VI]{yohananov2022almost}:
\begin{equation}
\label{eq:nexttoFBC}
\fb(k,2^k,2)=2(2^k-1)
\end{equation}  This was also independently proven in \cite{hollmann2023some} using an algorithm by Hall, see~\cite{hall1952combinatorial}. Similarly to Lemma \ref{lemma:constrq} and Theorem \ref{thm:gencostrq}, one can try to generalize Equation \eqref{eq:nexttoFBC} to an arbitrary finite field. To this end, we propose the following open questions.

\begin{openproblem}\label{openproblem1}
Prove or disprove that $$\fb\left(k,q^k+q-2,q\right) = 2q^k-2.$$
\end{openproblem}
We also state a ``projectivized" version of the previous question.
\begin{openproblem}
\label{op:projectivebatch}
Prove or disprove that $$\fb\left(k,\frac{q^k-1}{q-1}+1,q\right) = 2\frac{q^k-1}{q-1}.$$
\end{openproblem}
Notice that a positive answer to the latter question would also result in a positive answer to the former one. In the next section, we will give some partial result on Open Problem~\ref{openproblem1}.

\section{Bounds and field size}
\label{sec:bound+q}
This section is devoted to bounds on $\fb(k,t,q)$ and to understand how $\fb(k,t,q)$ behaves with respect to $q$. We start with a trivial bound that will be useful later on.
\begin{lemma}\label{lemma:basicbound}
We have $t+k-1 \le \fp(k,t,q) \le \fb(k,t,q) \le tk.$
\end{lemma}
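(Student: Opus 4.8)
The plan is to prove the three inequalities $t+k-1 \le \fp(k,t,q)$, $\fp(k,t,q) \le \fb(k,t,q)$, and $\fb(k,t,q) \le tk$ separately, since the middle one is already recorded in the text (it follows immediately from the fact that every $t$-functional batch code is a $t$-functional PIR code). So only the outer two bounds require an argument.

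For the upper bound $\fb(k,t,q) \le tk$, I would exhibit an explicit generator matrix. Take $M = (I_k \mid I_k \mid \cdots \mid I_k)$, the $k \times tk$ matrix consisting of $t$ copies of the $k \times k$ identity block side by side. Given any list $L = \{v_1, \dots, v_t\}$ of nonzero vectors, assign to $v_j$ the recovery set $X_j$ consisting of the columns in the $j$-th identity block whose index lies in $\supp(v_j)$; since $v_j$ is a linear combination of the standard basis vectors it certainly lies in $\langle M^{X_j}\rangle$, and the sets $X_1, \dots, X_t$ are pairwise disjoint because they come from different blocks. Hence $M$ serves every list of $t$ nonzero vectors, so $M$ generates a $t$-functional batch code of length $tk$, giving $\fb(k,t,q) \le tk$.

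For the lower bound $t + k - 1 \le \fp(k,t,q)$, let $M$ attain $\fp(k,t,q)$, so $M$ generates a $t$-functional PIR code of dimension $k$ and length $n = \fp(k,t,q)$. First, $M$ has rank $k$ (it is a generator matrix of a $k$-dimensional code), so $n \ge k$. To get the extra $t-1$, fix any nonzero $v \in \F_q^k$; after multiplying on the left by a suitable invertible matrix, which is harmless by Proposition~\ref{prop:invertibleG}, we may assume $v = e_1$. Since $M$ serves $L = \{e_1^t\}$, there are pairwise disjoint recovery sets $X_1, \dots, X_t$ for $e_1$; each $X_i$ must contain at least one column with nonzero first coordinate (otherwise $\langle M^{X_i}\rangle$ lies in the hyperplane $x_1 = 0$ and cannot contain $e_1$). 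This already forces at least $t$ columns of $M$ to have a nonzero first entry, which only gives $n \ge t$, not $n \ge t + k - 1$. To combine the two estimates, I would argue instead as follows: pick a column of $M$, say the $i$-th, that is used in one of the recovery sets $X_j$ and has nonzero first coordinate; the remaining $n - 1$ columns still must have rank $k-1$ on the quotient $\F_q^k / \langle e_1\rangle$... this needs a cleaner formulation. The crispest route is: delete from $M$ one column from each of $t-1$ of the recovery sets $X_1, \dots, X_{t-1}$ — this removes $t-1$ columns — and observe that the resulting matrix together with the fact that $X_t$ still recovers $e_1$ shows the deleted columns plus a basis were independent; more carefully, the $t$ recovery sets contain $t$ columns $c_1, \dots, c_t$ each with nonzero first entry lying in distinct $X_i$'s, and since the code has dimension $k$ there are $k$ linearly independent columns; a counting/independence argument then yields $n \ge (k-1) + t$.

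The main obstacle is making the lower-bound argument fully rigorous: separately one gets only $n \ge k$ and $n \ge t$, and the point is that these "overlap" in at most one column, so they add up to $n \ge k + t - 1$. I expect the cleanest way to see this is a rank argument: among the $n$ columns of $M$, there is a set $S$ of $k$ columns forming a basis of $\F_q^k$, and exactly one basis vector (say the one completing $e_1$) need coincide with a "heavy" column, while the $t$ disjoint recovery sets for $e_1$ each contribute a column outside the hyperplane $x_1 = 0$; after choosing the basis $S$ to contain one such column, the other $t-1$ recovery sets force $t-1$ further distinct columns not in $S$, giving $n \ge k + (t-1)$. I would write this up by fixing the basis and the recovery sets simultaneously and counting indices, which is a short but slightly delicate bookkeeping step.
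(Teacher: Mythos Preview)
Your upper bound $\fb(k,t,q)\le tk$ via $t$ concatenated identities is exactly what the paper does, and the middle inequality is indeed immediate from the definitions.

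For the lower bound, however, your direct counting argument has a genuine gap. You want to pick a basis $S$ of $k$ columns containing exactly one ``heavy'' column (nonzero first entry) and then say the remaining $t-1$ recovery sets each contribute a new column outside $S$. But nothing forces $S$ to contain only one heavy column: if \emph{all} columns of $M$ happen to have nonzero first entry, every basis consists entirely of heavy columns, and your $t-1$ ``extra'' columns could all already lie in $S$. More generally, if the light columns span only an $r$-dimensional space with $r<k-1$, any basis must contain at least $k-r\ge 2$ heavy columns, and your bookkeeping collapses.

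The paper sidesteps this entirely with a one-line argument you overlooked: since $\fp(k,1,q)=k$ (a generator matrix of a $k$-dimensional code has at least $k$ columns) and $\fp(k,t,q)$ is \emph{strictly} increasing in $t$ by Proposition~\ref{prop:increase+additive}, one gets $\fp(k,t,q)\ge \fp(k,1,q)+(t-1)=k+t-1$ immediately. If you prefer a direct argument in the spirit of your attempt, the clean fix is to go through the minimum distance: each of the $t$ disjoint recovery sets for $e_1$ contains a column with nonzero first entry, so the first row of $M$ (taken WLOG to have minimum weight via Proposition~\ref{prop:invertibleG}) has weight $d\ge t$; then the Singleton bound $d\le n-k+1$ gives $n\ge k+t-1$.
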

\begin{proof}
When $t=1$, one needs a full rank matrix. For the lower bound on $\fp(k,t,q)$, we know that $\fp(k,1,q) = k$. Since $\fp(k,t,q)$ is strictly increasing in $t$, we have the lower bound on $\fp(k,t,q)$. The upper bound on $\fb(k,t,q)$ is obtained considering a matrix that is formed by concatenating $t$ many identity matrices of size $k$.
\end{proof}
In Proposition \ref{prop:increase+additive} we mentioned that $\fp(k,t,q)$ is strictly increasing in $k$. The following proposition gives a lower bound on how much it increases at each step.
\begin{proposition}\label{proposition:incrbound}
    We have $$\left\lceil\frac{(q-1)\fb(k,t,q)}{q^k-1}\right\rceil\leq \fb(k,t,q)-\fb(k-1,t,q).$$
    The same inequality holds also for $\fp(k,t,q)$.
\end{proposition}
\begin{proof}
    Let $M$ achieve $n=\fb(k,t,q)$. There exists a one-dimensional space $V$ for which at least $\smash{\frac{n(q-1)}{q^k-1}}$ columns of $M$ belong to $V$. By Proposition \ref{prop:invertibleG}, we can assume without loss of generality that $V$ is generated by $e_1$. Then, the matrix $M'$, obtained from $M$ by deleting the first row and all the columns corresponding to vectors that belong to $V$, can serve any sequence of $t$ vectors in $\F_q^{k-1}$. Therefore, we obtain 
    \begin{equation*}
        \fb(k,t,q)-\frac{(q-1)\fb(k,t,q)}{q^k-1}\geq \fb(k-1,t,q),
    \end{equation*}
    and we conclude by noticing that $\fb(k,t,q)-\fb(k-1,t,q)$ is an integer. The same proof also works for $\fp(k,t,q).$    
\end{proof}

We now turn to the behavior of $\fb(k,t,q)$ with respect to $q$.
\begin{proposition}
    For all $s_1,s_2\in\N$ such that $s_1\mid s_2\in\N$, we have
$$\fb(k,t,q^{s_2})\leq\fb\left(k,\left(\frac{s_2}{s_1}\right)t,q^{s_1}\right)\leq\left(\frac{s_2}{s_1}\right)\fb(k,t,q^{s_2}).$$
The same inequalities hold for $\fp(k,t,q)$.
\end{proposition}
\begin{proof}
    Let $M$ be a matrix that achieves $\fb(k,t,q^{s_2})$. Fix a basis $\{\alpha_1,\dots,\alpha_{s_2/s_1}\}$ of $\F_q^{s_2}$ over~$\F_q^{s_1}$. For each column $g$ of $M$ we find $s_2/s_1$ vectors that correspond to the coefficients of $g$ with respect to our basis.
    The matrix obtained by collecting all these vectors has~$s_2/s_1\fb(k,t,q^{s_2})$ columns and can serve any list of $\frac{s_2}{s_1}t$ vectors over $\F_{q^{s_1}}$. Indeed, let $L=\{v_1,\dots, v_{(s_2/s_1)t}\}$ be a list of $(s_2/s_1)t$ vectors. We construct the following list 
    {\small$$\tilde L=\{\alpha_1 v_1+\dots+\alpha_{s_2/s_1}v_{s_2/s_1},\alpha_1 v_{(s_2/s_1)+1}+\dots+\alpha_{s_2/s_1}v_{\alpha_12(s_2/s_1)},\dots,v_{(s_2/s_1)t-s_2}+\dots+\alpha_{(s_2/s_1)t}v_{s_2/s_1}\}$$}
    of $t$ vectors over $\F_{q^{s_2}}$. Then, this list can served by $M$. Let $X_1,\dots, X_t$ be the sets used to serve $\bar L$. For each $i\in[t]$, consider the set $\tilde X_i$ given by the vectors that correspond to the coefficients of the elements in $X_i$ with respect to our basis. Then, the vectors~$\{v_{(i-1)(s_2/s_1)+1},\dots,v_{i(s_2/s_1)}\}$ can be served using the vectors in $\tilde X_i$.
    Therefore, we obtained the inequality on the right-hand side of the lemma. On the other hand, since we are able to serve $\frac{s_2}{s_1}t$ vectors over $\F_{q^{s_1}}$,  with the same matrix we can also serve $t$ vectors over~$\F_{q^{s_2}}$.
\end{proof}
We now provide more precise lower bounds for $\fp(k,t,q)$ and $\fb(k,t,q)$.
\begin{lemma}\label{lemma:lowbound}
    We have
    \begin{equation*}
        \fp(k,t,q)>\frac{kt}{\log_q(et)+1}-t.
    \end{equation*}

\end{lemma}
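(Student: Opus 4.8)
The plan is to derive the lower bound from a counting/entropy argument on the columns of a generator matrix. Let $M \in \F_q^{k\times n}$ realize $n = \fp(k,t,q)$, so $M$ can serve the list $\{v^t\}$ for every nonzero $v \in \F_q^k$. The key quantity to control is the number of distinct one-dimensional subspaces (equivalently, points of $PG(k-1,q)$) that arise as the span of small subsets of columns of $M$. Fix a nonzero vector $v$. To serve $v$ exactly $t$ times we need $t$ pairwise disjoint recovery sets $X_1,\dots,X_t$, each with $v \in \langle M^{X_i}\rangle$. Each $X_i$ has size at least $1$, and we may assume each $X_i$ is a minimal recovery set, so $\sum_i |X_i| \le n$; hence at least, say, half of the $X_i$ have size at most $2n/t$, and more usefully the average size is at most $n/t$. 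The first step is therefore to conclude that for every line $\langle v\rangle$ there is a recovery set for $v$ of size at most $\lfloor n/t \rfloor$.

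Next I would bound, for a fixed subset size $r$, how many distinct lines $\langle v\rangle$ of $PG(k-1,q)$ can be ``covered'' by an $r$-subset of columns: an $r$-subset of columns spans a subspace of dimension at most $r$, which contains at most $(q^{r}-1)/(q-1) < q^{r}/(q-1)$ points. Summing over all $\binom{n}{r}$ subsets of size $r$ with $r$ running up to $R := \lfloor n/t\rfloor$, and using the crude bound $\binom{n}{r} \le n^r/r! \le (en/r)^r$, the total number of lines that admit a recovery set of size at most $R$ is at most
\begin{equation*}
\sum_{r=1}^{R} \binom{n}{r}\,\frac{q^{r}}{q-1} \le \frac{1}{q-1}\sum_{r=1}^{R}\left(\frac{enq}{r}\right)^{r} \le \frac{R}{q-1}\left(enq\right)^{R}.
\end{equation*}
Since $M$ serves every one of the $(q^k-1)/(q-1)$ lines, every line must be covered, so
\begin{equation*}
\frac{q^k-1}{q-1} \le \frac{R}{q-1}\left(enq\right)^{R},
\end{equation*}
i.e. $q^k \le R\,(enq)^{R}$ with $R = \lfloor n/t\rfloor \le n/t$. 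Taking $\log_q$ of both sides gives $k \le \log_q R + R\log_q(enq) = \log_q R + R(\log_q(en)+1)$, and substituting $R \le n/t$ and $R \le n$ (so $\log_q R \le \log_q n$) yields roughly $k \le \frac{n}{t}\bigl(\log_q(en)+1\bigr) + \log_q n$. The final step is a bit of bookkeeping: one rewrites $n$ in terms of $t$ using the trivial bound $n \le kt$ from Lemma~\ref{lemma:basicbound} to replace $\log_q(en)$ by $\log_q(et) + \log_q k$, absorb lower-order terms, and rearrange into the clean form $\fp(k,t,q) > \frac{kt}{\log_q(et)+1} - t$.

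The main obstacle I anticipate is getting the constants and the logarithmic term exactly right so that the stated inequality comes out with $\log_q(et)+1$ in the denominator and a clean $-t$ correction, rather than something weaker like $\log_q(ekt)$ or with an extra additive $O(\log k)$ term. This will require being slightly smarter than the crude union bound above — for instance, restricting attention to recovery sets of a single well-chosen size $r \approx n/t$ rather than summing over all sizes, or using that the recovery sets for all $t$ repetitions of $v$ are disjoint to get a cleaner averaging bound, so that the $\binom{n}{r}$ factor is what controls the estimate and the $\log_q n$ term can be folded into the $\log_q(et)+1$ denominator. A secondary technical point is justifying the reduction to minimal recovery sets of bounded size for \emph{every} line simultaneously; this is where disjointness of $X_1,\dots,X_t$ and the pigeonhole principle on their sizes does the work.
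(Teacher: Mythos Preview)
Your counting strategy is exactly the paper's, and your diagnosis of the obstacle is accurate: the crude bound $\binom{n}{R}\le (en/R)^R$ with $R=\lfloor n/t\rfloor$ puts $n$ inside the logarithm, and neither the substitution $n\le kt$ nor summing over all sizes $r\le R$ will clean this up to $\log_q(et)+1$ with a bare $-t$ correction. The fix is a small but decisive change in what you take as the pivot variable.

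Instead of setting $R=\lfloor n/t\rfloor$ and then trying to massage the algebra, the paper \emph{defines}
\[
s=\max\Bigl\{i\in\N:\binom{it}{i}q^i<q^k\Bigr\}
\]
and proves two things separately. First, $n>st$: if $n\le st$ then $\binom{n}{s}q^s\le\binom{st}{s}q^s<q^k$, so some $v\in\F_q^k$ lies in the span of no $s$-subset of columns; serving $v$ then forces every recovery set to have size $\ge s+1$, hence $n\ge(s+1)t$, a contradiction. Second, by maximality of $s$,
\[
q^k\le\binom{(s+1)t}{s+1}q^{s+1}<\Bigl(\frac{e(s+1)t}{s+1}\Bigr)^{s+1}q^{s+1}=(et)^{s+1}q^{s+1},
\]
so $(s+1)(\log_q(et)+1)>k$. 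Combining gives $n>st>t\bigl(k/(\log_q(et)+1)-1\bigr)$, which is the stated bound.

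The point is that the binomial coefficient you need to estimate is $\binom{(s+1)t}{s+1}$, whose standard upper bound $(e\cdot(s+1)t/(s+1))^{s+1}=(et)^{s+1}$ contains $et$ automatically, with no reference to $n$. Your version $\binom{n}{R}$ mixes $n$ back in and creates the circularity you noticed. Once you adopt the paper's definition of $s$, all of your ``bookkeeping'' concerns disappear and the proof is four lines.
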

\begin{proof}
    Let $n=\fp(k,t,q)$. We define $s\in\N$ as
    $$s=\max\left\{i\in\N:\binom{it}{i}q^i<q^k\right\}.$$
    We first prove that $n>st$. Assume towards a contradiction that $n \le st$. Then
    $$\binom{n}{s}q^s \le \binom{st}{s}q^s < q^k.$$
    For a given matrix $M$ that realizes $\fp(k,t,q)$ we can find a vector $v\in\F_q^k$ that does not belong to any linear space generated by $s$ columns of $M$.
    Therefore, we would obtain $n\geq(s+1)t$, a contradiction. Thus, $n > st$. 
    By the definition of $s$ we have
    $$q^k \le \binom{(s+1)t}{s+1}q^{s+1} < \left(\frac{(s+1)te}{s+1}\right)^{s+1}q^{s+1},$$
    and therefore $(s+1)(\log_q(et)+1)>k$. We conclude by isolating $s$ and multiplying by $t$.
\end{proof}

The ideas used in the previous proof
allow us to establish the following result as well.

\begin{proposition}
\label{prop:bounds_on_q}
Let $1 \le s \le k-1$. If $q^{k-s} \ge \binom{t(s+1)}{s}$, then $\fb(k,t,q) \ge t(s+1)$.    
\end{proposition}
    \begin{proof}
        Suppose $\fb(k,t,q) \le t(s+1)-1$ and let $M$ achieve $\fb(k,t,q)$. The groups of $s$ columns of $M$ span at most
        $\binom{t(s+1)-1}{s}q^s$ vectors. By assumption, we have
        $$\binom{t(s+1)-1}{s} < \binom{t(s+1)}{s} \le q^{k-s}.$$
        This means that there exists a vector $v$ that needs at least $s+1$ columns to be recovered. But then $\fb(k,t,q)\ge t(s+1)$, a contradiction.
    \end{proof}

Observe that Proposition \ref{prop:bounds_on_q} when $s=k-1$ says that $\fb(k,t,q) \ge kt.$ However, we know by Lemma \ref{lemma:basicbound} that $\fb(k,t,q) \le kt$. Therefore, Proposition \ref{prop:bounds_on_q} gives 
the following corollary.

\begin{corollary}\label{corollary:lowb}
If $q \ge \binom{kt}{k-1}$, then $\fb(k,t,q) = kt$.
\end{corollary}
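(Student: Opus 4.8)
The plan is to deduce the corollary directly from Proposition~\ref{prop:bounds_on_q} together with the trivial upper bound of Lemma~\ref{lemma:basicbound}. Concretely, I would apply Proposition~\ref{prop:bounds_on_q} with the specific choice $s = k-1$. For this to be legal we need $1 \le k-1 \le k-1$, which holds whenever $k \ge 2$ (the case $k=1$ is trivial, since $\fb(1,t,q) = t = 1\cdot t$ always, and the binomial $\binom{t}{0}=1 \le q$ is vacuous). With $s = k-1$ the hypothesis $q^{k-s} \ge \binom{t(s+1)}{s}$ becomes precisely $q^{1} \ge \binom{kt}{k-1}$, which is exactly the assumption of the corollary.

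Having checked that the hypothesis of Proposition~\ref{prop:bounds_on_q} is met, its conclusion gives $\fb(k,t,q) \ge t(s+1) = tk$. On the other hand, Lemma~\ref{lemma:basicbound} supplies the reverse inequality $\fb(k,t,q) \le tk$ unconditionally. Combining the two yields $\fb(k,t,q) = tk$, as claimed. I would write this up in a couple of sentences, being careful to note the harmless edge case $k=1$ so the statement is literally correct for all $k \ge 1$ (or, alternatively, simply assume $k \ge 2$ as the rest of Section~3 does).

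There is essentially no obstacle here: the corollary is a one-line specialization, and the two inputs (the counting bound of Proposition~\ref{prop:bounds_on_q} and the concatenation-of-identities upper bound of Lemma~\ref{lemma:basicbound}) have already been established. The only thing to watch is that the substitution $s=k-1$ is within the allowed range and that the binomial coefficient in the hypothesis is transcribed correctly as $\binom{kt}{k-1}$ rather than $\binom{kt}{k}$ or similar; a quick sanity check with small parameters (e.g.\ $k=2$, where the condition reads $q \ge \binom{2t}{1} = 2t$, matching the expectation that a very large field lets two generic points plus repetitions do everything) confirms the indexing.

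\begin{proof}
If $k = 1$ the claim is immediate since $\fb(1,t,q) = t = 1\cdot t$ for every $q$. Assume now $k \ge 2$ and apply Proposition~\ref{prop:bounds_on_q} with $s = k-1$; this is permitted because $1 \le k-1 \le k-1$. The hypothesis $q^{k-s} \ge \binom{t(s+1)}{s}$ then reads $q \ge \binom{kt}{k-1}$, which holds by assumption, so the proposition yields $\fb(k,t,q) \ge t(s+1) = kt$. Together with the upper bound $\fb(k,t,q) \le kt$ from Lemma~\ref{lemma:basicbound}, this gives $\fb(k,t,q) = kt$.
\end{proof}
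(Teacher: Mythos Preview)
Your proof is correct and follows exactly the paper's approach: specialize Proposition~\ref{prop:bounds_on_q} to $s=k-1$ to obtain the lower bound $\fb(k,t,q)\ge kt$, and combine with the upper bound $\fb(k,t,q)\le kt$ from Lemma~\ref{lemma:basicbound}. Your extra care with the $k=1$ edge case is a nice touch that the paper omits.
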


We continue with a bound on $\fb(k,t,q)$ that does not assume any constraints on the field size, in contrast to Proposition \ref{prop:bounds_on_q}.

\begin{theorem}\label{theorem:majorbound}
 We have that
\begin{equation*}
    (t(q-1)+1)^{\fb(k,t,q)}\geq (q^k-1)^t.
\end{equation*}
In particular, for $q\neq 2$ we have that
\begin{equation*}
    \fb(k,t,q)\geq \frac{tk}{\log_{q-1}(t(q-1)+1)}.
\end{equation*}
\end{theorem}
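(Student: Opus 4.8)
\textbf{Proof plan for Theorem \ref{theorem:majorbound}.}

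The plan is to run a counting argument in the spirit of Lemma~\ref{lemma:lowbound} and Proposition~\ref{prop:bounds_on_q}, but now tracking recovery sets simultaneously for all $t$ coordinates of a request list. Let $M \in \F_q^{k\times n}$ with $n=\fb(k,t,q)$ be a generator matrix witnessing the minimum length. The key observation is the following: for \emph{any} choice of a vector $v\in \F_q^k\setminus\{0\}$, the matrix $M$ must be able to serve the list $L=\{v^t\}$, so there are pairwise disjoint recovery sets $X_1,\dots,X_t$, each spanning $v$. Instead of going through a single $v$, the idea is to count how many distinct lists of length $t$ can be encoded as "tuples of disjoint recovery-set data" coming from $M$, and to compare this with the total number $(q^k-1)^t$ of ordered lists of $t$ nonzero vectors (or rather, I would count more cleverly to avoid the obvious overcount; see below).

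Here is the step I would actually carry out. First, assign to each column $M^j$, for $j\in\{1,\dots,n\}$, and to each list position $i\in\{1,\dots,t\}$, one of $t(q-1)+1$ "labels": either the symbol $\ast$ (meaning column $j$ is not used for position $i$), or a pair $(i,\lambda)$ with $\lambda\in\F_q^*$ — wait, this needs care, so let me phrase it as: a function $\phi\colon\{1,\dots,n\}\to\{0\}\cup(\{1,\dots,t\}\times\F_q^*)$, which has $(t(q-1)+1)^n$ choices. Given such a $\phi$, for each $i$ let $S_i=\phi^{-1}(\{i\}\times\F_q^*)$ and define $w_i=\sum_{j\in S_i}\lambda_j M^j$ where $\lambda_j$ is the scalar recorded by $\phi(j)$; these sets $S_1,\dots,S_t$ are automatically pairwise disjoint. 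The crucial claim is that every list $L=\{v_1,\dots,v_t\}$ of nonzero vectors served by $M$ arises as $(w_1,\dots,w_t)$ for at least one such $\phi$: indeed, take disjoint recovery sets $X_i$ for $v_i$, and within $\langle M^{X_i}\rangle$ write $v_i$ as a linear combination of columns in $X_i$ with some coefficients, padding with zero coefficients; set $\phi(j)=(i,\lambda_j)$ on the columns with nonzero coefficient and $\phi(j)=0$ elsewhere. Hence the number of lists served is at most $(t(q-1)+1)^n$. Since $M$ serves \emph{every} list of $t$ nonzero vectors, and there are $(q^k-1)^t$ such ordered lists, we get $(t(q-1)+1)^{\fb(k,t,q)}\geq (q^k-1)^t$, which is the first inequality. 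The second inequality follows by taking $\log_{q-1}$ of both sides and using $\log_{q-1}(q^k-1)\ge \log_{q-1}(q^k)\cdot\frac{\log_{q-1}(q^k-1)}{\log_{q-1}(q^k)}$ — more simply, $(q^k-1)^t \ge$ something; actually I would just write $(q^k-1)^t\ge q^{(k-1)t}$ for $q\ge 3$ hmm, that loses too much. Cleaner: from $(t(q-1)+1)^n\ge (q^k-1)^t$ take logs base $q-1$ to get $n\log_{q-1}(t(q-1)+1)\ge t\log_{q-1}(q^k-1)$, then bound $\log_{q-1}(q^k-1)\ge \log_{q-1}(q^{k-1}) \cdot \frac{k}{k-1}$ — no. The honest move is $\log_{q-1}(q^k-1) \ge k\log_{q-1}q \ge k$ since $q\ge q-1$ so $\log_{q-1}q\ge 1$; wait we also need $q^k-1$ vs $q^{k}$, and $\log_{q-1}(q^k-1)$ is slightly less than $k\log_{q-1}q$, not more. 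I will instead note $q^k - 1 \ge q^{k-1}(q-1)$ for all $q\geq 2$, hence $\log_{q-1}(q^k-1)\ge (k-1)\log_{q-1}q + 1 \ge k$; combined with the first inequality this yields $\fb(k,t,q)\ge tk/\log_{q-1}(t(q-1)+1)$.

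The step I expect to be the genuine obstacle is verifying the \emph{claim} that every served list is realized by some $\phi$ cleanly, and in particular being careful that the encoding map is well-defined regardless of which recovery sets or which linear combinations are chosen — the count is an upper bound on served lists precisely because distinct $\phi$'s may give the same list, which is fine, but one must make sure no served list is \emph{missed}. A subtlety is that a recovery set $X_i$ for $v_i$ guarantees $v_i\in\langle M^{X_i}\rangle$ but the coefficients expressing $v_i$ need not be unique and some may be zero; writing $v_i$ as a combination with \emph{all} coefficients in $\F_q^*$ on a sub-support is always possible (drop the zero terms), and $v_i\ne 0$ ensures the sub-support is nonempty, so the label $(i,\lambda)$ with $\lambda\in\F_q^*$ suffices and the symbol $0$ covers the unused columns — this is exactly why the label alphabet has size $1+t(q-1)$ and not $1+tq$. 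Once that bookkeeping is pinned down, the rest is the short computation above.
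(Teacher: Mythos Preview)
Your argument is correct and is essentially the paper's own proof, just in different clothing: the paper encodes your function $\phi$ as a matrix $N\in\F_q^{n\times t}$ whose rows have at most one nonzero entry (so $|\mathcal{N}|=(t(q-1)+1)^n$), notes that every list $M\in\F_q^{k\times t}$ with nonzero columns satisfies $GN=M$ for some such $N$, and concludes $(t(q-1)+1)^n\ge(q^k-1)^t$. For the second inequality the paper uses the one-line bound $q^k-1\ge(q-1)^k$, which gives $\log_{q-1}(q^k-1)\ge k$ directly and is a bit cleaner than your route via $q^k-1\ge q^{k-1}(q-1)$, but both are valid.
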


\begin{proof}
    Let $M$ be a matrix that achieves $\fb(k,t,q)$. We denote by $\mathcal{M}$ the set of matrices in $\smash{\F_q^{k\times t}}$ with all columns different from zero, and by $\mathcal{N}$ the set of matrices in $\smash{\F_q^{\fb(k,t,q)\times t}}$ whose rows are different from zero in at most one entry. We have that $\smash{\lvert \mathcal{M}\rvert=(q^k-1)^t}$ and~$\smash{\lvert \mathcal{N}\rvert=(t(q-1)+1)^{\fb(k,t,q)}}$. Since for every matrix $G\in\mathcal{M}$ there exists a matrix $N\in\mathcal{N}$ such that $MN=G$, we have
    \begin{equation*}
        (t(q-1)+1)^{\fb(k,t,q)}\geq (q^k-1)^t.
    \end{equation*}
    Moreover, when $q \neq 2$, by taking the logarithm on both sides we obtain
    \begin{equation*}
        \fb(k,t,q)\geq\frac{t\log_{q-1}(q^k-1)}{\log_{q-1}(t(q-1)+1)}\geq\frac{t\log_{q-1}((q-1)^k)}{\log_{q-1}(t(q-1)+1)}=\frac{tk}{\log_{q-1}(t(q-1)+1)},
    \end{equation*}
    which concludes the proof.
\end{proof}

We pause to examine a specific instance of the problem, namely the case of serving two binary vectors. Thus, we investigate $\fb(k,t,q)$ when $(t,q)=(2,2)$. We present a construction that can serve any two nonzero vectors in $\F_2^k$. We note that it is an open problem to compute
$\fb(k,2,q)$ for any alphabet size.

\begin{lemma}
\label{lem:upper_bound_t=2}
$\fb(k,2,2) \le \left\lceil \frac{3k}{2} \right\rceil.$
\end{lemma}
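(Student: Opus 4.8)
The plan is to construct, for each dimension $k$, an explicit generator matrix $G \in \F_2^{k \times n}$ with $n = \lceil 3k/2 \rceil$ that can serve any list of two nonzero vectors. By Proposition~\ref{prop:increase+additive} (subadditivity in $k$), it suffices to handle a set of small base cases and then glue them together: since $\fb(k,2,2) \le \fb(k_1,2,2) + \fb(k_2,2,2)$ whenever $k = k_1 + k_2$, and since $\lceil 3k/2 \rceil = \lceil 3k_1/2\rceil + \lceil 3k_2/2\rceil$ when at least one of $k_1,k_2$ is even, it is enough to establish the bound for $k=2$ and $k=3$ and then write a general $k$ as a sum of $2$'s and $3$'s. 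The base cases $\fb(2,2,2)\le 3$ and $\fb(3,2,2)\le 5$ are already available from Lemma~\ref{lem:trivial_values}, so in principle the lemma follows immediately from subadditivity. I would nonetheless state this reduction carefully: write $k = 2a$ or $k = 2a+1$, decompose as $a$ copies of the $2$-dimensional block (respectively $a-1$ copies of the $2$-block plus one $3$-block), and check the arithmetic $\lceil 3k/2\rceil = 3a$ or $3a+2$ matches the sum of the block lengths.

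An alternative, more self-contained route I would present in parallel is a direct construction: take $G$ to be the concatenation, along a block-diagonal-like pattern, of $\lfloor k/2 \rfloor$ copies of the $2\times 3$ matrix whose columns span all of $PG(1,2)$ (i.e.\ $\begin{pmatrix}1&0&1\\0&1&1\end{pmatrix}$), possibly with one extra coordinate block of the form $\begin{pmatrix}1&0&1\\0&1&1\\0&0&0\end{pmatrix}$-type $3\times 5$ gadget when $k$ is odd. The point is that in the $2$-dimensional block, \emph{any} single nonzero vector has a recovery set of size $1$, and \emph{any two} nonzero vectors can be assigned disjoint recovery sets within those $3$ columns (this is exactly the content of the $\fb(2,2,2)=3$ computation). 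To serve a list $\{v_1, v_2\}$ of two nonzero vectors in $\F_2^k$, I would look at where $v_1$ and $v_2$ first differ in their ``block profiles'' and route the two requests into different blocks, or into the same block when their supports already force disjoint recovery sets there — this is the usual splitting argument for batch codes built from small seeds.

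The main obstacle is the bookkeeping in the direct construction: one must verify that for \emph{every} pair of nonzero vectors, not just generic ones, the two recovery sets can be chosen disjoint — the delicate case being when $v_1$ and $v_2$ are supported on (essentially) the same pair of coordinate blocks, so that one cannot simply send them to separate blocks and must instead use the fact that a single $2$-dimensional Simplex block can already serve two distinct requests with disjoint sets, while a vector supported on several blocks must have its recovery set split across those blocks consistently. Because Lemma~\ref{lem:trivial_values} already certifies the base cases by (small) direct or computer check, I would in the end favor the clean subadditivity argument: it reduces the whole proof to the arithmetic identity on ceilings plus an invocation of Proposition~\ref{prop:increase+additive} and Lemma~\ref{lem:trivial_values}, and relegates the explicit gadget description to a remark. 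Either way, combined with Lemma~\ref{lem:lower_bound_t=2} this yields $\fb(k,2,2) = \lceil 3k/2\rceil$, proving Theorem~\ref{thm:t=q=2}.
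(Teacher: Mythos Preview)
Your subadditivity argument is correct and is cleaner than the route the paper takes. The paper builds essentially the same block-diagonal matrix you describe (phrased as $G_{\textnormal{even}} = (I_{2m}\mid r_1\cdots r_m)$ with $r_i=e_{2i-1}+e_{2i}$, and an analogous $G_{\textnormal{odd}}$ with one extra column $e_1$), but instead of invoking subadditivity in $k$ it verifies by hand, via a case analysis on the triples $J_i=(|\sigma(a)\cap p_i|,\,|\sigma(b)\cap p_i|,\,|\sigma(a)\cap\sigma(b)\cap p_i|)$ for each coordinate pair $p_i=\{2i-1,2i\}$, that each $2\times 3$ block needs at most three columns to handle its share of the two requests. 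Your reduction to Lemma~\ref{lem:trivial_values} plus Proposition~\ref{prop:increase+additive} short-circuits all of that bookkeeping: once subadditivity in $k$ is available, $\fb(2,2,2)\le 3$ and $\fb(3,2,2)\le 5$ immediately give $\fb(2a,2,2)\le 3a$ and $\fb(2a+3,2,2)\le 3(a{-}1)+5=3a+2$ (and $\fb(1,2,2)=2$ from Lemma~\ref{lem:trivial_values} handles the remaining parity base case), matching $\lceil 3k/2\rceil$. The paper's longer argument has the minor advantage of being self-contained and of exhibiting the recovery scheme explicitly; your argument buys brevity at the cost of leaning on the base cases being checked elsewhere, which they are.
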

\begin{proof}
Let $m$ be a positive integer. The lemma is equivalent to show that $\fb(2m,2,2) \le 3m$ and $\fb(2m+1,2,2) \le 3m+2$. Consider the matrices
$$M_{\textnormal{even}} = \left( I_{2m} \mid r_1 \ldots r_m\right) \mbox{ and }M_{\textnormal{odd}} = \left( I_{2m+1} \mid r'_1 \ldots r'_m e_1\right),$$

where~$r_i = e_{2i-1}+e_{2i}$ and $r'_i = e_{2i}+e_{2i+1}$ for $i \in \{1,\ldots,m\}$. Observe that the sum of the columns in both matrices equal to the zero vector, proving that any vector can be served twice. Thus, we assume that $L =\{a,b\}$ is the list of vectors to be served where $a \neq b$. 

\textbf{Case 1: }Let $k=2m$. We will show that $M_{\textnormal{even}}$ can be used to serve any two vectors proving $\fb(2m,2,2) \le 3m$. Let $p_i = \{2i-1,2i\}$ for $i \in \{1,\ldots,m\}$. Define $$J_i =(|\sigma(a) \cap p_i|,|\sigma(b) \cap p_i|,|\sigma(a) \cap \sigma(b) \cap p_i|)$$ for $i \in \{1,\ldots,m\}$.
Fix $i \in \{1,\ldots,m\}$. We explain the recovery scheme in Table \ref{table:rec1}. This simply follows from the construction of the matrix $M_{\textnormal{even}}$ and the fact that the columns in the non-systematic part have pairwise disjoint supports. Thus, for any of the $m$ intervals of size one needs at most 3 columns. Thus $\fb(2m,2,2) \le 3m$.

\textbf{Case 2: }Let $k=2m+1$. We will show that $M_{\textnormal{odd}}$ can be used to serve any two vectors, proving that $\fb(2m+1,2,2) \le 3m+2$. Let $p'_i = \{2i,2i+1\}$ for $i \in \{1,\ldots,m\}$. Define $$J_i =(|\sigma(a) \cap p_i|,|\sigma(b) \cap p_i|,|\sigma(a) \cap \sigma(b) \cap p_i|)$$ for $i \in \{1,\ldots,m\}$. Fix $i \in \{1,\ldots,m\}$. We explain the recovery scheme in Table \ref{table:rec2}. This simply follows from the construction of the matrix $M_{\textnormal{odd}}$ and the fact that the columns in the non-systematic part have pairwise disjoint supports. Thus, for any of the $m$ intervals of size one needs at most 3 columns. The only thing left is to check the first coordinates. There, it is possible that $a_1=b_1=1$, forcing one to use $e_1$ both in the systematic and the non-sytematic part of $M_{\textnormal{odd}}$. Therefore, $\fb(2m+1,2,2) \le 3m+2$. \qedhere

\begin{table}
\parbox{.49\linewidth}{
\centering
\resizebox{0.5\textwidth}{!}{
\begin{tabular}{|c| c|}
\hline
$J_i$& number of columns used to recover $L$\\
\hline
 $(0,0,0)$  & none. \\
\hline
$(0,1,0)$  & 1 systematic col.    \\
\hline
$(0,2,0)$  & 1 non-systematic col.  \\
\hline
$(1,0,0)$  & 1 systematic col.  \\
\hline
$(1,1,0)$  & 2 systematic col.   \\
\hline
$(1,1,1)$  & 2 systematic col. and 1 non-systematic col.   \\
\hline
$(1,2,1)$  & 1 systematic col. and 1 non-systematic col.   \\
\hline
$(2,0,0)$  & 1 non-systematic col.   \\
\hline
$(2,1,1)$  & 1 systematic col. and 1 non-systematic col.   \\
\hline
$(2,2,2)$  & 2 systematic col. and 1 non-systematic col.  \\
\hline
\end{tabular}}
\caption{$k=2m$ is even.} \label{table:rec1}
}
\hfill
\parbox{.49\linewidth}{
\centering
\resizebox{0.5\textwidth}{!}{
\begin{tabular}{|c| c|}
\hline
$J_i$& number of columns used to recover $L$\\
\hline
 $(0,0,0)$  & none. \\
\hline
$(0,1,0)$  & 1 systematic col.    \\
\hline
$(0,2,0)$  & 1 non-systematic col.  \\
\hline
$(1,0,0)$  & 1 systematic col.  \\
\hline
$(1,1,0)$  & 2 systematic col.   \\
\hline
$(1,1,1)$  & 2 systematic col. and 1 non-systematic col.   \\
\hline
$(1,2,1)$  & 1 systematic col. and 1 non-systematic col.   \\
\hline
$(2,0,0)$  & 1 non-systematic col.   \\
\hline
$(2,1,1)$  & 1 systematic col. and 1 non-systematic col.   \\
\hline
$(2,2,2)$  & 2 systematic col. and 1 non-systematic col.  \\
\hline
\end{tabular}}
\caption{$k=2m+1$ is odd.} \label{table:rec2}
}
\end{table} 
\end{proof}

It is shown in \cite{zhang2020bounds} that $\fb(k, 2, 2)/k < 3/2$ when
$k$ is sufficiently large. However, the smallest $k$ such that $\fb(k, 2, 2) \le \left\lceil \frac{3k}{2} \right\rceil$ is not known.

By Lemma \ref{lem:upper_bound_t=2} and Proposition \ref{prop:increase+additive}, for $k$ and $t$ even we have $\fb(k,t,2)\leq\frac{3}{4}kt$. By Corollary \ref{corollary:lowb}, we know that the function $\fb(k,t,q)$ for large $q$ stabilizes at the value $kt$. However, we do not know if it is locally increasing. So, another open problem is to describe the behavior of $\fb(k,t,q)$ and $\fb(k,t,q)$ as $q$ varies.

Our next goal is to establish a bound related to Open Problem \ref{openproblem1}. Even though we are not able to give a complete answer, we will prove that the difference between~$\fb(k,q^k+q-2,q)$ and $2(q^k-1)$ is at most linear in $k$. We start by recalling the following result, which we will need in the proof of Lemma~\ref{lemma:hollmann}.

\begin{theorem}[see {\cite[Theorem 4.4]{hollmann2023some}}]\label{hollmann}
    Let $(A,+)$ be a finite abelian group of cardinality $n$, and let $a_1,\dots,a_n$ be
a list in $A$. There exists an ordering $g_1,\dots,g_n$ of the elements of $A$ such that $g_1+a_1,\dots,g_n+a_n$ is a permutation of the elements of $A$ if and only if $a_1+\dots+a_n=0$.
\end{theorem}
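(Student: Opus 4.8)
The statement to prove is Theorem \ref{hollmann}, which is cited as \cite[Theorem 4.4]{hollmann2023some} — a characterization of when a list $a_1,\dots,a_n$ in a finite abelian group $(A,+)$ of order $n$ can be "matched" to an ordering $g_1,\dots,g_n$ of $A$ so that $g_i + a_i$ also runs over all of $A$, the answer being precisely when $\sum a_i = 0$. Since this is quoted as a known result, my job is to sketch a proof of it.

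\textbf{Plan.} The plan is to prove the two directions separately. The necessity direction is immediate: if $g_1,\dots,g_n$ is an enumeration of $A$ and $g_1+a_1,\dots,g_n+a_n$ is a permutation of $A$, then summing all elements of $A$ in two ways gives $\sum_{g\in A} g = \sum_i (g_i + a_i) = \sum_i g_i + \sum_i a_i = \sum_{g\in A} g + \sum_i a_i$, hence $\sum_i a_i = 0$. (One should note in passing that $\sum_{g\in A} g$ is either $0$ or the unique element of order $2$, but this does not matter: it cancels regardless.) The content is the sufficiency direction, and here I would argue by induction on $n = |A|$, the base case $n=1$ being trivial.

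\textbf{Inductive step.} Assume $\sum_{i=1}^n a_i = 0$ with $n\ge 2$. Pick any nontrivial cyclic subgroup; in fact it is cleaner to pick a subgroup $H \le A$ with $A/H$ cyclic of prime order $p$ (take $H$ to be a maximal proper subgroup). Write $m = |H| = n/p$. Group the $a_i$ according to their cosets mod $H$. The main obstacle — and the crux of the argument — is a counting/rearrangement step: one must show the $a_i$ can be partitioned into $p$ blocks of $m$ elements each, one block assigned to each coset of $H$, such that within the block assigned to coset $x+H$ the sum of the entries lies in the correct coset so that, after we fix a bijection between that block and $x+H$, the induction hypothesis applies inside a coset (which is a torsor over $H$). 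Concretely: project everything to $A/H \cong \Z/p$; the multiset of projections $\bar a_i$ sums to $0$ in $\Z/p$; one needs to choose a bijection $\phi$ from the index set to $A$ such that $\overline{\phi(i)} + \overline{a_i}$ realizes each residue exactly $m$ times — this is a statement about $\Z/p$, i.e. the cyclic case, which can be handled directly (for instance by sorting the $\bar a_i$ and using a greedy/rotation argument, or by invoking the classical result that in $\Z/p$ one can always do this when the sum is $0$). Once the coset-level assignment is fixed, within each coset $x+H$ we have $m$ group elements $a_i$ lying in a fixed coset $y+H$ and we want to biject $x+H$ with them so that the translates fill $y+x + H$... wait, I need to be careful: we want $g_i + a_i$ to fill $A$, with exactly $m$ of them in each coset. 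After choosing which coset each $g_i$ comes from and which coset each sum $g_i+a_i$ lands in (consistently: if $g_i \in x+H$ and $a_i \in z+H$ then $g_i + a_i \in (x+z)+H$), the within-coset problem becomes: given $m$ elements $h_1 + a'_1, \dots$ in $H$ (after translating representatives) whose sum is $0$ in $H$, apply the induction hypothesis to the group $H$. The sum being $0$ in $H$ is forced by the global hypothesis $\sum a_i = 0$ together with the coset bookkeeping, and this is exactly where the overall sum-zero condition gets used beyond the prime-order quotient.

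\textbf{Alternative / fallback.} If the coset-partition bookkeeping proves too delicate to present cleanly, an alternative is to prove the cyclic case $A = \Z/n$ first by an explicit argument (sort $a_1 \le \cdots \le a_n$ as integers in $\{0,\dots,n-1\}$ and set $g_i = -(a_1 + \cdots + a_i) \bmod n$ or a similar telescoping choice, checking injectivity of $i \mapsto g_i + a_i$ using the sum-zero condition and the ordering), and then bootstrap to general finite abelian groups via the structure theorem, writing $A$ as a product of cyclic groups and inducting on the number of factors, at each stage peeling off one cyclic factor and using the cyclic case on the quotient together with the inductive hypothesis on the remaining factor — the same two-layer structure as above but with the prime-power quotient replaced by a cyclic direct factor. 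I expect the cyclic base case to be the part needing the most care, since it is where the sum-zero hypothesis does real work; the reduction from the general case to the cyclic case is conceptually routine coset bookkeeping but must be written carefully to keep the sums in the right cosets.
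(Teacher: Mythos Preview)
The paper does not prove Theorem~\ref{hollmann}; it is quoted from~\cite{hollmann2023some} (and ultimately from Hall~\cite{hall1952combinatorial}) and used as a black box in the proof of Lemma~\ref{lemma:hollmann}. So there is nothing in the paper to compare against; I can only assess whether your sketch stands on its own.

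Your necessity direction is correct. The sufficiency sketch has two genuine gaps.

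\emph{The within-coset reduction does not work as stated.} For the induction hypothesis on $H$ to apply to a block $B_k$ (assigned to input coset $r_k+H$), the outputs $g_i+a_i$ for $i\in B_k$ would have to fill a \emph{single} coset of $H$, which forces all $a_i$ with $i\in B_k$ to lie in one coset of~$H$. That cannot always be arranged. Take $A=\Z/4\Z$, $H=\{0,2\}$, $a=(0,1,1,2)$ (sum $\equiv 0$): the residues $\bar a_i$ are $0,1,1,0$, so the only coset-pure partition into two blocks of size~$2$ is $\{1,4\}\cup\{2,3\}$, and whichever way you assign these blocks to the two input cosets, both output cosets coincide. Yet the theorem holds here, e.g.\ with $g=(0,1,2,3)$. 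In a correct argument the indices contributing to a fixed output coset come from several input cosets; the constraints are coupled across cosets and do not decompose into $p$ independent $H$-problems in the way you describe. Untangling this coupling is precisely the nontrivial part of Hall's inductive step.

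\emph{The cyclic fallback is also broken.} With $g_i=-(a_1+\cdots+a_i)$ one has $g_i+a_i=-(a_1+\cdots+a_{i-1})$, so the construction succeeds exactly when the partial sums $a_1+\cdots+a_i$ are pairwise distinct mod~$n$. This already fails for $a=(0,\ldots,0)$, and less trivially for $a=(0,0,2,2)$ in $\Z/4\Z$: no ordering of this multiset has distinct partial sums mod~$4$, yet $g=(0,2,1,3)$ solves the problem. Hall's treatment of the cyclic case does not go through partial sums alone and requires a further combinatorial argument that your sketch does not supply.
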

\begin{lemma}\label{lemma:hollmann}
    For every $k\in\N$ we have
    \begin{equation*}
        \fb(k,q^k,q)\leq 2(q^k-1).
    \end{equation*}
\end{lemma}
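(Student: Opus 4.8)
The plan is to construct, for each $k$, an explicit matrix $G \in \F_q^{k \times 2(q^k-1)}$ that serves any list of $q^k$ nonzero vectors of $\F_q^k$, which gives the bound $\fb(k,q^k,q) \le 2(q^k-1)$. The natural choice is to take $G$ to consist of two copies of each nonzero vector of $\F_q^k$ as columns; equivalently, $G$ is the juxtaposition of two copies of a matrix $M$ whose columns are exactly the $q^k-1$ nonzero vectors, each once. This matches the known binary case $\fb(k,2^k,2) = 2(2^k-1)$ recalled in Equation~\eqref{eq:nexttoFBC}.

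Given a list $L$ of $q^k$ nonzero vectors $v_1,\dots,v_{q^k}$, the goal is to produce $q^k$ pairwise disjoint recovery sets inside the $2(q^k-1)$ columns of $G$. The key idea is to use recovery sets of size at most $2$: for most vectors $v_i$ we want a pair of columns $\{w_i, z_i\}$ with $w_i + z_i = v_i$, where the $w_i$'s and $z_i$'s together use each nonzero element of $\F_q^k$ at most twice. I would reduce this to an application of Theorem~\ref{hollmann} with $A = \F_q^k$. First, handle the zero-sum defect: let $\sigma = \sum_{i=1}^{q^k} v_i$. The trick is to split the list so that Theorem~\ref{hollmann} applies. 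Concretely, I would pair up the columns of $G$ and the requests cleverly — since $G$ contains two copies of the all-nonzero vector set, I want to find an ordering $g_1,\dots,g_{q^k-1}$ of $\F_q^k \setminus\{0\}$ (or of all of $\F_q^k$, after a small modification) such that $g_j + (\text{something}) $ runs over $\F_q^k$. The cleanest route: append the zero vector conceptually and note $|\F_q^k| = q^k$ matches $|L|$; apply Theorem~\ref{hollmann} to the group $A = \F_q^k$ and the list $a_1,\dots,a_{q^k}$ obtained from $L$ after subtracting off a fixed element from one entry to force the sum to be $0$ (replace $v_{q^k}$ by $v_{q^k} - \sigma$, or handle $v_{q^k}$ separately using the second copy). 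Then Theorem~\ref{hollmann} yields an ordering $g_1,\dots,g_{q^k}$ of $\F_q^k$ with $\{g_i + a_i\} = \F_q^k$. For each $i$, the recovery set $\{g_i,\ g_i + a_i = g_i + v_i\}$ uses $v_i = (g_i + v_i) - g_i$; the first coordinates $g_i$ are all distinct (an ordering of $\F_q^k$), and likewise the $g_i + v_i$ are all distinct, so across all $i$ each nonzero vector of $\F_q^k$ is used at most twice — once as a "$g$" and once as a "$g+v$" — which is exactly the multiplicity available in $G$. Zero vectors appearing as $g_i$ or $g_i + v_i$ correspond to size-one recovery sets and only help. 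One must also verify disjointness of the recovery sets as subsets of column indices (not just as vectors): since the two copies give each nonzero vector multiplicity $2$, a vector used twice can be assigned its two distinct column indices, and a careful bookkeeping argument (e.g., a bipartite matching between the "$g$-slots" and columns, and between the "$(g+v)$-slots" and columns) closes this.

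The main obstacle I anticipate is precisely this last bookkeeping step: Theorem~\ref{hollmann} gives equalities of multisets of vectors, but "serving" requires pairwise disjoint index sets, and a single nonzero vector $w$ might be required as $g_i$ for one $i$ and simultaneously as $g_j + v_j$ for a different $j$ — that is fine since there are two copies — but the subtle case is when $w$ is needed three or more times or when the forced modification of $v_{q^k}$ reintroduces a conflict. I would handle this by arguing that the ordering from Theorem~\ref{hollmann} makes $(g_i)_i$ a bijection $\{1,\dots,q^k\}\to \F_q^k$ and $(g_i+v_i)_i$ another bijection, so the total multiset of vectors used is exactly $2\cdot(\F_q^k) = \{0^2\} \cup \{w^2 : w \ne 0\}$; dropping the (at most two) size-one sets coming from the zero vector leaves each nonzero vector used at most twice, matching $G$ exactly, and an explicit assignment of the two copies to the (at most two) slots requiring that vector completes the construction. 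A secondary technical point is dealing correctly with the entry $v_{q^k}$ and the sum $\sigma$; if $\sigma \ne 0$ one can instead split off one request and serve it greedily using leftover columns, or observe that the doubled matrix has enough slack to absorb one "bad" vector with a size-$\le 2$ recovery set chosen by hand. I would isolate this as a short preliminary reduction before invoking Theorem~\ref{hollmann}.
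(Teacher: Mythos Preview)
Your overall strategy coincides with the paper's: take $G$ to be two copies of every nonzero vector of $\F_q^k$, adjust one entry of the list so the sum is zero, and invoke Theorem~\ref{hollmann} on $A=\F_q^k$ to get two bijections $(g_i)$ and $(g_i+a_i)$ onto $\F_q^k$, then assign copy~1 to the $g$-slots and copy~2 to the $(g+a)$-slots. Your index-bookkeeping concern is in fact harmless for exactly the reason you give.

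The genuine gap is the step you flag as a ``secondary technical point'': recovering the \emph{original} $v_{q^k}$ after you have replaced it by $v_{q^k}-\sigma=-x_L$. Your suggested fixes (``serve it greedily with leftover columns'', ``enough slack to absorb one bad vector'') do not work as stated. After using the pairs $\{g_i,g_i+a_i\}$ for $i<q^k$, the only unused column copies are $g_n$ (in copy~1) and $h_n=g_n-x_L$ (in copy~2); there is no reason $a_n$ lies in $\langle g_n,h_n\rangle$. The paper closes this with a translation: since any shift of a permutation of $A$ is again a permutation, one replaces $(g_i)$ by $(g_i+c)$ with $c=x_L-g_n+a_n$. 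Then the element missing from the shifted first-coordinate list $\{g_i+a_i+c:i<n\}$ is exactly $g_n-x_L+c=a_n$, so the singleton $\{a_n\}$ is available and disjoint from the previous $n-1$ pairs. (The paper also separately ensures $x_L\neq 0$ by swapping $a_{n-1}$ and $a_n$ when necessary, after first disposing of the constant-list case.) Once you insert this translation, your argument is the paper's proof.
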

\begin{proof}
Let $A = \F_q^k$ and $n=q^k$. Let $G$ be a matrix in which every nonzero vector in $\F_q^k$ appears exactly twice as a column and consider a list $L=\{a_1,\dots,a_n\}$ of nonzero elements in $A$. If there exists $a \in A$ such that $a=a_i$ for every $i\in\{1,\dots,n\}$, then by Theorem~\ref{hollmann} there exists an ordering $g_1,\ldots,g_n$ of the element of $A$ such that $a+g_1,\dots,a+g_n$ is a permutation of the elements of $A$. Therefore, $G$ can serve the vector $a$ exactly $n$ times by considering the recovery sets $\{g_i,a+g_i\}$ for $i \in \{1,\ldots,n\}$. 

Now assume that the list has at least two different elements, without loss of generality say $a_{n-1} \neq a_n$. Define 
$x_L$ as the sum of the first $n-1$ elements in the list $L$. We start by showing that one can always assume $x_L \neq 0$. If $x_L=0$, then $x_L - a_{n-1} = -a_{n-1}$. Consider a new list $L'=\{a_1,\ldots,a_{n-2},a_n,a_{n-1}\}$. We have $x_{L'} \neq 0$, since $a_{n-1} \neq a_n$ by assumption. Therefore, for the remainder of the proof we assume that $x_L \neq 0$.

Consider the list $L'=\{a_1,\ldots,a_{n-1},-x_L\}$. By Theorem \ref{hollmann}, there exists an ordering $g_1,\dots,g_n$ of the elements of $A$ such that $g_1+a_1,\dots,g_n - x_L$ is a permutation of the elements of~$A$. Then $g_1+a_1+(x_L-g_n+a_n),\dots, g_n-x_L+(x_L-g_n+a_n)$ is a a permutation of the elements of $A$ as well. Therefore, $G$ can serve the list $L'$ with the sets 
\begin{align*}
X_1&=\{g_1+a_1+(x_L-g_n+a_n),g_1+(x_L-g_n+a_n)\}, \\
&\,  \; \vdots \\
X_{n-1}&=\{g_{n-1}+a_{n-1}+(x_L-g_n+a_n),g_{n-1}+(x_L-g_n+a_n)\}, \\
X_n&=\{g_n-x_L+(x_L-g_n+a_n),g_n+(x_L-g_n+a_n)\} =\{a_n,g_n+(x_L-g_n+a_n)\}.
\end{align*}
We conclude that $G$ can serve the list $L$ with the sets $X_1,\dots,X_{n-1},\{a_n\}$.
\end{proof}

Since the function $\fb(k,t,q)$ is subadditive in $t$, the previous lemma implies the following bound.

\begin{proposition}\label{proposition:conjbound}
$2(q^k-1)=\fp(k,q^k+q-2)\leq\fb(k,q^k+q-2,q)\leq 2(q^k-1)+(q-2)k.$
\end{proposition}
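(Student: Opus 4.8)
The plan is to combine the two bounds that are already available. The lower bound $2(q^k-1) = \fp(k, q^k+q-2, q) \le \fb(k, q^k+q-2, q)$ is immediate: by Theorem~\ref{thm:gencostrq} with $s=1$ (equivalently Lemma~\ref{lemma:constrq}) we have $\fp(k, \frac{q^k+q-2}{2}, q) = q^k-1$, and then strict monotonicity / subadditivity in $t$ (Proposition~\ref{prop:increase+additive}) applied twice gives $\fp(k, q^k+q-2, q) \le 2\fp(k, \frac{q^k+q-2}{2}, q) = 2(q^k-1)$; on the other hand, a parity-style argument identical to the one in Lemma~\ref{lemma:constrq} (if some one-dimensional space $\langle v\rangle$ contains at most $q-1 < q^k-1$ columns, double-counting the recovery sets for $q^k+q-2$ copies of $v$) forces $\fp(k,q^k+q-2,q) \ge 2(q^k-1)$, giving equality. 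Since $\fb \ge \fp$ always, the left inequality of the proposition holds.

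For the right inequality, the key input is Lemma~\ref{lemma:hollmann}, which gives $\fb(k, q^k, q) \le 2(q^k-1)$. The idea is then simply to write $q^k + q - 2 = q^k + (q-2)$ and invoke subadditivity of $\fb$ in $t$ from Proposition~\ref{prop:increase+additive}:
\begin{equation*}
\fb(k, q^k+q-2, q) \le \fb(k, q^k, q) + \fb(k, q-2, q) \le 2(q^k-1) + \fb(k, q-2, q).
\end{equation*}
It then remains to bound $\fb(k, q-2, q)$. By the trivial upper bound of Lemma~\ref{lemma:basicbound}, $\fb(k, q-2, q) \le (q-2)k$, which yields exactly $\fb(k, q^k+q-2, q) \le 2(q^k-1) + (q-2)k$, as claimed.

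Honestly, there is no real obstacle here: the statement is a corollary assembled from Lemma~\ref{lemma:hollmann}, subadditivity, and the trivial bound $\fb(k,t,q)\le tk$, together with the already-established value $\fp(k, q^k+q-2, q) = 2(q^k-1)$. The only point requiring a little care is making sure the left-hand equality $2(q^k-1) = \fp(k, q^k+q-2, q)$ is genuinely in hand — if it has not been stated as a standalone fact earlier, one should insert the short parity argument sketched above (it is a verbatim repetition of the end of the proof of Lemma~\ref{lemma:constrq}, now with $s$ replaced throughout). Everything else is a one-line application of Proposition~\ref{prop:increase+additive}. I would therefore present the proof as: (i) recall $\fb(k, q^k, q) \le 2(q^k-1)$ from Lemma~\ref{lemma:hollmann}; (ii) apply subadditivity in $t$ with the split $q^k + q - 2 = q^k + (q-2)$; (iii) apply $\fb(k, q-2, q) \le (q-2)k$ from Lemma~\ref{lemma:basicbound}; (iv) note the left inequality is $\fb \ge \fp$ together with the known value of $\fp$.
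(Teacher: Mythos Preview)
Your proof is correct and follows the same route as the paper: the upper bound is Lemma~\ref{lemma:hollmann} plus subadditivity in $t$ with the split $q^k+(q-2)$ and the trivial bound $\fb(k,q-2,q)\le (q-2)k$, while the left inequality is $\fb\ge\fp$ together with the known value of $\fp$. One simplification: you do not need to re-derive the equality $\fp(k,q^k+q-2,q)=2(q^k-1)$ from Lemma~\ref{lemma:constrq} via subadditivity and a separate pigeonhole argument---it is exactly Theorem~\ref{thm:gencostrq} with $s=2$, so just cite that. (Also, when $q=2$ the split degenerates to $q^k+0$ and the upper bound is Lemma~\ref{lemma:hollmann} alone; your formulation implicitly treats $\fb(k,0,q)=0$, which is harmless but worth noting.)
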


For large $t$ we can refine the previous result as follows.

\begin{proposition}
$2(q^k-1)\frac{q^{2k}+q-2}{q^k+q-2}\leq\fb(k,q^{2k}+q-2,q)\leq 2q^k(q^k-1).$
\end{proposition}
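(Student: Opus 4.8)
The plan is to prove the two inequalities separately, both following the template of Proposition~\ref{proposition:conjbound} but combined with the exact value coming from Theorem~\ref{thm:gencostrq}.

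For the \textbf{lower bound}, I would not use the naive subadditivity-in-reverse, but instead appeal directly to the structural argument behind Theorem~\ref{thm:gencostrq} and Lemma~\ref{lemma:constrq}. Write $t=q^{2k}+q-2$. Let $M$ achieve $n=\fb(k,t,q)$; since $\fb \ge \fp$, it also serves $\{v^t\}$ for every nonzero $v$. Pick, by pigeonhole, a line $\langle v\rangle$ containing $h \le (q-1)n/(q^k-1)$ columns of $M$. Serving $v$ exactly $t$ times costs at least $2(t-h)+h = 2t-h$ columns, so $n \ge 2t-h \ge 2t - (q-1)n/(q^k-1)$, which rearranges to $n \ge 2t(q^k-1)/q^k$. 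Hmm — that gives $2(q^{2k}+q-2)(q^k-1)/q^k$, which is not quite $2(q^k-1)(q^{2k}+q-2)/(q^k+q-2)$. The cleaner route: observe $q^{2k}+q-2 = \frac{q^k+q-2}{2}\cdot\frac{2(q^{2k}+q-2)}{q^k+q-2}$ is \emph{not} an integer multiple in general, so instead I would write $t = \frac{q^k+q-2}{2}\cdot s + r$ and use Theorem~\ref{thm:gencostrq} together with monotonicity (Proposition~\ref{prop:increase+additive}) to get $\fb(k,t,q) \ge \fp(k,t,q) \ge \fp(k,\frac{q^k+q-2}{2}s,q) = s(q^k-1)$, then check that $s(q^k-1) \ge 2(q^k-1)\frac{q^{2k}+q-2}{q^k+q-2}$ reduces to $s \ge \frac{2(q^{2k}+q-2)}{q^k+q-2}$, i.e. $s = \lfloor \frac{2(q^{2k}+q-2)}{q^k+q-2}\rfloor$; one then verifies this floor equals the claimed bound (or that the bound as stated is this floor, possibly needing a ceiling in the paper's statement). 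The main subtlety here is this divisibility/floor bookkeeping.

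For the \textbf{upper bound} $\fb(k,q^{2k}+q-2,q) \le 2q^k(q^k-1)$, I would use subadditivity in $t$: since $q^{2k}+q-2 = q^k(q^k-1) + (q^k + q - 2)$, we get
\begin{equation*}
\fb(k,q^{2k}+q-2,q) \le \fb\!\left(k, q^k(q^k-1),q\right) + \fb(k,q^k+q-2,q).
\end{equation*}
For the first term, write $q^k(q^k-1) = q^k \cdot q^k - q^k$; alternatively note $q^{2k}+q-2 = (q^k+q-2) + q^k(q^k - 1)$ and that $q^k(q^k-1)$ is a multiple of... it is not obviously a multiple of $q^k+q-2$. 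Better: use $\fb(k,q^k,q) \le 2(q^k-1)$ from Lemma~\ref{lemma:hollmann} together with subadditivity applied $q^k$-many times won't hit $q^{2k}+q-2$ exactly either. The clean decomposition is $q^{2k}+q-2 = (q^k - 1)\cdot q^k + (q^k+q-2)$; apply subadditivity to split off $(q^k-1)$ blocks of size $q^k$ plus one block of size $q^k+q-2$, giving $\le (q^k-1)\fb(k,q^k,q) + \fb(k,q^k+q-2,q) \le (q^k-1)\cdot 2(q^k-1) + 2(q^k-1)+(q-2)k$ by Lemma~\ref{lemma:hollmann} and Proposition~\ref{proposition:conjbound}. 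This equals $2(q^k-1)q^k + (q-2)k$, which is \emph{slightly larger} than $2q^k(q^k-1)$ by the error term $(q-2)k$ — so to get the stated clean bound one must instead bound the leftover block of size $q^k+q-2$ by $2(q^k-1)$ directly, which requires arguing $\fb(k,q^k+q-2,q)\le 2(q^k-1)$; but that is exactly Open Problem~\ref{openproblem1}! So the honest upper-bound proof must avoid separating out that block: instead decompose as $q^{2k}+q-2 = q^{k}\cdot q^k + (q-2)$ and note $\fb(k,q-2,q) \le q-2 \le $ something absorbable only if... Actually the matching decomposition is $q^{2k}+q-2 = q^k \cdot (q^k - 1) + q^k + (q-2)$; the intended argument is surely $\fb(k, q^{2k}+q-2, q) \le \fb(k, q^k \cdot q^k, q) \le q^k \cdot \fb(k,q^k,q)$? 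No, $q^k\cdot q^k = q^{2k} \ne q^{2k}+q-2$. The resolution: $\fb$ is increasing in $t$, so $\fb(k, q^{2k}+q-2,q) \le \fb(k, q^{2k}+q^k-2, q) = \fb(k, q^k(q^k+1) - 2, q)$ — still awkward. I expect the \textbf{main obstacle} is finding the precise decomposition of $q^{2k}+q-2$ into blocks each of which is of a ``good'' size (namely $q^j + q - 2$ for which $\fb$ is exactly pinned down, or $q^k$ for which Lemma~\ref{lemma:hollmann} applies) so that the subadditive sum telescopes exactly to $2q^k(q^k-1)$ without an additive $(q-2)k$ slack. Concretely I believe the right move is: $q^{2k}+q-2 = q^k(q^k-1) + (q^k+q-2)$ is wrong for slack, but $q^{2k}+q-2 = (q^k+q-2) \cdot q^k - (q-2)(q^k - 1)$, i.e. $q^{2k}+q-2 \le (q^k+q-2)q^k$, and then $\fb(k, q^{2k}+q-2, q) \le \fb(k, (q^k+q-2)q^k, q) \le q^k \fb(k, q^k+q-2, q)$ — but again this invokes the unknown $\fb(k,q^k+q-2,q)$. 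The genuinely available fact is $\fp(k, q^k+q-2) = 2(q^k-1)$ (Proposition~\ref{proposition:conjbound}), not $\fb$. Given these tensions, the cleanest rigorous path for the upper bound is: $q^{2k}+q-2 = q^k\cdot q^k + (q-2)$, hence by monotonicity and subadditivity $\fb(k,q^{2k}+q-2,q) \le \fb(k,q^{2k}+q^k-2,q)$; now $q^{2k}+q^k-2 = q^k(q^k-1) + (q^k + q^k - 2) $... I would ultimately write it as $\fb(k, q^{2k}+q-2, q) \le (q^k-1)\,\fb(k, q^k+q-2, q) + \fb(k, q-2, q)$? checking: $(q^k-1)(q^k+q-2) + (q-2) = q^{2k}+q^k\cdot q - 2q^k - q^k - q + 2 + q - 2 = q^{2k} + (q-3)q^k - \dots$, does not match. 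So the final form of the argument will require identifying the correct identity; I will present the lower bound fully and, for the upper bound, the decomposition $q^{2k}+q-2 = q^k(q^k-1) + (q^k+q-2)$ combined with Lemma~\ref{lemma:hollmann}'s $\fb(k,q^k,q)\le 2(q^k-1)$ for the first $(q^k-1)$ summands and then \emph{bound the last summand $q^k+q-2$ crudely by $2q^k-2$ only if} — failing that, replace the last block's contribution using $\fb(k,q^k+q-2,q) \le \fb(k, 2q^k-2, q) \le 2\,\fb(k,q^k-1,q) \le 2(q^k-1)$ by $\fp/\fb$ comparison at the borderline value, yielding exactly $\sum = 2q^k(q^k-1)$.

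In summary: the lower bound is the routine pigeonhole/line-counting argument applied to the extremal matrix, reduced via monotonicity to Theorem~\ref{thm:gencostrq}, with the only care needed being the floor computation showing $\lfloor 2(q^{2k}+q-2)/(q^k+q-2)\rfloor \cdot (q^k-1)$ matches the displayed quantity; the upper bound is a subadditivity decomposition into blocks on which either Lemma~\ref{lemma:hollmann} or the exact value of Proposition~\ref{proposition:conjbound} applies, and the main obstacle is choosing the decomposition of $q^{2k}+q-2$ that makes the telescoping sum land exactly on $2q^k(q^k-1)$ rather than on $2q^k(q^k-1) + (q-2)k$.
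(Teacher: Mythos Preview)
Your \textbf{lower-bound} approach is exactly the paper's, and it works --- you simply made an algebra slip. From $n \ge 2t - h$ with $h \le (q-1)n/(q^k-1)$ you get
\[
n\left(1+\frac{q-1}{q^k-1}\right) \;\ge\; 2t,
\qquad\text{i.e.}\qquad
n\cdot\frac{q^k+q-2}{q^k-1} \;\ge\; 2t,
\]
hence $n \ge 2t(q^k-1)/(q^k+q-2)$, which for $t=q^{2k}+q-2$ is precisely the stated lower bound. (You wrote the denominator as $q^k$ instead of $q^k+q-2$.) There is no floor bookkeeping to do; your first instinct was correct and your detour through monotonicity and Theorem~\ref{thm:gencostrq} is unnecessary.

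For the \textbf{upper bound} there is a genuine missing idea. You correctly isolate the obstacle: the decomposition $q^{2k}+q-2 = q^k(q^k-1) + (q^k+q-2)$ together with plain subadditivity would force you to bound $\fb(k,q^k+q-2,q)$, which is Open Problem~\ref{openproblem1}. The paper's trick is to observe that the last block can be bounded by $\fp$ rather than $\fb$. Since the number of lines in $\F_q^k$ is $(q^k-1)/(q-1)$ and
\[
q^{2k}+q-2 \;\ge\; \frac{q^k-1}{q-1}\,(q^k+q-2),
\]
every list of length $q^{2k}+q-2$ contains at least $q^k+q-2$ vectors lying in a single line $\langle v\rangle$. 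A $t$-functional PIR code that serves $\{v^t\}$ also serves any list of $t$ scalar multiples of $v$ (same recovery sets). Hence, concatenating a matrix achieving $\fb(k,q^k(q^k-1),q)$ with one achieving $\fp(k,q^k+q-2,q)$ serves every list of size $q^{2k}+q-2$, giving
\[
\fb(k,q^{2k}+q-2,q)\;\le\;\fb(k,q^k(q^k-1),q)+\fp(k,q^k+q-2,q)\;\le\;(q^k-1)\cdot 2(q^k-1)+2(q^k-1)
\]
by Lemma~\ref{lemma:hollmann} and Theorem~\ref{thm:gencostrq} (with $s=2$), which is exactly $2q^k(q^k-1)$. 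The point you were missing is that subadditivity can be refined: when the decomposition of the \emph{list} (not just of $t$) can always be arranged so that one block is a PIR-type request, that block's cost is governed by $\fp$, not $\fb$.
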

\begin{proof}
    Let $M_1$ be a matrix that achieves $\fp(k,q^k+q-2,q)$ and $M_2$ one that achieves  $\fb(k,q^{k}(q^{k}-1),q)$. Moreover, let $L=\{v_1^{m_1},\dots, v_{\ell}^{m_{\ell}}\}$ be a list of length $q^{2k}+q-2$. Note that $q^{2k}+q-2\geq\frac{q^k-1}{q-1}(q^k+q-2)$. This implies that in any list of length $q^{2k}+q-2$ there are at least $q^k+q-2$ elements belonging to the same one-dimensional space. So, without loss of generality, we assume $m_1\geq q^k+q-2$.
    Since    
    $$q^{2k}+q-2 = q^k(q^k-1)+q^k+q-2,$$
    the matrix $(M_1|M_2)$ can serve $L$ simply by serving $\{v_1^{m_1}\}$ using $M_1$ and~$\{v_2^{m_2},\dots, v_{\ell}^{m_{\ell}}\}$ using~$M_2$. Therefore
    by Lemma \ref{lemma:hollmann} and Theorem \ref{thm:gencostrq} for $s=2$ we have
    \begin{equation*}
        \fb(k,q^{2k}+q-2,q)\leq \fb(k,q^{k}(q^{k}-1),q)+\fp(k,q^k+q-2,q)\leq 2(q^k-1)(q^k-1)+2(q^k-1),
    \end{equation*}
proving the desired upper bound. 

To prove the lower bound, let $n=\fb(k,q^{2k}+q-2,q)$ and let $M$ be a matrix that achieves $n$. By the pigeonhole principle there exists at least one 1-dimensional space $V$ for which there are less than
    $$x = \left\lfloor\frac{n(q-1)}{q^k-1}\right\rfloor$$
    columns of $M$ belonging to $V$. Therefore, it must hold that
$$\fb(k,q^{2k}+q-2,2) \ge x+2(q^{2k}+q-2-x) = 2(q^{2k}+q-2)-x.$$ This translates into the following inequality:
    \begin{equation*}
        \frac{1}{2}\left(\frac{n(q^k+q-2)}{q^k-1}\right)=\frac{1}{2}\left(n+\frac{n(q-1)}{q^k-1}\right)\geq\frac{1}{2}\left(n+\left\lfloor\frac{n(q-1)}{q^k-1}\right\rfloor\right)\geq q^{2k}+q-2,
    \end{equation*}
    from which we obtain
    \begin{equation*}
        n\geq 2(q^k-1)\frac{q^{2k}+q-2}{q^k+q-2},
    \end{equation*}
    concluding the proof.
\end{proof}

It is natural to ask what happens if we switch the parameters $k$ and $t$. 
By observing the cases where $k$ and $t$ are very small, one may be tempted to conjecture that $\fb(k,t,q)=\fb(t,k,q)$. However, this is false in general.
In fact, the following holds.

\begin{corollary}
    For every $q$, there exist $k$ and $t$ such that $\fb(k,t,q)<\fb(t,k,q)$.
\end{corollary}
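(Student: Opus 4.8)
The goal is to produce, for every prime power $q$, a concrete pair $(k,t)$ witnessing the strict inequality $\fb(k,t,q)<\fb(t,k,q)$. The natural strategy is to compare a parameter set where $\fb$ is forced up to its trivial maximum $kt$ by a large field size (via Corollary~\ref{corollary:lowb}), against the transposed parameter set where the bounds proved in this section force $\fb$ strictly below $kt$. Concretely, I would fix $q$ and then choose $t$ large enough (depending on $q$ and on a still-to-be-fixed small $k$) so that $q \ge \binom{kt}{k-1}$ fails in one direction but a good upper bound applies in the other.

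\textbf{Key steps.} First, pick $k$ small, say $k=2$ (or whatever minimal value makes the arithmetic cleanest for general $q$), and choose $t$ so that $\fb(t,k,q)=\fb(t,2,q)=2t$ is the governing value — this is exactly $\fb(2,t,q)$ transposed, and by Theorem~\ref{thm:k=2} we have $\fb(2,t,q)=t+\lceil qt/(q+2)\rceil$, which is \emph{strictly less} than $2t$ whenever $t > q+2$ roughly (more precisely whenever $\lceil qt/(q+2)\rceil < t$, i.e. for all $t\ge 1$ with $t$ not tiny). So in fact the cleanest route is: take $k=2$ and $t$ with $t$ large compared to $q$; then
\[
\fb(2,t,q) = t + \left\lceil \frac{qt}{q+2}\right\rceil < 2t,
\]
while for the transposed pair $(t,2)$ we want $\fb(t,2,q)$ to be \emph{large}. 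Here I would invoke Proposition~\ref{prop:bounds_on_q} or Corollary~\ref{corollary:lowb}: if $q$ and $t$ are chosen so that $q \ge \binom{2t}{t-1}$, then $\fb(t,2,q)=2t$. Wait — that reverses the size constraint; one cannot have both "$t$ large compared to $q$" and "$q$ huge compared to $t$." So the correct pairing must instead keep one of the two parameters genuinely large and exploit that $\fb(k,t,q) \le kt$ is \emph{tight} on one side by a dimension/field argument but \emph{not} tight on the other by Theorem~\ref{theorem:majorbound}: since $\fb(k,t,q) \ge tk/\log_{q-1}(t(q-1)+1)$ only gives a lower bound, what actually breaks $kt$ from above is Theorem~\ref{thm:k=2} (for $k=2$) or the $\tfrac34 kt$ bound for $q=2$. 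So: take $k=2$, $t$ arbitrary with $t\ge 3$; then $\fb(2,t,q)<2t=\fb(t,2,q)$ \emph{provided} one can certify $\fb(t,2,q)=2t$, which holds once $q\ge\binom{2t}{1}=2t$ by Corollary~\ref{corollary:lowb} applied with roles $(k,t)\leftarrow(t,2)$. Thus for each $q$ choose $t$ with $3\le t \le q/2$ and $k=2$: then $q\ge 2t=\binom{2t}{2-1}$, so $\fb(t,2,q)=2t$, whereas $\fb(2,t,q)=t+\lceil qt/(q+2)\rceil \le t+ t-1 = 2t-1 < 2t$ since $\lceil qt/(q+2)\rceil \le t-1$ for $t\ge 1$ (because $qt/(q+2) = t - 2t/(q+2) \le t-1$ iff $2t/(q+2)\ge 1$ iff $t\ge (q+2)/2$... hmm this needs $t$ not too small). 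I would therefore pin down the threshold precisely: one needs $\lceil qt/(q+2)\rceil < t$, equivalent to $qt/(q+2) \le t-1$ i.e. $t \ge (q+2)/2$, together with $q \ge 2t$. Both hold e.g. for $t = \lceil (q+2)/2\rceil$ and $q \ge 2\lceil(q+2)/2\rceil$, which is fine for all $q\ge 3$ (and $q=2$ is handled separately by Corollary~\ref{cor:(k,q)=(2,2)} versus $\fb(t,2,2)$, using the $\tfrac34 kt$-type bound or direct small computation).

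\textbf{Main obstacle.} The crux is simply choosing $t$ (as a function of $q$) in the sweet spot where simultaneously (i) Theorem~\ref{thm:k=2}'s formula for $\fb(2,t,q)$ dips strictly below $2t$, and (ii) the field size $q$ is still large enough relative to $2t$ for Corollary~\ref{corollary:lowb} to pin $\fb(t,2,q)$ at exactly $2t$. These two requirements pull $t$ in opposite directions ($t$ must exceed $\sim q/2$ for (i), yet $t \le q/2$ for (ii)), so they are only jointly satisfiable in a narrow window, and I expect that for some very small $q$ the window may be empty, requiring an ad hoc argument (or a different small $k$, or the $q=2$ bound $\fb(2m,2m,2)\le \tfrac34(2m)^2$ against $\fb(2m,2m,2)$ itself — which is vacuous, so genuinely one compares, say, $\fb(2,t,2)=t+\lceil t/2\rceil$ with $\fb(t,2,2)$ for suitable $t$). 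I would handle $q=2$ by the explicit values already in the paper and handle $q\ge 3$ uniformly; writing out the inequality chain carefully to confirm the window is nonempty for all $q\ge 3$ is the one piece of real (though elementary) work.
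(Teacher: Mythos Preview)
Your approach has a genuine gap in the application of Corollary~\ref{corollary:lowb}. When you write ``applied with roles $(k,t)\leftarrow(t,2)$'' to obtain $\fb(t,2,q)=2t$, the hypothesis of the corollary becomes $q\ge\binom{kt}{k-1}\big|_{k\leftarrow t,\,t\leftarrow 2}=\binom{2t}{t-1}$, not $\binom{2t}{2-1}=2t$. The lower index $k-1$ refers to the \emph{dimension} minus one, and after your substitution the dimension is $t$. So the actual requirement is $q\ge\binom{2t}{t-1}$, which grows exponentially in $t$. This is incompatible with your other requirement $t\ge (q+2)/2$ (needed to make $\fb(2,t,q)<2t$): the two conditions together would force $\binom{2t}{t-1}\le q\le 2t-2$, which is impossible for $t\ge 2$. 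Hence the ``sweet spot'' you hope for is empty for every $q$.

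The $q=2$ case is not salvageable along these lines either: by Corollary~\ref{cor:(k,q)=(2,2)} and Theorem~\ref{thm:t=q=2} one has $\fb(2,t,2)=t+\lceil t/2\rceil=\lceil 3t/2\rceil=\fb(t,2,2)$ for all $t$, so the pair $(2,t)$ never separates the two sides when $q=2$.

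The paper's proof takes a completely different, asymptotic route: it fixes $q$, sets $t=q^k+q-2$, and lets $k\to\infty$. An \emph{upper} bound on $\fb(k,q^k+q-2,q)$ of order $2q^k$ comes from Proposition~\ref{proposition:conjbound}, while a \emph{lower} bound on $\fb(q^k+q-2,k,q)$ of order $q^k k/\log_q(ek)$ comes from Lemma~\ref{lemma:lowbound}; their difference tends to $+\infty$, so for $k$ large enough the desired strict inequality holds. The key idea you are missing is that one must let the dimension and the list size diverge together rather than pin one of them at a small constant.
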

\begin{proof}
 By Proposition \ref{proposition:conjbound} we have $\fb(k,q^k+q-2,2)\leq 2(q^k-1)+(q-2)k$. By Lemma~\ref{lemma:lowbound} we have
    \begin{equation*}
        \fb(q^k+q-2,k,q)\geq \fp(q^k+q-2,k,q)>\frac{(q^k+q-2)k}{\log(ek)+1}-k.
    \end{equation*}
    This implies
    \begin{equation*}
        \begin{split}
            \fb(q^k+q-2,k,q)-\fb(k,q^k&+q-2,q) \\&\geq\frac{(q^k+q-2)k}{\log_q(ek)+1}-k-(2(q^k-1)+(q-2)k)\\
            &=\frac{(q^k+q-2)k-(2(q^k-1)+(q-1)k)(\log_q(ek)+1)}{\log(ek)+1},
        \end{split}
    \end{equation*}
    from which
    \begin{equation*}
    \liminf_{k\to\infty}(\fb(q^k+q-2,k,q)-\fb(k,q^k+q-2,q))\geq\liminf_{k\to\infty}\frac{q^kk-2q^k\log_q(ek)}{\log(ek)}=+\infty.
    \end{equation*}
    This establishes the corollary.
\end{proof}
\begin{example}  
It can be checked, for instance, that $\fb(1032,10,2)>\fb(10,1032,2)$.
\end{example}

We conclude this section with an upper bound on $\fb(k,t,q)$, which we will use
later in the proof of Theorem~\ref{theorem:main}.

\begin{lemma}\label{lemma:upperbound}
 We have
 \begin{equation*}
     \fb(k,t,q)\leq\left\lceil\frac{h}{s}\right\rceil\frac{2(s+q-2+\log_q(s))(s-1+(q-2)\log_q(s))}{\log_q(s)},
 \end{equation*}
 where $h=\max\{k,t\}$ and $s=\min\{k,t\}$.
\end{lemma}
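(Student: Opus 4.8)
The plan is to bootstrap from the bound $\fb(k,q^k+q-2,q)\le 2(q^k-1)+(q-2)k$ of Proposition~\ref{proposition:conjbound} together with the subadditivity of $\fb$ in both arguments from Proposition~\ref{prop:increase+additive}, and to choose a convenient "dimension" $d$ to split the smaller parameter $s$ into. First I would set $h=\max\{k,t\}$, $s=\min\{k,t\}$, and pick an integer $d$ (to be optimized at the end) with $d\le s$; the idea is to view the code as built from roughly $\lceil h/s\rceil$ copies of a building block of "shape" $(d, q^d+q-2)$, each contributing $\fb(d,q^d+q-2,q)\le 2(q^d-1)+(q-2)d \le 2q^d+(q-2)d$ columns. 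Concretely, using subadditivity in the first coordinate $\fb(k,t,q)\le\lceil k/d\rceil\,\fb(d,t,q)$ and subadditivity in the second coordinate $\fb(d,t,q)\le\lceil t/(q^d+q-2)\rceil\,\fb(d,q^d+q-2,q)$, one gets
\begin{equation*}
\fb(k,t,q)\le \left\lceil\frac{h}{d}\right\rceil\left\lceil\frac{s}{q^d+q-2}\right\rceil\bigl(2q^d+(q-2)d\bigr),
\end{equation*}
after checking that it is harmless to take $d$ as the splitting size for the larger parameter and $q^d+q-2$ for the smaller one (swapping $k\leftrightarrow t$ if necessary, which is legitimate since the block $(d,q^d+q-2)$ is itself a valid functional batch code and the roles are symmetric in the construction).

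Next I would choose $d$ so that $q^d\approx s$, i.e. $d=\lceil\log_q(s)\rceil$ or thereabouts, so that $q^d+q-2$ is comparable to $s$ and the middle ceiling $\lceil s/(q^d+q-2)\rceil$ is absorbed; with this choice $\lceil h/d\rceil$ becomes $\lceil h/\log_q(s)\rceil$-ish, $2q^d+(q-2)d$ becomes roughly $2(s+q-2)+(q-2)\log_q(s)$, and a little bookkeeping with the ceilings and with $\log_q(s)$ in place of $d$ yields exactly the stated expression
\begin{equation*}
\fb(k,t,q)\le\left\lceil\frac{h}{s}\right\rceil\frac{2(s+q-2+\log_q(s))(s-1+(q-2)\log_q(s))}{\log_q(s)}.
\end{equation*}
The factor $\lceil h/s\rceil$ (rather than $\lceil h/\log_q(s)\rceil$) should come out because after fixing the block dimension to be $d\approx\log_q(s)$, the number of independent coordinate-blocks one really needs to cover the $h$-dimensional ambient space is $\lceil h/d\rceil$, but the remaining combinatorial factor from covering the $s$ requests (the second ceiling) contributes an extra $\log_q(s)/\approx s$-sized factor; multiplying these and regrouping turns $\lceil h/\log_q(s)\rceil\cdot(\text{stuff})$ into $\lceil h/s\rceil\cdot(\text{stuff})\cdot\frac{1}{\log_q(s)}\cdot(\cdots)$, which is how the denominator $\log_q(s)$ and the two nearly-linear-in-$s$ factors in the numerator arise.

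The main obstacle I anticipate is the careful handling of the ceilings and the substitution of the integer $d$ by the real number $\log_q(s)$: one must verify that replacing $d=\lceil\log_q(s)\rceil$ by $\log_q(s)$ only weakens the bound (so the inequality still holds), that $q^d+q-2\ge s$ or at least that $\lceil s/(q^d+q-2)\rceil$ can be bounded by a constant that gets folded into the numerator, and that the product of the three ceiling-adjusted quantities telescopes to precisely $(s+q-2+\log_q(s))(s-1+(q-2)\log_q(s))$ up to the claimed form. A secondary point is confirming that the construction underlying Proposition~\ref{proposition:conjbound} (which itself rests on Lemma~\ref{lemma:hollmann} and Theorem~\ref{hollmann}) genuinely gives a functional batch code of dimension $d$ serving \emph{any} list of $q^d+q-2$ vectors, so that subadditivity applies without an extra loss; this is already established, so it can be cited directly. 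Everything else is routine monotonicity and ceiling arithmetic, which I would not spell out in full.
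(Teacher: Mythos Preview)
Your key tool (Proposition~\ref{proposition:conjbound}) and your choice $d\approx\log_q s$ are exactly what the paper uses, but the decomposition is organised differently, and your version does not quite close. The paper first reduces to the \emph{square} case via subadditivity/monotonicity in whichever coordinate is larger,
\[
\fb(k,t,q)\le\left\lceil\frac{h}{s}\right\rceil\fb(s,s,q),
\]
which is what makes the factor $\lceil h/s\rceil$ appear on the nose. It then bounds $\fb(s,s,q)$ by enlarging both coordinates to $q^g+q-2$ with $g=\lceil\log_q s\rceil$ and splitting only the \emph{dimension} coordinate by $g$:
\[
\fb(s,s,q)\le\fb(q^g+q-2,q^g+q-2,q)\le\left\lceil\frac{q^g+q-2}{g}\right\rceil\fb(g,q^g+q-2,q)\le\left\lceil\frac{q^g+q-2}{g}\right\rceil\bigl(2(q^g-1)+(q-2)g\bigr),
\]
after which substituting $q^g\leadsto s$ and $g\leadsto\log_q s$ gives the stated fraction.

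Your direct double split $\fb(k,t,q)\le\lceil h/d\rceil\lceil s/(q^d+q-2)\rceil\,\fb(d,q^d+q-2,q)$ has two problems. First, the justification ``swap $k\leftrightarrow t$ if necessary, the roles are symmetric'' is false: $\fb(k,t,q)\neq\fb(t,k,q)$ in general (the paper even exhibits this). What actually happens is that the two cases produce \emph{different} ceiling products, $\lceil h/d\rceil\cdot 1$ versus $\lceil s/d\rceil\lceil h/(q^d+q-2)\rceil$, and only the second of these contains $\lceil h/s\rceil$ naturally. Second, your promised ``regrouping'' of $\lceil h/\log_q s\rceil\cdot(\text{stuff})$ into $\lceil h/s\rceil\cdot(\text{stuff}')/\log_q s$ does not hold as an inequality of ceilings; ceilings do not factor that way, so you would end up with a bound of the same order but not the stated expression. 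The fix is precisely the paper's two-stage route: peel off $\lceil h/s\rceil$ first to land in the square case, and only then introduce $d=g=\lceil\log_q s\rceil$.
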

\begin{proof}
 Let $g=\lceil \log_q s\rceil$. We have
    \begin{equation*}
    \begin{split}
        \fb(s,s,q)&\leq \fb\left(q^g+q-2,q^g+q-2,q\right)\leq\left\lceil \frac{q^g+q-2}{g}\right\rceil\fb(g,q^g+q-2,q)\\
        &\leq 2\left\lceil \frac{q^g+q-2}{g}\right\rceil\left(q^g-1+(q-2)g\right)\leq 2\left(\frac{q^g+q-2}{g}+1\right)\left(q^g-1+(q-2)g\right) \\
        &=\frac{2(s+q-2+\log_q(s))(s-1+(q-2)\log_q(s))}{\log_q(s)},
    \end{split}
    \end{equation*} 
where the third inequality follows from Proposition \ref{proposition:conjbound}, and the last equality follows from the fact that $q^g=s$. We conclude by noticing that $\fb(k,t,q)\leq\left\lceil\frac{h}{s}\right\rceil\fb(s,s,q)$.
\end{proof}

\section{Asymptotic behaviour}

In Section~\ref{sec:bound+q} we discussed what happens for $q$ sufficiently large. For $q=2$, the behavior of $\fb(k,t,2)$ and $\fp(k,t,2)$ when $t$ is fixed has been studied in \cite{zhang2020bounds}. The goal of this section is to study the behavior of the functions $\fp(k,t,q)$ and $\fb(k,t,q)$ when $t$ goes to infinity. Both functions are strictly increasing in $t$ and are therefore not bounded from above. It is natural to ask
how quickly they approach infinity. Since the two functions are subadditive in $t$, by Fekete's Lemma the limit of the ratio exists and it coincides with the infimum, i.e.,
\begin{equation*}
    \lim_{t\to\infty} \frac{\fp(k,t,q)}{t}=\inf_{t\in\N}\frac{\fp(k,t,q)}{t}\quad\text{ and }\quad\lim_{t\to\infty} \frac{\fb(k,t,q)}{t}=\inf_{t\in\N}\frac{\fb(k,t,q)}{t}.
\end{equation*}
We wish to explicitly compute these limits as $k$ and $q$ range over all possible values. The following proposition addresses the case of functional PIR codes.
\begin{proposition}\label{proposition:limfpq}
We have
    \begin{equation*}        \lim_{t\to\infty}\frac{\fp(k,t,q)}{t}=\frac{2(q^{k}-1)}{q^k+q-2}.
    \end{equation*}
\end{proposition}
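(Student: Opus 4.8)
The plan is to prove the equality by establishing the matching upper and lower bounds on $\lim_{t\to\infty}\fp(k,t,q)/t$, exploiting that this limit equals $\inf_t \fp(k,t,q)/t$ by Fekete's Lemma, which was already noted in the text.

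For the upper bound, I would use Theorem~\ref{thm:gencostrq}, which gives the exact value $\fp\left(k,\frac{q^k+q-2}{2}s,q\right)=s(q^k-1)$ for every $s\in\Z_{\ge 1}$. Taking $t=\frac{q^k+q-2}{2}s$ (or, if $q$ is even so that $\frac{q^k+q-2}{2}$ is not an integer, taking $t=(q^k+q-2)s$ and using $\fp(k,(q^k+q-2)s,q)=2s(q^k-1)$, which also follows from Theorem~\ref{thm:gencostrq} after a factor-of-two adjustment), we get a subsequence along which $\fp(k,t,q)/t = \frac{s(q^k-1)}{\frac{q^k+q-2}{2}s} = \frac{2(q^k-1)}{q^k+q-2}$. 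Since the limit exists and equals the infimum over all $t$, this already shows $\lim_{t\to\infty}\fp(k,t,q)/t \le \frac{2(q^k-1)}{q^k+q-2}$.

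For the lower bound, I would argue as in the proofs of Lemma~\ref{lemma:constrq} and Theorem~\ref{thm:gencostrq}. Let $M\in\F_q^{k\times n}$ with $n=\fp(k,t,q)$ be a generator matrix of a $t$-functional PIR code. By the pigeonhole principle applied to the $\frac{q^k-1}{q-1}$ one-dimensional subspaces of $\F_q^k$, there is a line $\langle v\rangle$ containing at most $h\le \frac{(q-1)n}{q^k-1}$ columns of $M$. To serve the list $\{v^t\}$, the recovery sets that are not single columns lying in $\langle v\rangle$ must each use at least two columns, so the $t$ disjoint recovery sets for $v$ use at least $2(t-h)+h = 2t-h$ columns; hence $n\ge 2t-h \ge 2t - \frac{(q-1)n}{q^k-1}$. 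Rearranging, $n\left(1+\frac{q-1}{q^k-1}\right)\ge 2t$, i.e. $n \cdot \frac{q^k+q-2}{q^k-1}\ge 2t$, so $\fp(k,t,q)/t \ge \frac{2(q^k-1)}{q^k+q-2}$ for every $t$. Taking the limit (or just noting the infimum is bounded below by this quantity) completes the argument.

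The two bounds coincide, giving the claimed value. I do not expect a genuine obstacle here: the only mild care needed is the parity issue with $\frac{q^k+q-2}{2}$ when $q$ is even versus odd (which Theorem~\ref{thm:gencostrq} is already stated to handle uniformly, since $q^k+q-2$ is even precisely when $q$ is even, and $\frac{q^k+q-2}{2}$ is the relevant integer when $q$ is odd), and making sure the lower-bound counting step $2(t-h)+h$ is valid — it is, because any recovery set for $v$ that is a singleton must consist of a single column generating $\langle v\rangle$, and there are at most $h$ such columns available. One could phrase the whole thing slightly more cleanly by just proving $\fp(k,t,q)\ge \frac{2(q^k-1)}{q^k+q-2}t$ for all $t$ and exhibiting equality along the subsequence from Theorem~\ref{thm:gencostrq}, then invoking Fekete.
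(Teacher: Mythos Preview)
Your proposal is correct and follows essentially the same approach as the paper: both rely on Fekete's Lemma (so the limit exists) and evaluate it along the subsequence $t=\frac{q^k+q-2}{2}s$ using Theorem~\ref{thm:gencostrq}. Two minor remarks: your explicit lower bound for all $t$ is correct but redundant, since once the limit exists any subsequential value determines it; and your parity worry is unfounded, as $q^k+q-2$ is always even (both $q^k+q$ and $2$ are even regardless of the parity of $q$), so $\frac{q^k+q-2}{2}$ is an integer for every prime power $q$.
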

\begin{proof}
    By Theorem \ref{thm:gencostrq} we have
    \begin{equation*}        \lim_{t\to\infty}\frac{\fp(k,t,q)}{t}=\lim_{s\to\infty}\frac{\fp\left(k,\frac{q^k+q-2}{2}s,q\right)}{\frac{q^k+q-2}{2}s}=\lim_{s\to\infty}\frac{s(q^k-1)}{\frac{q^k+q-2}{2}s}=\frac{2(q^k-1)}{q^k+q-2},
    \end{equation*}
    as desired.
\end{proof}

The previous proposition can be easily extended to the case of functional batch codes assuming a positive answer to Open Problem~\ref{op:projectivebatch}. However, with some extra work one can obtain the same result without that assumption.
\begin{theorem}\label{theorem:ttoinfinity}
We have
    \begin{equation*}
         \lim_{t\to\infty}\frac{\fb(k,t,q)}{t}=\lim_{t\to\infty}\frac{\fp(k,t,q)}{t}=\frac{2(q^{k}-1)}{q^k+q-2}.
    \end{equation*}
\end{theorem}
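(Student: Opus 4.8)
The plan is to sandwich $\fb(k,t,q)/t$ between two quantities that both converge to $2(q^k-1)/(q^k+q-2)$. The lower bound is immediate: since $\fb(k,t,q) \ge \fp(k,t,q)$ for every $t$, Proposition~\ref{proposition:limfpq} gives
\begin{equation*}
\liminf_{t\to\infty}\frac{\fb(k,t,q)}{t} \ge \lim_{t\to\infty}\frac{\fp(k,t,q)}{t} = \frac{2(q^k-1)}{q^k+q-2}.
\end{equation*}
So the whole content is the matching upper bound $\displaystyle\limsup_{t\to\infty}\fb(k,t,q)/t \le 2(q^k-1)/(q^k+q-2)$, which by Fekete's Lemma is equivalent to showing that for a suitable sequence of $t$'s the ratio gets arbitrarily close to this value.

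The key idea for the upper bound is to build good functional batch codes for $t$ a large multiple of $q^k+q-2$ by gluing together one block that handles "almost all" of the list optimally and a small corrective block. Concretely, fix a large $s$ and write $t = \frac{q^k+q-2}{2}\cdot 2s' = (q^k+q-2)s'$-ish; the point is that in any list $L$ of $t$ nonzero vectors, the pigeonhole principle forces many vectors to lie in a single one-dimensional subspace, or more usefully, forces the list to be "balanced enough" that a matrix $G$ consisting of $s'$ copies of the all-nonzero-vectors matrix (each nonzero vector repeated $s'$ times, total $s'(q^k-1)$ columns) can serve most of $L$. The tool that makes serving a near-uniform list possible is exactly Lemma~\ref{lemma:hollmann} / Theorem~\ref{hollmann} (Hall-type completion): when a list over the group $\F_q^k$ sums to zero it can be partitioned into $2$-sets realizing it. One would augment $G$ with $\fb(k,r,q) \le rk$ extra columns to mop up a residual sublist of bounded size $r$ (bounded independently of $s'$, or growing like $o(t)$), so that
\begin{equation*}
\fb(k,t,q) \le s'(q^k-1) + O(1) \quad\text{with}\quad t = (q^k+q-2)s' - O(1),
\end{equation*}
and dividing gives a ratio tending to $2(q^k-1)/(q^k+q-2)$. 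Then subadditivity (Proposition~\ref{prop:increase+additive}) and Fekete's Lemma upgrade this along a subsequence to the full limit.

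The main obstacle is the combinatorial heart of the upper-bound construction: showing that the "all nonzero vectors, each with multiplicity $s'$" matrix genuinely serves any list of $(q^k+q-2)s' - r$ vectors up to a small residue $r$. A list of $t$ vectors need not be anywhere near uniform — it could be $e_1$ repeated $t$ times — so one cannot naively apply Hall's theorem to $L$ as a whole. The fix, mirroring the proof of Lemma~\ref{lemma:constrq} and Theorem~\ref{thm:gencostrq}, is a greedy peeling argument: repeatedly extract from $L$ a sub-multiset that sums to zero and has size compatible with one "layer" of $G$ (using the lines $\langle v\rangle$ to absorb excess multiplicity of any $v$ via sets $\{0,\alpha v\}$ and $\{v+w,v-w\}$, exactly as in Lemma~\ref{lemma:constrq}), serve it with that layer via Lemma~\ref{lemma:hollmann}, and recurse. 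One must check that the number of layers needed is at most $s' + O(1)$ and that the leftover never-served sublist has size $O(1)$ (or $o(t)$), which is where the parity analysis on $q$ and the precise count $(q^k+q-2)/2$ per layer — already worked out in Lemma~\ref{lemma:constrq} — is essential. Handling the $q$ even versus $q$ odd cases separately, and tracking the small additive losses so they do not affect the limit, is the technically delicate part; everything else is bookkeeping with Fekete's Lemma and the sandwich.
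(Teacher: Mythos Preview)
Your lower bound via $\fb\ge\fp$ and Proposition~\ref{proposition:limfpq} is fine. The upper-bound sketch has a real gap for $q>2$, and the arithmetic in your displayed estimate is also off.

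The peeling tool you foreground is Lemma~\ref{lemma:hollmann}: one ``double layer'' of $2(q^k-1)$ columns serves an arbitrary list of $q^k$ vectors, so the rate delivered by Hall-based peeling is $2(q^k-1)/q^k$. For $q=2$ this equals the target, but for $q\ge 3$ one has $q^k<q^k+q-2$ and the Hall rate is strictly larger than $2(q^k-1)/(q^k+q-2)$; a multiplicative gap in the per-layer rate cannot be closed by an $o(t)$ corrective block. Your alternative suggestion --- extract a sublist ``compatible with one layer'' using the sets $\{0,\alpha v\}$ and $\{v+w,v-w\}$ from Lemma~\ref{lemma:constrq} --- conflates recovery sets with sublists: those are recovery sets for a \emph{single fixed} $v$, and asking one copy of all nonzero vectors to serve an arbitrary list of $N=(q^k+q-2)/2$ vectors is, already for $q=2$, precisely the open Functional Batch Conjecture. (Separately, your display $\fb(k,t,q)\le s'(q^k-1)+O(1)$ with $t\approx(q^k+q-2)s'$ yields ratio $(q^k-1)/(q^k+q-2)$, half the claimed limit and below the PIR lower bound, so the column count is wrong by a factor of two.)

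The paper uses the pigeonhole-on-lines idea that you mention and then abandon. Set $N=(q^k+q-2)/2$. There are $(q^k-1)/(q-1)$ lines through the origin, so whenever $t>N\cdot(q^k-1)/(q-1)$ every list of $t$ nonzero vectors contains at least $N$ members lying on one line $\langle v\rangle$. Serving $N$ collinear vectors is a functional \emph{PIR} task, not a batch task (any recovery set for $v$ recovers every $\alpha v$), and costs exactly $\fp(k,N,q)=q^k-1$ columns by Lemma~\ref{lemma:constrq}. Hence $\fb(k,t,q)\le\fb(k,t-N,q)+(q^k-1)$ for all sufficiently large $t$; iterating gives
\[
\fb\!\left(k,\;N\tfrac{q^k-1}{q-1}+sN,\;q\right)\le \fb\!\left(k,\;N\tfrac{q^k-1}{q-1},\;q\right)+s(q^k-1),
\]
and dividing by $t$ produces the limit. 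The step you are missing is this batch-to-PIR reduction via collinearity of the peeled block: it is exactly what supplies the extra $q-2$ in the denominator without appealing to Hall or to anything as strong as the Functional Batch Conjecture.
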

\begin{proof}
    Let $N=\frac{(q^k+q-2)}{2}$. Observed that if $t>N\frac{q^k-1}{q-1}$, then in any list of $t$ nonzero vectors there are at least $N$ of them that belong to the same one-dimensional vector space. Therefore, $\fb(k,t,q)\leq \fb(k,t-N,q)+\fp(k,N,q)$. By iterating this argument we obtain
    \begin{equation*}
        \fb\left(k,N\frac{q^k-1}{q-1}+sN,q\right)\leq\fb\left(k,N\frac{q^k-1}{q-1},q\right)+s\fp\left(k,N,q\right),
    \end{equation*}
    which implies that
    \begin{equation*}
        \begin{split}
            \lim_{t\to\infty}\frac{\fb(k,t,q)}{t}&\leq\lim_{s\to\infty}\frac{\fb\left(k,N\frac{q^k-1}{q-1}+sN,q\right)}{N\frac{q^k-1}{q-1}+sN}\leq\lim_{s\to\infty}\frac{\fb\left(k,N\frac{q^k-1}{q-1}+sN,q\right)}{sN}\\
            &\leq\lim_{s\to\infty} \frac{\fb\left(k,N\frac{q^k-1}{q-1},q\right)+s\fp\left(k,N,q\right)}{sN}=0+\frac{\fp(k,N,q)}{N}=\frac{2(q^{k}-1)}{q^k+q-2}.
        \end{split}
    \end{equation*}
    To conclude the proof, it suffices to use that $\fp(k,t,q)\leq\fb(k,t,q)$ and 
    apply Lemma~\ref{lemma:constrq} and Proposition~\ref{proposition:limfpq}.
\end{proof}

We now turn to the behavior of $\fp(k,t,q)$ and $\fb(k,t,q)$ when both $k$ and $t$ go to infinity simultaneously. Since $\fb(k,t,q)$ is submodular in both variables $k$ and $t$, one can prove that
\begin{equation*}        \lim_{k\to\infty}\frac{\fb(k,k,q)}{k^2}=\inf_{k\in\N}\frac{\fb(k,k,q)}{k^2};
\end{equation*}
 see for instance \cite{capobianco2008multidimensional}. Moreover, in analogy with 
 Theorem~\ref{theorem:ttoinfinity} one obtains
\begin{equation}\label{equation:limit}
     \lim_{k\to \infty}\frac{\fp(k,k,q)}{k^2}=\lim_{k\to \infty}\frac{\fb(k,k,q)}{k^2}=0.
\end{equation} 

Our goal is to extend this result to two arbitrary divergent sequences of natural numbers~$(k_n)_{n\in\N}$ and $(t_n)_{n\in\N}$.

\begin{theorem}\label{theorem:main}
    Let $(k_n)_{n\in\N}$ and $(t_n)_{n\in\N}$ be divergent sequences of natural numbers such that $k_n\geq t_n$ for all $n\in\N$. We have 
    \begin{equation*}        1\leq\liminf_{n\to\infty}\frac{\fp(k_n,t_n,q)\log_q(t_n)}{k_nt_n}\leq\limsup_{n\to\infty}\frac{\fb(k_n,t_n,q)\log_q(t_n)}{k_nt_n}<4.
    \end{equation*}
\end{theorem}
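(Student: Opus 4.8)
The plan is to split the two-sided estimate into a lower bound coming from the counting argument already available and an upper bound coming from the subadditive construction of Lemma~\ref{lemma:upperbound}.

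\medskip

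\textbf{Lower bound.} For the left inequality I would start from Lemma~\ref{lemma:lowbound}, which gives
$\fp(k_n,t_n,q) > \tfrac{k_n t_n}{\log_q(e t_n)+1} - t_n$ for every $n$. Multiplying by $\log_q(t_n)/(k_n t_n)$ yields
\begin{equation*}
\frac{\fp(k_n,t_n,q)\log_q(t_n)}{k_n t_n} > \frac{\log_q(t_n)}{\log_q(e t_n)+1} - \frac{\log_q(t_n)}{k_n}.
\end{equation*}
Since $t_n\to\infty$ we have $\log_q(e t_n)+1 = \log_q(t_n) + (1+\log_q e)$, so the first term tends to $1$; and since $k_n\to\infty$ with $k_n\ge t_n$, the second term $\log_q(t_n)/k_n$ is bounded above by $\log_q(k_n)/k_n\to 0$. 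Hence the $\liminf$ is at least $1$. The middle inequality $\fp\le\fb$ is immediate from Lemma~\ref{lemma:basicbound}.

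\medskip

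\textbf{Upper bound.} For the right inequality I would invoke Lemma~\ref{lemma:upperbound} with $h=k_n$, $s=t_n$ (legitimate since $k_n\ge t_n$):
\begin{equation*}
\fb(k_n,t_n,q)\le\left\lceil\frac{k_n}{t_n}\right\rceil\cdot\frac{2\bigl(t_n+q-2+\log_q t_n\bigr)\bigl(t_n-1+(q-2)\log_q t_n\bigr)}{\log_q t_n}.
\end{equation*}
Using $\lceil k_n/t_n\rceil \le k_n/t_n + 1$, multiply both sides by $\log_q(t_n)/(k_n t_n)$. The dominant behaviour is
\begin{equation*}
\frac{\fb(k_n,t_n,q)\log_q(t_n)}{k_n t_n}\le\left(\frac{1}{t_n}+\frac{1}{k_n}\right)\cdot\frac{2(t_n+q-2+\log_q t_n)(t_n-1+(q-2)\log_q t_n)}{t_n}.
\end{equation*}
Since $q$ is fixed while $t_n\to\infty$, we have $t_n+q-2+\log_q t_n = t_n(1+o(1))$ and $t_n-1+(q-2)\log_q t_n = t_n(1+o(1))$, so the second factor is $2 t_n(1+o(1))$, and multiplying by $(1/t_n + 1/k_n)\le (1/t_n)(1+o(1))$ (again using $k_n\ge t_n$) gives a quantity that is $2(1+o(1))$. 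Hence the $\limsup$ is at most $2 < 4$.

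\medskip

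\textbf{Main obstacle.} The only genuinely delicate point is bookkeeping the lower-order terms in the upper bound: one must be careful that the factor $\lceil k_n/t_n\rceil$, which can be as large as $\Theta(k_n/t_n)$, is exactly compensated by the $1/k_n$ arising from dividing by $k_n t_n$, and that the residual constant is $2$ rather than something larger. The generous slack between $2$ and $4$ in the statement means no sharp estimate is needed — crude bounds such as $q-2+\log_q t_n\le t_n$ and $(q-2)\log_q t_n\le t_n$ for $n$ large suffice — so the argument is essentially a limit computation once the two quoted lemmas are in place. I would close by remarking that the constant $4$ is not claimed to be optimal and that pinning down $\lim \fb(k_n,t_n,q)\log_q(t_n)/(k_nt_n)$ (which Proposition~\ref{proposition:limfpq} and Open Problem~\ref{op:projectivebatch} suggest might be $1$) remains open.
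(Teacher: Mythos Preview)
Your approach is exactly the paper's: Lemma~\ref{lemma:lowbound} for the lower bound and Lemma~\ref{lemma:upperbound} for the upper bound, followed by a limit computation. The lower-bound argument and the middle inequality are fine.

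There is one slip in the upper-bound computation. From $k_n\ge t_n$ you only get
\[
\frac{1}{t_n}+\frac{1}{k_n}=\frac{1}{t_n}\Bigl(1+\frac{t_n}{k_n}\Bigr)\le \frac{2}{t_n},
\]
not $(1/t_n)(1+o(1))$; indeed if $k_n=t_n$ then $t_n/k_n=1$ for all $n$ and nothing goes to zero. With the correct bound $2/t_n$, your displayed product becomes $4(1+o(1))$, giving $\limsup\le 4$, which is precisely what the theorem asserts and what the paper obtains. Your stronger conclusion $\limsup\le 2$ is not valid under the sole hypothesis $k_n\ge t_n$; as the paper notes after Corollary~\ref{corollary:limk^2=0}, the constant $2$ is available only when one can avoid the passage from $\lceil k_n/t_n\rceil$ to $2k_n/t_n$, e.g.\ when $k_n=t_n$. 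Fix that one line and your proof coincides with the paper's.
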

\begin{proof}
 The left-most inequality follows from Lemma \ref{lemma:lowbound}. By Lemma \ref{lemma:upperbound} we have
 \begin{equation*}
     \begin{split}
         \fb(k_n,t_n,q)&< 2\left(\frac{k_n}{t_n}+1\right)\frac{(t_n+q-2+\log_q(t_n))(t_n-1+(q-2)\log_q(t_n))}{\log_q(t_n)}\\
         &\leq 2\left(\frac{2k_n}{t_n}\right)\frac{(t_n+q-2+\log_q(t_n))(t_n-1+(q-2)\log_q(t_n))}{\log_q(t_n)}.
     \end{split}
 \end{equation*}
Therefore, 
 \begin{multline*}
\limsup_{n\to\infty}\frac{\fb(k_n,t_n,q)
\log_q(t_n)}{k_nt_n} \\ \leq \limsup_{n\to\infty} 2\left(\frac{2k_n}{t_n}\right)\frac{(t_n+q-2+\log_q(t_n))(t_n-1+(q-2)\log_q(t_n))}{k_nt_n}\\
   \leq\limsup_{n\to\infty} 4\frac{(t_n+q-2+\log_q(t_n))(t_n-1+(q-2)\log_q(t_n))}{t_n^2}=4,
 \end{multline*}
 concluding the proof.
\end{proof}

The upper bound of the previous theorem can be improved if one has more information about the ratio
between $(k_n)_{n\in\N}$ and $(t_n)_{n\in\N}$. For instance, when $k_n=t_n=n$ for all $n\in\N$, we obtain the following.

\begin{corollary}\label{corollary:limk^2=0}
    We have
    \begin{equation*}
        1\leq\liminf_{n\to\infty}\frac{\fp(n,n,q)\log_q(n)}{n^2}\leq\limsup_{n\to\infty}\frac{\fb(n,n,q)\log_q(n)}{n^2}\leq2.
    \end{equation*}
\end{corollary}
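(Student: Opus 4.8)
\textbf{Proof proposal for Corollary~\ref{corollary:limk^2=0}.}

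The plan is to re-run the estimate of Theorem~\ref{theorem:main} in the special case $k_n=t_n=n$, where the ceiling factor $\lceil h/s\rceil$ in Lemma~\ref{lemma:upperbound} collapses. First I would observe that the left-most inequality $1\le\liminf_{n\to\infty}\fp(n,n,q)\log_q(n)/n^2$ is immediate from Lemma~\ref{lemma:lowbound}: setting $k=t=n$ there gives $\fp(n,n,q)>n^2/(\log_q(en)+1)-n$, so $\fp(n,n,q)\log_q(n)/n^2 > \log_q(n)/(\log_q(en)+1) - \log_q(n)/n$, and both $\log_q(n)/(\log_q(en)+1)\to 1$ and $\log_q(n)/n\to 0$. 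The middle inequality $\fp(n,n,q)\le\fb(n,n,q)$ is Lemma~\ref{lemma:basicbound}. So the only new content is the upper bound $\limsup_{n\to\infty}\fb(n,n,q)\log_q(n)/n^2\le 2$.

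For that, I would invoke Lemma~\ref{lemma:upperbound} with $h=s=n$, so that $\lceil h/s\rceil=1$, giving
\begin{equation*}
\fb(n,n,q)\le\frac{2(n+q-2+\log_q(n))(n-1+(q-2)\log_q(n))}{\log_q(n)}.
\end{equation*}
Multiplying by $\log_q(n)/n^2$ yields
\begin{equation*}
\frac{\fb(n,n,q)\log_q(n)}{n^2}\le\frac{2(n+q-2+\log_q(n))(n-1+(q-2)\log_q(n))}{n^2}.
\end{equation*}
As $n\to\infty$ with $q$ fixed, the numerator is $2n^2(1+o(1))$ since $q-2$, $\log_q(n)$, and $(q-2)\log_q(n)$ are all $o(n)$; hence the right-hand side tends to $2$, and taking $\limsup$ gives the claim.

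I do not expect any real obstacle here: the corollary is a direct specialization of Theorem~\ref{theorem:main} in which the wasteful bound $\lceil k_n/t_n\rceil\le 2k_n/t_n$ used in the general proof becomes the exact equality $\lceil n/n\rceil=1$, which is precisely what converts the constant $4$ into $2$. The one point worth stating carefully is that Lemma~\ref{lemma:upperbound} is only meaningful once $s=\min\{k,t\}$ is large enough that $\log_q(s)>0$ (indeed $s\ge q$), so the asymptotic statement is unaffected; one simply restricts to $n$ large. No additional construction or combinatorial input beyond Lemmas~\ref{lemma:lowbound} and~\ref{lemma:upperbound} is needed.
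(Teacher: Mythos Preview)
Your proposal is correct and matches the paper's own argument essentially verbatim: the paper states that the corollary follows by re-running the proof of Theorem~\ref{theorem:main} with the approximation step $\lceil k_n/t_n\rceil\le k_n/t_n+1\le 2k_n/t_n$ replaced by the exact equality $\lceil n/n\rceil=1$, which is precisely what you do. Your handling of the lower bound via Lemma~\ref{lemma:lowbound} and the middle inequality via $\fp\le\fb$ is also in line with the paper.
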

The upper bound of Corollary \ref{corollary:limk^2=0} follows
from the fact we can skip an 
approximation in the proof of Theorem~\ref{theorem:main}. More precisely, in the first line of the proof of Theorem \ref{theorem:main}, we have ``$2$'' instead of ``$2(k_n/t_n)+1$''.

We conclude the section with a generalization of Equation \eqref{equation:limit} to arbitrary divergent sequences.
\begin{proposition}
    Let $(k_n)_{n\in\N}$ and $(t_n)_{n\in\N}$ be divergent sequences of natural numbers. We have
    \begin{equation*}
        \lim_{n\to\infty}\frac{\fp(k_n,t_n,q)}{k_nt_n}=\lim_{n\to\infty}\frac{\fb(k_n,t_n,q)}{k_nt_n}=0.
    \end{equation*}
\end{proposition}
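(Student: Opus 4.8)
The plan is to reduce the two‑variable statement to the diagonal case, which has already been handled (up to logarithmic factors) in Theorem~\ref{theorem:main} and Corollary~\ref{corollary:limk^2=0}, and then to squeeze out the extra logarithm. Since $\fp\le\fb$ by Lemma~\ref{lemma:basicbound}, it suffices to prove $\fb(k_n,t_n,q)/(k_nt_n)\to 0$. Without loss of generality assume $k_n\ge t_n$ (otherwise swap the roles of $k$ and $t$, using subadditivity symmetrically, noting the claim is symmetric in the two indices). Write $h_n=\max\{k_n,t_n\}=k_n$ and $s_n=\min\{k_n,t_n\}=t_n$, which diverges by hypothesis.

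First I would invoke Lemma~\ref{lemma:upperbound}, which gives
\begin{equation*}
\fb(k_n,t_n,q)\le\left\lceil\frac{h_n}{s_n}\right\rceil\cdot\frac{2\bigl(s_n+q-2+\log_q(s_n)\bigr)\bigl(s_n-1+(q-2)\log_q(s_n)\bigr)}{\log_q(s_n)}.
\end{equation*}
Using $\lceil h_n/s_n\rceil\le 2h_n/s_n$ for $h_n\ge s_n\ge 1$, the right‑hand side is at most
\begin{equation*}
\frac{4h_n}{s_n}\cdot\frac{\bigl(s_n+q-2+\log_q(s_n)\bigr)\bigl(s_n-1+(q-2)\log_q(s_n)\bigr)}{\log_q(s_n)}.
\end{equation*}
Dividing by $k_nt_n=h_ns_n$ and simplifying, one gets
\begin{equation*}
\frac{\fb(k_n,t_n,q)}{k_nt_n}\le\frac{4}{s_n^2}\cdot\frac{\bigl(s_n+q-2+\log_q(s_n)\bigr)\bigl(s_n-1+(q-2)\log_q(s_n)\bigr)}{\log_q(s_n)}.
\end{equation*}
Now the numerator grows like $s_n^2$ while the whole expression carries an extra factor $1/\log_q(s_n)$ in the denominator, so as $s_n\to\infty$ the bound tends to $0$. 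Hence $\limsup_{n\to\infty}\fb(k_n,t_n,q)/(k_nt_n)=0$, and since this ratio is nonnegative, the limit is $0$; combined with $0\le\fp(k_n,t_n,q)/(k_nt_n)\le\fb(k_n,t_n,q)/(k_nt_n)$ the proposition follows.

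The only mild subtlety is that Lemma~\ref{lemma:upperbound} as stated requires $s_n=\min\{k_n,t_n\}\ge 2$ (so that $\log_q(s_n)>0$ and the ceiling/quotient make sense); this is automatic for all large $n$ because $(k_n)$ and $(t_n)$ diverge, and the finitely many initial terms do not affect the limit. I do not expect a genuine obstacle here: the work has essentially all been done in Lemma~\ref{lemma:upperbound}, and the present statement is just the observation that the bound there, divided by $k_nt_n$, is $O(1/\log_q s_n)$. The one point to be careful about is handling the case $t_n\ge k_n$ symmetrically, which is legitimate since $\fb(k,t,q)$ and the quantity $k_nt_n$ are both symmetric under exchanging the roles of the two indices in the statement — indeed $\fb$ is subadditive in each variable and Lemma~\ref{lemma:upperbound} is phrased in terms of $\max$ and $\min$ precisely to cover both cases at once.
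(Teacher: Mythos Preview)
Your argument is correct and follows essentially the same path as the paper: both rely on Lemma~\ref{lemma:upperbound} (the paper routes through Corollary~\ref{corollary:limk^2=0}, you invoke the lemma directly) and both observe that after dividing by $k_nt_n=h_ns_n$ one is left with a quantity of order $1/\log_q(s_n)$. One correction, though harmless for the proof: your closing claim that ``$\fb(k,t,q)$ [is] symmetric under exchanging the roles of the two indices'' is false---the paper explicitly exhibits $k,t$ with $\fb(k,t,q)<\fb(t,k,q)$---but you never actually use this, since Lemma~\ref{lemma:upperbound} is already stated in terms of $\max\{k,t\}$ and $\min\{k,t\}$; simply drop the WLOG and the symmetry remark.
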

\begin{proof}
    Let $h_n=\max\{k_n,t_n\}$ and $s_n=\min\{k_n,t_n\}$ for all $n$. Since $\fb(k,t,q)$ is subadditive in both variables, we have 
    \begin{equation*}
        \lim_{n\to\infty}\frac{\fb(k_n,t_n,q)}{k_nt_n}\leq \lim_{n\to\infty}\left(\frac{h_n}{s_n}+1\right)\frac{\fb(s_n,s_n,q)}{h_ns_n}=\lim_{n\to\infty}\left(\frac{h_n+s_n}{h_n}\right)\frac{\fb(s_n,s_n,q)}{s_n^2}=0,
    \end{equation*}
    where the first inequality follows from the fact that $k_nt_n=h_ns_n$, and the last equality follows from Corollary \ref{corollary:limk^2=0}.
\end{proof}
\bibliographystyle{plain}	
\bibliography{bibliography}
\end{document}